\newcommand\etal{~\textit{et al.}\xspace}
\newcommand\setcomp[2]{\left\{{#1}\ \left|\ {#2}\right.\right\}}
\newcommand\set[1]{\left\{#1\right\}}
\newcommand\tup[1]{\brac{#1}}
\newcommand\nats{\mathbb{N}}
\newcommand\poss{\nats_{+}}
\newcommand\brac[1]{\left({#1}\right)}
\newcommand\ap[2]{{#1}\mathord{\brac{#2}}}
\newcommand\idxi{i}
\newcommand\idxj{j}
\newcommand\idxe{e}
\newcommand\num{m} 
\newcommand\varnum{n} 
\newcommand\vecnums{\vec{v}}
\newcommand\varvecnums{\vec{u}}
\newcommand\eword{\varepsilon}
\newcommand\lang{\mathcal{L}}
\newcommand\cat{\cdot}
\newcommand\word{w}
\newcommand\bit{b}
\newcommand\zerovec{\onevec{0}}
\newcommand\onevec[1]{\vec{#1}}
\newcommand\sizeof[1]{\left|{#1}\right|}
\newcommand\minimumof[1]{\ap{min}{#1}}
\newcommand\acker{\mathbf{F}_{\omega}}
\newcommand\fmla{\varphi}
\newcommand\rega{\mathcal{A}}
\newcommand\parikh[1]{\ap{\textsc{Parikh}}{#1}}
\newcommand\countof[2]{\left|{#1}\right|_{#2}}
\newcommand\nfastates{\mathcal{Q}}
\newcommand\nfatrans{\Delta}
\newcommand\nfastate{q}
\newcommand\nfatran[1]{\xrightarrow{#1}}
\newcommand\nfafins{\mathcal{F}}
\newcommand\tree{T}
\newcommand\cha{a}
\newcommand\chb{b}
\newcommand\chc{c}
\newcommand\chd{d}
\newcommand\alphabet{\Sigma}
\newcommand\rank{\mathtt{rank}}
\newcommand\tdom{D}
\newcommand\tnode{n}
\newcommand\tlab{\lambda}
\newcommand\trees[1]{\ap{\text{\sc Trees}}{#1}}
\newcommand\tap[2]{\ap{#1}{#2}}
\newcommand\emptree{\mathcal{E}}
\newcommand\context{C}
\newcommand\cvar{x}
\newcommand\csub[2]{{#1}\mathord{\left[{#2}\right]}}
\newcommand{\ta}{\mathcal{T}}
\newcommand\tastates{\mathcal{Q}}
\newcommand\tarules{\Delta}
\newcommand\tafinals{\mathcal{F}}
\newcommand\tarule[3]{\ifthenelse{\equal{#1}{}}{}{\tup{#1}} \xrightarrow{#2} {#3}}
\newcommand\tast{q}
\newcommand\tarun{\rho}
\newcommand\tasingle[1]{\ta_{#1}}
\newcommand\pn{\mathcal{N}}
\newcommand\counters{X}
\newcommand\ctr{x}
\newcommand\incr[1]{\ap{incr}{#1}}
\newcommand\decr[1]{\ap{decr}{#1}}
\newcommand\reset[1]{\ap{res}{#1}}
\newcommand\pnstates{\nfastates}
\newcommand\cops[1]{{\textsc{Op}_{#1}}}
\newcommand\pnmarking{\pi}
\newcommand\pnrules{\Delta}
\newcommand\pnstate{\nfastate}
\newcommand\pnops{\widetilde{o}}
\newcommand\pnrule[3]{{#1} \pnlabtran{#2} {#3}}
\newcommand\pntran{\longrightarrow}
\newcommand\pnlabtran[1]{\xrightarrow[#1]{}}
\newcommand\pnrun{\longrightarrow^\ast}
\newcommand\pncovers{\leq}
\newcommand\pnzeromarking{\pnmarking_0}
\newcommand\gtrs{G}
\newcommand\controls{\mathcal{P}}
\newcommand\rules{\mathcal{R}}
\newcommand\control{p}
\newcommand\rrule[4]{\tup{{#1}, {#2}} \longrightarrow \tup{{#3}, {#4}}}
\newcommand\orrule[5]{\tup{{#1}, {#2}} \xrightarrow{#3} \tup{{#4}, {#5}}}
\newcommand\rruler{r}
\newcommand\config[2]{\tup{{#1}, {#2}}}
\newcommand\tran{\longrightarrow}
\newcommand\run{\tran^\ast}
\newcommand\lifespan{k}
\newcommand\numrules{\ell}
\newcommand\oalphabet{\Gamma}
\newcommand\osym{\gamma}
\newcommand\otran[1]{\xrightarrow{#1}}
\newcommand\empsym{\varepsilon}
\newcommand\orun[1]{\xrightarrow{#1}}
\newcommand\runsym{\rho}
\newcommand\treeout{\tree^{\text{out}}}
\newcommand\treein{\tree^{\text{in}}}
\newcommand\hasid[1]{\ap{has}{#1}}
\newcommand\spawnnode{\ast}
\newcommand\splitnode{\bullet} 
\newcommand\errorcontrol{\control_{E}}
\newcommand\spawncontrol[3]{\tup{{#1},\tup{{#2},{#3}}}}
\newcommand\spawncase[4]{\tup{{#1},{#2},{#3},{#4}}_1}
\newcommand\spawncasenext[4]{\tup{{#1},{#2},{#3},{#4}}_2}
\newcommand\threadstate[2]{\tup{{#1},{#2}}}
\newcommand\controlinit{\control_{src}} 
\newcommand\controldest{\control_{snk}}
\newcommand\treeinit{\tree_{src}}
\newcommand\istiseq{\alpha}
\newcommand\istitrees{\vec{\istigen}}
\newcommand\istigen{\eta}
\newcommand\exttree{\bigtriangleup}
\newcommand\istigtrs{\gtrs_I}
\newcommand\istigtrscontrols{\controls_I}
\newcommand\istigtrsalphabet{\alphabet_I}
\newcommand\istigtrsrules{\rules_I}
\newcommand\istigtrsinitlab{\Diamond}
\newcommand\istigtrsoalphabet{\oalphabet_I}
\newcommand\istirega{\rega_I} 
\newcommand\intadd[2]{{#1}\ \mathop{\mathord{+}\mathord{+}}\ {#2}}
\newcommand\intres[1]{\ap{\textsc{Res}}{#1}}
\newcommand\istictr[2]{\ctr^{#1}_{#2}} 
\newcommand\pnreach[1]{\pn_{#1}}
\newcommand\istifulla[2]{\rega^{#1}_{#2}}
\newcommand\istiinitst[2]{\nfastate^{#1}_{#2}}
\newcommand\istifinst[2]{f^{#1}_{#2}}
\newcommand\istitrans[2]{\nfatrans^{#1}_{#2}}
\newcommand\ististates[2]{\nfastates^{#1}_{#2}}
\newcommand\pnreachctrs[1]{\counters_{\gtrs}}
\newcommand\pnreachstates[1]{\pnstates_{\gtrs}}
\newcommand\pnreachrules[1]{\pnrules_{\gtrs}}
\newcommand\pnsumseqs{\mathcal{S}}
\newcommand\pnshiftst[2]{\lhd^{#1}_{#2}}
\newcommand\pnsumseq{\beta} 
\newcommand\pnrulesadd{\pnrules_{\textsc{Add}}} 
\newcommand\pnrulesres{\pnrules_{\textsc{Res}}}
\newcommand\pninitmarking{\pnmarking_{src}}
\newcommand\runlen{h}
\newcommand\istid[1]{id_{#1}}
\newcommand\istrun[1]{run_{#1}}
\newcommand\istint[1]{int_{#1}}
\newcommand\tnodeparent{\tnode_p}
\newcommand\idxjnew{\idxj_{\ast}}
\newcommand\submarking[2]{{#1} - {#2}} 
\newcommand\treeseq[1]{\vec{\tree}_{#1}}
\newcommand\pntree[3]{\tree^{#3}_{{#1}, {#2}}}
\newcommand\pntreealt[3]{Y^{#3}_{{#1}, {#2}}}
\newcommand\gtrscover[1]{\gtrs_{#1}} 
\newcommand\gtrsreach[1]{\gtrs'_{#1}} 
\newcommand\killst[2]{\times^{#1}_{#2}}
\newcommand\deadnode{\rightmoon}
\newcommand\targetctr[1]{\overline{#1}}
\newcommand\varcontroldest{\control'_{snk}}
\newcommand\pdsalphabet{\alphabet}
\newcommand\numstacks{z} 
\newcommand\scopebound{k}
\newcommand\pds{\mathbb{P}}
\newcommand\pdscontrols{\controls}
\newcommand\pdsrules[1]{\rules_{#1}}
\newcommand\pdspushrules[1]{\rules^{push}_{#1}}
\newcommand\pdsintrules[1]{\rules^{int}_{#1}}
\newcommand\pdspoprules[1]{\rules^{pop}_{#1}}
\newcommand\pdscontrol{\control}
\newcommand\pdscontrolinit{\controlinit}
\newcommand\pdscontroldest{\controldest}
\newcommand\sbot{\bot}
\newcommand\pdschainit{\sbot}
\newcommand\pdsitran[1]{\pdstran_{#1}}
\newcommand\pdstran{\longrightarrow}
\newcommand\pdsirun[1]{\pdsitran{#1}^\ast}
\newcommand\pdsround{\longrightarrow_{R}}
\newcommand\pdsstack{\word}
\newcommand\pdspushrule[3]{\tup{{#1}, {#2}, {#3}}}
\newcommand\pdsintrule[2]{\tup{{#1}, {#2}}}
\newcommand\pdspoprule[3]{\tup{{#1}, {#2}, {#3}}}
\newcommand\pdsconfsym{\sigma}
\newcommand\gtrsscoped[1]{\gtrs_{#1}}
\newcommand\gtrsscopedcontrols{\controls_S}
\newcommand\gtrsscopedalphabet{\alphabet_S}
\newcommand\gtrsscopedrules{\rules_S}
\newcommand\stoppednode[1]{\Box_{#1}}
\newcommand\gtrsscopedrulesbeginend{\rules_{src/snk}}
\newcommand\gtrsscopedrulessim{\rules_{PDS}}
\newcommand\gtrsscopedrulesswitch{\rules_{Switch}}
\newtheorem{theorem}{Theorem}[section]
\newtheorem{definition}{Definition}[section]
\newtheorem{lemma}{Lemma}[section] 
\newtheorem{property}{Property}[section] 
\newenvironment{namedtheorem}[2]{%
    \expandafter\gdef\csname reftheorem#1\endcsname{%
        Theorem~\ref{#1} (#2)%
    }%
    \begin{theorem}[{#2}] \label{#1}%
}{%
    \end{theorem}%
}
\newcommand\reftheorem[1]{\expandafter\csname reftheorem#1\endcsname}
\newenvironment{namedlemma}[2]{%
    \expandafter\gdef\csname reflemma#1\endcsname{%
        Lemma~\ref{#1} (#2)%
    }%
    \begin{lemma}[{#2}] \label{#1}%
}{%
    \end{lemma}%
}
\newcommand\reflemma[1]{\expandafter\csname reflemma#1\endcsname}
\newcommand\refproperty[1]{\expandafter\csname refproperty#1\endcsname}
\newenvironment{nameddefinition}[2]{%
    \expandafter\gdef\csname refdefinition#1\endcsname{%
        Definition~\ref{#1} (#2)%
    }%
    \begin{definition}[{#2}] \label{#1}%
}{%
    \end{definition}%
}
\newcommand\refdefinition[1]{\expandafter\csname refdefinition#1\endcsname}
\newif\ifdraft\drafttrue
\newcommand\todo[1]{{\color{purple} [\textbf{To do:} #1]}}
\newcommand\al[1]{{\color{blue} [#1 - \textbf{Olivier}]}}
\newcommand\mh[1]{{\color{orange}
[#1 - \textbf{Matt}]}}
\newcommand\todo[1]{}
\newcommand\al[1]{}
\newcommand\mh[1]{}
\title{Senescent Ground Tree Rewrite Systems} 
\titlerunning{Senescent Ground Tree Rewrite Systems} 
\author{M. Hague}
\institute{
  Royal Holloway University of London \\
  \email{matthew.hague@rhul.ac.uk} 
}
\authorrunning{M. Hague}
\begin{document}

\maketitle

\begin{abstract}
    
Ground Tree Rewrite Systems with State are known to have an undecidable control
state reachability problem.  Taking inspiration from the recent introduction of
scope-bounded multi-stack pushdown systems, we define \emph{Senescent Ground
Tree Rewrite Systems}.  These are a restriction of ground tree rewrite systems
with state such that nodes of the tree may no longer be rewritten after having
witnessed an \textit{a priori} fixed number of control state changes.  As well
as generalising scope-bounded multi-stack pushdown systems, we show --- via
reductions to and from reset Petri-nets --- that these systems have an
Ackermann-complete control state reachability problem.  However, reachability of
a regular set of trees remains undecidable.

\end{abstract}

\tableofcontents

\section{Introduction}

The study of reachability problems for infinite state systems, such as Turing
machines, has often used strings to represent system states.  In seminal work,
B\"uchi showed the decidability of reachability for pushdown systems~\cite{B64}.
A state (or configuration) of a pushdown system is represented by a control
state (from a finite set) and a stack over a given finite alphabet.  In this
case, a stack can be considered a word and one stack is obtained from another by
replacing a prefix $\word$ of the stack with another word $\word'$.  In fact,
the reachability problem is in P-time~\cite{BEM97,FWW97}.  

Pushdown systems allow the control-flow of first-order programs to be accurately
modelled~\cite{JM77} and have been well-studied in the automata-theoretic
approach to software model checking (E.g.~\cite{BEM97,FWW97,EKS03,RSJM05}).
Many scalable model checkers for pushdown systems have been implemented, and
these tools (e.g.  Bebop~\cite{BR00} and Moped~\cite{S02b}) are an essential
back-end component of celebrated model checkers such as SLAM~\cite{BLR11}.

A natural and well-studied generalisation of these ideas is to use a tree
representation of system states.  This approach was first considered by
Brainerd, who, generalising B\"uchi's result, showed decidability of
reachability for Ground Tree Rewrite Systems\footnote{Also known as Ground Term
Rewrite Systems} (GTRS)~\cite{B69}.  In these systems, each transition replaces
a complete subtree of the state with another.  Thus, these systems generalise
pushdown systems and allow the analysis of tree manipulating programs.  As in
the pushdown case, reachability is solvable in P-time~\cite{L03}.

Unfortunately, these tree generalisations of pushdown automata do not allow a
control state in their configurations.  Instead, the pushdown system's control
state must be encoded as the leaf of the tree.  This is for good reason: when a
control state external to the tree is permitted, reachability is immediately
undecidable.  This is because one can easily simulate a two-stack pushdown
system with a tree that contains a branch for each stack.  It is well known that
a two-stack pushdown system can simulate a Turing machine, and thus reachability
is undecidable.  

However, due to the increasing importance of concurrent systems --- where each
thread requires its own stack --- there has been renewed interest in identifying
classes of multi-stack pushdown systems for which reachability becomes
decidable.  A seminal notion in this regard is that of \emph{context-bounding}.
This underapproximates a concurrent system by bounding the number of context
switches that may occur~\cite{QR05}.  It is based on the observation that most
real-world bugs require only a small number of thread interactions~\cite{Q08}.
By considering only context-bounded runs of a multi-stack pushdown system, the
reachability problem becomes NP-complete.

In recent work~\cite{T10}, Lin (formerly known as To) observed that a GTRS
modelling a context-bounded multi-stack pushdown system has an underlying
control state graph that is $1$-weak.  A $1$-weak automaton is an automaton
whose control state graph contains no cycles except for self-loops~\cite{M00}.
Intuitively, the control state is only used to manage context-switches, and
hence a $1$-weak control state graph suffices.  Moreover, the reachability
problem for GTRS with control states is decidable with such a control state
graph~\cite{T10}.  Indeed, the problem remains NP-complete~\cite{L12}.

As well as the notion of context-bounding there are many more relaxed
restrictions on multi-stack pushdown behaviours for which reachability also
remains decidable.  Of particular interest is
\emph{scope-bounding}~\cite{lTN11}.  In this setting, we fix a bound $\lifespan$
and insist that an item may only be removed from the stack if it was pushed at
most $\lifespan$ context switches earlier.  Thus, an arbitrary number of context
switches may occur, and the underlying control state graph is no longer
$1$-weak.  In this case, by relaxing the restriction on the control state
behaviours, the complexity of reachability is increased from NP to PSPACE.

In this work we study how to generalise scope-bounding to GTRS with control
states.  We obtain a model of computation reminiscent of a tree growing in
nature: it begins with a green shoot, which may grow and change.  As this shoot
ages, it becomes hardened and forms the trunk of the tree.  From this trunk, new
green shoots grow, and --- via leaves that may fall and grow again --- remain
changeable.  If a shoot lives long enough,  it hardens and forms a new (fixed)
branch of the tree.

Thus, we define \emph{senescent ground tree rewrite systems}.  The passage of
time is marked by changes to the control state.  If a node remains unchanged for
a fixed number $\lifespan$ of changes, it becomes unchangeable -- that is, part
of a hardened branch of the tree.  

These systems naturally generalise scope-bounded pushdown systems while also
allowing for additional features such as dynamic thread creation.  To our
knowledge, they also provide the most precise under-approximation of GTRS with
control states currently known to have a decidable control state reachability
problem, and thus may be used in the analysis of tree-manipulating programs.

We show, via inter-reductions with reset Petri-nets, that the control state
reachability problem for senescent GTRS is Ackermann-complete while the
reachability of a regular set of trees is undecidable.  This increase in
modelling power is in sharp contrast to the analogous restrictions for pushdown
systems, where the increase is much more modest.  That is, from The Faithful
Gardner~\cite{E95}: 
\begin{quote}
    \textit{To be poor and be without trees, is to be the most starved human
    being in the world. To be poor and have trees, is to be completely rich in
    ways that money can never buy}.
\end{quote}

\section{Related Work}

Abdulla\etal~\cite{AJMdO02} define a regular model checking algorithm for tree
automatic structures.  That is, the transition relation is given by a regular
tree transducer.  They give a reachability algorithm that is complete when there
is a fixed bound $\lifespan$ such that, during any run of the system, a node is
changed at most $\lifespan$ times.  This has flavours of the systems we define
here.  However, in our model, a node may be changed an arbitrary number of
times.  In fact, we impose a limit on the extent to which a node may remain
\emph{unchanged}.  It is not clear how the two models compare, and such a
comparison is an interesting avenue of future work.

Atig\etal~\cite{ABQ11} consider a model of multi-stack pushdown systems with
dynamic thread creation.  Decidability of the reachability problem is obtained
by allowing each thread to be active at most $\lifespan$ times during a run.  As
we show in Section~\ref{sec:threads-are-branches}, we can consider each thread
to be a branch of the tree, and context switches correspond to control state
changes.  However, while in Atig\etal's model a thread may be active for
\emph{any} $\lifespan$ context switches (as long as it's inactive for the
others), in our model a thread will begin to suffer restrictions after the
$\lifespan$ next context switches (though may be active for an arbitrary
number).  Perhaps counter-intuitively, our restriction actually increases
expressivity: Atig\etal show inter-reducibility between reachability in their
model and Petri-net coverability, while in our model the more severe restriction
allows us to inter-reduce with coverability of \emph{reset} Petri-nets.

The scope-bounded restriction has recently been relaxed for multi-stack pushdown
systems by La Torre and Napoli~\cite{lTN12}.  In their setting, a character that
is popped from a particular stack may must have been pushed within $\lifespan$
active contexts of that stack.  In particular, this allows for an unbounded
number of context switches to occur between a push and a pop, as long as the
stack involved is only active in up to $\lifespan$ of those contexts.  However,
it is unclear what such a relaxation would mean in the context of senescent
GTRS.

GTRS have been extensively studied as generators of graphs (e.g.~\cite{L03}) and
and are known to have decidable verification problems for repeated
reachability~\cite{L03}, first-order logic~\cite{DT90},
confluence~\cite{DHLT90}, \&c.  However, LTL and CTL model checking are
undecidable~\cite{BKRS09,GL11,L03}.  They also have intimate connections
(e.g.~\cite{L03,GL11}) with the Process Rewrite Systems Hierarchy~\cite{M98}.  

There are several differing restrictions to multi-stack pushdown systems with
decidable verification problems.  Amongst these are phase-bounded~\cite{lTMP07}
and ordered~\cite{BCCC96} (corrected in~\cite{ABH08}) pushdown systems.  There
are also generic frameworks --- that bound the tree-~\cite{MP11} or
split-width~\cite{CGK12} of the interactions between communication and storage
--- that give decidability for all communication architectures that can be
defined within them.

\section{Preliminaries}

We write $\nats$ to denote the set of natural numbers and $\poss$ to denote the
set of strictly positive natural numbers.  Given a word language $\lang
\subseteq \alphabet^\ast$ for some alphabet $\alphabet$ and a word $\word \in
\alphabet^\ast$, let $\word \cat \lang = \setcomp{\word \word'}{\word' \in
\lang}$.  For a given set $S$, let $\sizeof{S}$ denote the cardinality of the
set.

In the cases when the dimension is clear, we will write $\zerovec$ to denote the
tuple $\tup{0, \ldots, 0}$ and $\onevec{\idxi}$ to denote the tuple
$\tup{\varnum_1, \ldots, \varnum_\num}$ where all $\varnum_\idxj= 0$ for all
$\idxj \neq \idxi$ and $\varnum_\idxi = 1$.  

We will denote by $\acker$ both the Ackermann function and the class of problems
solvable in $\acker$-time.  Following Schmitz and Schnoebelen~\cite{SS12}, we
have the class $\acker$ of problems computable in Ackermannian time.  This class
is closed under primitive-recursive reductions.

\subsection{Trees and Automata}

\subsubsection{Regular Automata and Parikh Images}

A \emph{regular automaton} is a tuple $\rega = \tup{\nfastates, \oalphabet,
\nfatrans, \nfastate_0, \nfafins}$ where $\nfastates$ is a finite set of states,
$\oalphabet$ is a finite output alphabet, $\nfatrans \subseteq \nfastates \times
\oalphabet \times \nfastates$ is a transition relation, $\nfastate_0 \in
\nfastates$ is an initial state and $\nfafins \subseteq \nfastates$ is a set of
final states.

We write $\nfastate \nfatran{\cha} \nfastate'$ to denote a transition
$\tup{\nfastate, \cha, \nfastate'} \in \nfatrans$.  A run from $\nfastate_1 \in
\nfastates$ over a word $\word = \cha_1 \ldots \cha_\runlen$ is a sequence 
\[
    \nfastate_1 \nfatran{\cha_1} \cdots \nfatran{\cha_\runlen}
    \nfastate_{\runlen+1} \ .
\]
A run is \emph{accepting} whenever $\nfastate_{\runlen+1} \in \nfafins$.  The
language $\ap{\lang}{\rega}$ of $\rega$ is the set of words $\word \in
\oalphabet^\ast$ such that there is an accepting run of $\rega$ over $\word$
from $\nfastate_0$.

For a word $\word \in \oalphabet^\ast$ for some alphabet $\oalphabet$, we define
$\countof{\word}{\osym}$ to be the number of occurrences of $\osym$ in $\word$.
Given a fixed linear ordering $\osym_1, \ldots, \osym_\num$ over $\oalphabet =
\set{\osym_1, \ldots, \osym_\num}$ and a word $\word \in \oalphabet^\ast$, we
define $\parikh{\word} = \tup{\countof{\word}{\osym_1}, \ldots,
\countof{\word}{\osym_\num}}$.  Given a language $\lang \subseteq
\oalphabet^\ast$, we define $\parikh{\lang} = \setcomp{\parikh{\word}}{\word \in
\lang}$.  Finally, given a regular automaton $\rega$, we define $\parikh{\rega}
= \parikh{\ap{\lang}{\rega}}$.

\subsubsection{Trees}

A \emph{ranked alphabet} is a finite set of characters $\alphabet$ together with
a rank function $\rank : \alphabet \mapsto \nats$.  A \emph{tree domain} $\tdom
\subset \poss^\ast$ is a nonempty finite subset of $\poss^\ast$ that is both
\emph{prefix-closed} and \emph{younger-sibling-closed}.  That is, if $\tnode
\idxi \in \tdom$, then we also have $\tnode \in \tdom$ and, for all $1 \leq
\idxj \leq \idxi$, $\tnode \idxj \in \tdom$ (respectively).  A \emph{tree} over
a ranked alphabet $\alphabet$ is a pair $\tree = \tup{\tdom, \tlab}$ where
$\tdom$ is a tree domain and $\tlab : \tdom \mapsto \alphabet$ such that for all
$\tnode \in \tdom$, if $\ap{\tlab}{\tnode} = \cha$ and $\ap{\rank}{\cha} = \num$
then $\tnode$ has exactly $\num$ children (i.e. $\tnode \num \in \tdom$ and
$\tnode (\num + 1) \notin \tdom$).  Let $\trees{\alphabet}$ denote the set of
trees over $\alphabet$.

Given a node $\tnode$ and trees $\tree_1, \ldots, \tree_\num$, we will often
write $\tap{\tnode}{\tree_1, \ldots, \tree_\num}$ to denote the tree with root
node $\tnode$ and left-to-right child sub-trees $\tree_1, \ldots, \tree_\num$.
When $\tnode$ is labelled $\cha$, we may also write $\tap{\cha}{\tree_1,
\ldots, \tree_\num}$ to denote the same tree.  We will often simply write $\cha$
to denote the tree with a single node labelled $\cha$.  Finally, let $\emptree$
denote the empty tree.

\subsubsection{Context Trees}

A \emph{context tree} over the alphabet $\alphabet$ with context variables
$\cvar_1, \ldots, \cvar_\num$ is a tree $\context = \tup{\tdom, \tlab}$ over
$\alphabet \uplus \set{\cvar_1, \ldots, \cvar_\num}$ such that for each $1 \leq
\idxi \leq \num$ we have $\ap{\rank}{\cvar_\idxi} = 0$ and there exists a unique
\emph{context node} $\tnode_\idxi$ such that $\ap{\tlab}{\tnode_\idxi} =
\cvar_\idxi$.  We will denote such a tree $\csub{\context}{\cvar_1, \ldots,
\cvar_\num}$.  

Given trees $\tree_\idxi = \tup{\tdom_\idxi, \tlab_\idxi}$ for each $1 \leq
\idxi \leq \num$, we denote by $\csub{\context}{\tree_1, \ldots, \tree_\num}$
the tree $\tree'$ obtained by filling each variable $\cvar_\idxi$ with the tree
$\tree_\idxi$.  That is, $\tree' = \tup{\tdom', \tlab'}$ where $\tdom' = \tdom
\cup \tnode_1 \cat \tdom_1 \cup \cdots \cup \tnode_\num \cat \tdom_\num$ and 
\[
    \ap{\tlab'}{\tnode} = 
    \begin{cases}
        \ap{\tlab}{\tnode} & \tnode \in \tdom \land \forall \idxi . \tnode \neq
        \tnode_\idxi \\

        \ap{\tlab_\idxi}{\tnode'} & \tnode = \tnode_\idxi \tnode' \ .
    \end{cases} 
\]

\subsubsection{Tree Automata}

A \emph{bottom-up nondeterministic tree automaton} (NTA) over a ranked alphabet
$\alphabet$ is a tuple $\ta = \tup{\tastates, \tarules, \tafinals}$ where
$\tastates$ is a finite set of states, $\tafinals \subseteq \tastates$ is a set
of final (accepting) states, and $\tarules$ is a finite set of rules of the form
$\tarule{\tast_1, \ldots, \tast_\num}{\cha}{\tast}$ where $\tast_1, \ldots,
\tast_\num, \tast \in \tastates$, $\cha \in \alphabet$ and $\ap{\rank}{\cha} =
\num$.  A \emph{run} of $\ta$ on a tree $\tree = \tup{\tdom, \tlab}$ is a
mapping $\tarun : \tdom \mapsto \tastates$ such that for all $\tnode \in \tdom$
labelled $\ap{\tlab}{\tnode} = \cha$ with $\ap{\rank}{\cha} = \num$ we have
$\tarule{\ap{\tarun}{\tnode 1}, \ldots, \ap{\tarun}{\tnode
\num}}{\cha}{\ap{\tarun}{\tnode}}$.  It is accepting if $\ap{\tarun}{\eword} \in
\tafinals$.  The \emph{language} defined by a tree automaton $\ta$ over alphabet
$\alphabet$ is a set $\ap{\lang}{\ta} \subseteq \trees{\alphabet}$ over which
there exists an accepting run of $\ta$.  A set of trees $\lang$ is
\emph{regular} iff there is a tree automaton $\ta$ such that $\ap{\lang}{\ta} =
\lang$.

For any tree $\tree$, let $\tasingle{\tree}$ be a tree automaton accepting only
the tree $\tree$.

\subsection{Reset Petri-Nets}

We give a simplified presentation of reset Petri-nets as counter machines with
increment, decrement and reset operations.  This can easily been seen to be
equivalent to the standard definition~\cite{AK77}.

Given a set $\counters = \set{\ctr_1, \ldots, \ctr_\num}$ of counter variables,
we define the set $\cops{\counters}$ of counter operations to be
$\setcomp{\incr{\ctr}, \decr{\ctr}, \reset{\ctr}}{\ctr \in \counters}$.

\begin{definition}[Reset Petri Nets]
    A \emph{reset Petri net} is a tuple $\pn = \tup{\pnstates, \counters,
    \pnrules}$ where $\pnstates$ is a finite set of control states, $\counters$
    is a finite set of counter variables, and $\pnrules \subseteq \pnstates
    \times 2^\cops{\counters} \times \pnstates$ is a transition relation.
\end{definition}

A configuration of a reset Petri net is a pair $\config{\pnstate}{\pnmarking}$
where $\pnstate \in \pnstates$ is a control state and $\pnmarking : \counters
\rightarrow \nats$ is a marking assigning values to counter variables.  We write
$\pnrule{\control}{\pnops}{\control'}$ to denote a rule $\tup{\control, \pnops,
\control'} \in \pnrules$ and omit the set notation when the set of counter
operations is a singleton.  There is a transition $\config{\control}{\pnmarking}
\pntran \config{\control'}{\pnmarking'}$ whenever we have
$\pnrule{\control}{\pnops}{\control'} \in \pnrules$ and there are markings
$\pnmarking_1, \pnmarking_2$ such that 
\begin{itemize}
    \item 
        we have 
        \[
            \ap{\pnmarking_1}{\ctr} = 
            \begin{cases}
                \ap{\pnmarking}{\ctr} - 1 & \text{if } \decr{\ctr} \in \pnops
                \text{ and } \ap{\pnmarking_1}{\ctr} > 0 \\
                \ap{\pnmarking}{\ctr} & \text{if } \decr{\ctr} \notin \pnops
            \end{cases}
        \]
        and

    \item 
        we have 
        \[
            \ap{\pnmarking_2}{\ctr} = 
            \begin{cases}
                0 & \text{if } \reset{\ctr} \in \pnops \\
                \ap{\pnmarking_1}{\ctr} & \text{if } \reset{\ctr} \notin \pnops
            \end{cases}
        \]
        and

    \item
        we have
        \[
            \ap{\pnmarking'}{\ctr} = 
            \begin{cases}
                \ap{\pnmarking_2}{\ctr} + 1 & \text{if } \incr{\ctr} \in \pnops
                \\
                \ap{\pnmarking_2}{\ctr} & \text{if } \incr{\ctr} \notin \pnops
            \end{cases}
        \]
\end{itemize}
Note, in particular, operations are applied in the order $\decr{\ctr},
\reset{\ctr}, \incr{\ctr}$, and if $\ctr$ is $0$, attempting to apply
$\decr{\ctr}$ will cause the Petri net to become stuck.  We write
$\config{\pnstate}{\pnmarking} \pnrun \config{\pnstate'}{\pnmarking'}$ whenever
there is a run $\config{\pnstate}{\pnmarking} \pntran \cdots \pntran
\config{\pnstate'}{\pnmarking'}$ of $\pn$.

Given two markings $\pnmarking$ and $\pnmarking'$, we say $\pnmarking$
\emph{covers} $\pnmarking'$, written $\pnmarking' \pncovers \pnmarking$
whenever, for all $\ctr$ we have $\ap{\pnmarking'}{\ctr} \leq
\ap{\pnmarking}{\ctr}$.

\begin{definition}[Coverability Problem]
    Given a reset Petri-net $\pn$, and configurations
    $\config{\pnstate}{\pnmarking}$ and $\config{\pnstate'}{\pnmarking'}$ the
    \emph{coverability problem} is to decide whether there exists a run
    $\config{\pnstate}{\pnmarking} \pnrun \config{\pnstate'}{\pnmarking''}$ of
    $\pn$ such that $\pnmarking' \pncovers \pnmarking''$.
\end{definition}

The coverability problem for reset Petri nets is decidable via the Karp-Miller
algorithm~\cite{KM67} whose complexity is bounded by $\acker$~\cite{M84}.  In
fact, the coverability problem for reset Petri nets is
$\acker$-complete~\cite{S02,S10}.  In contrast, the reachability problem
(defined below) is undecidable~\cite{AK77}.

\begin{definition}[Reachability Problem]
    Given a reset Petri-net $\pn$, and configurations
    $\config{\pnstate}{\pnmarking}$ and $\config{\pnstate'}{\pnmarking'}$ the
    \emph{reachability problem} is to decide whether there exists a run
    $\config{\pnstate}{\pnmarking} \pnrun \config{\pnstate'}{\pnmarking'}$ of
    $\pn$.
\end{definition}

In the following, we will write $\pnzeromarking$ for the marking assigning zero
to all counters.

\subsection{Ground Tree Rewrite Systems with State}

In this work, we will actually consider a generalisation of GTRS where regular
automata appear in the rewrite rules.  In this way, a single rewrite rule may
correspond to an infinite number of rewrite rules containing concrete trees on
their left- and right-hand sides.  Such an extension is frequently considered
(e.g.~\cite{DHLT90,L03,L12}).  It can be noted that our lower bound results only
use tree automata that accept a singleton set of trees, and thus we do not
increase our lower bounds due to this generalisation.

\subsubsection{Basic Model}

A Ground Tree Rewrite System with State maintains a tree over a given alphabet
$\alphabet$ and a control state from a finite set.  Each transition may update
the control state and rewrite a part of the tree.  Rewriting a tree involves
matching a sub-tree of the current tree and replacing it with a new tree.  Note,
that since we are considering ranked trees, a sub-tree cannot be erased by a
rewrite rule, since this would make the tree inconsistent w.r.t the ranks of the
tree labels.

\begin{definition}[Ground Tree Rewrite System with State]
    A \emph{ground tree rewrite system with state} (sGTRS) is a tuple $\gtrs =
    \tup{\controls, \alphabet, \rules}$ where $\controls$ is a finite set of
    control states, $\alphabet$ is a finite ranked alphabet, and $\rules$ is a
    finite set of rules of the form
    $\rrule{\control_1}{\ta_1}{\control_2}{\ta_2}$ where $\control_1, \control_2
    \in \controls$ and $\ta_1, \ta_2$ are NTAs over $\alphabet$ such that
    $\emptree \notin \ap{\lang}{\ta_1} \cup \ap{\lang}{\ta_2}$.
\end{definition}

A \emph{configuration} of a sGTRS is a pair $\config{\control}{\tree} \in
\controls \times \trees{\alphabet}$.  We have a \emph{transition}
$\config{\control_1}{\tree_1} \tran \config{\control_2}{\tree_2}$ whenever there
is a rule $\rrule{\control_1}{\ta_1}{\control_2}{\ta_2} \in \rules$ such that
$\tree_1 = \csub{\context}{\tree'_1}$ for some context $\context$ and tree
$\tree'_1 \in \ap{\lang}{\ta_1}$ and $\tree_2 = \csub{\context}{\tree'_2}$ for
some tree $\tree'_2 \in \ap{\lang}{\ta_2}$. 

A \emph{run} of an sGTRS is a sequence $\config{\control_1}{\tree_1} \tran
\cdots \tran \config{\control_\runlen}{\tree_\runlen}$ such that for all $1 \leq
\idxi < \runlen$ we have $\config{\control_\idxi}{\tree_\idxi} \tran
\config{\control_{\idxi+1}}{\tree_{\idxi+1}}$ is a transition of $\gtrs$.  We
write $\config{\control}{\tree} \run \config{\control'}{\tree'}$ whenever there
is a run from $\config{\control}{\tree}$ to $\config{\control'}{\tree'}$.

We are interested in both the control state reachability problem and the regular
reachability problem.

\begin{definition}[Control State Reachability Problem]
    Given an sGTRS $\gtrs$, an initial configuration
    $\config{\controlinit}{\treeinit}$ of $\gtrs$ and a target control state
    $\controldest$, the \emph{control state reachability problem} asks whether
    there is a run $\config{\controlinit}{\treeinit} \run
    \config{\controldest}{\tree}$ of $\gtrs$ for some tree $\tree$.
\end{definition}

\begin{definition}[Regular Reachability Problem]
    Given an sGTRS $\gtrs$, an initial configuration
    $\config{\controlinit}{\treeinit}$ of $\gtrs$, a target control state
    $\controldest$, and tree automaton $\ta$, the \emph{regular reachability
    problem} is to decide whether there exists a run
    $\config{\controlinit}{\treeinit} \run \config{\controldest}{\tree}$ for
    some $\tree \in \ap{\lang}{\ta}$.
\end{definition}

\subsubsection{Output Symbols}

We may also consider sGTRSs with output symbols.

\begin{definition}[Ground Tree Rewrite System with State and Outputs]
    A \emph{ground tree rewrite system with state and outputs} is a tuple $\gtrs
    = \tup{\controls, \alphabet, \oalphabet, \rules}$ where $\controls$ is a
    finite set of control states, $\alphabet$ is a finite ranked alphabet,
    $\oalphabet$ is a finite alphabet of output symbols, and $\rules$ is a
    finite set of rules of the form
    $\orrule{\control_1}{\ta_1}{\osym}{\control_2}{\ta_2}$ where $\control_1,
    \control_2 \in \controls$, $\osym \in \oalphabet$, and $\ta_1, \ta_2$ are
    NTAs over $\alphabet$ such that $\emptree \notin \ap{\lang}{\ta_1} \cup
    \ap{\lang}{\ta_2}$.
\end{definition}

As before, a \emph{configuration} of an sGTRS is a pair
$\config{\control}{\tree} \in \controls \times \trees{\alphabet}$.  We have a
\emph{transition} $\config{\control_1}{\tree_1} \otran{\osym}
\config{\control_2}{\tree_2}$ whenever there is a rule
$\orrule{\control_1}{\ta_1}{\osym}{\control_2}{\ta_2} \in \rules$ such that
$\tree_1 = \csub{\context}{\tree'_1}$ for some context $\context$ and tree
$\tree'_1 \in \ap{\lang}{\ta_1}$ and $\tree_2 = \csub{\context}{\tree'_2}$ for
some tree $\tree'_2 \in \ap{\lang}{\ta_2}$.  A \emph{run} over
$\osym_1\ldots\osym_{\runlen-1}$ is a sequence $\config{\control_1}{\tree_1}
\otran{\osym_1} \cdots \otran{\osym_{\runlen-1}}
\config{\control_\runlen}{\tree_\runlen}$ such that for all $1 \leq \idxi <
\runlen$ we have $\config{\control_\idxi}{\tree_\idxi} \otran{\osym_\idxi}
\config{\control_{\idxi+1}}{\tree_{\idxi+1}}$ is a transition of $\gtrs$.  We
write $\config{\control}{\tree} \orun{\osym_1\ldots\osym_\runlen}
\config{\control'}{\tree'}$ whenever there is a run from
$\config{\control}{\tree}$ to $\config{\control'}{\tree'}$ over the sequence of
output symbols $\osym_1\ldots\osym_\runlen$.  Let $\empsym$ denote the empty
output symbol.

\subsubsection{Weakly Extended Ground Tree Rewrite Systems}

The control state and regular reachability problems for sGTRS are known to be
undecidable~\cite{BKRS09,GL11}.  The problems become NP-complete for
\emph{weakly-synchronised} sGTRS~\cite{L12}, where the underlying control state
graph (where there is an edge between $\control_1$ and $\control_2$ whenever
there is a transition $\rrule{\control_1}{\ta_1}{\control_2}{\ta_2}$) may only
have cycles of length $1$ (i.e. self-loops).

More formally, we define the \emph{underlying control graph} of a sGTRS $\gtrs =
\tup{\controls, \alphabet, \oalphabet, \rules}$ as a tuple $\tup{\controls,
\nfatrans}$ where $\nfatrans = \setcomp{\tup{\control,
\control'}}{\orrule{\control}{\ta}{\osym}{\control'}{\ta'} \in \rules}$.  Note,
the underlying control graph of a sGTRS without output symbols can be defined by
simply omitting $\oalphabet$ and $\osym$.

\begin{definition}[Weakly Extended GTRS~\cite{L12}] 
    An sGTRS (with or without output symbols) is \emph{weakly extended} if its
    underlying control graph $\tup{\controls, \nfatrans}$ is such that all paths
    \[
        \tup{\control_1, \control_2}\tup{\control_2,
        \control_3}\ldots\tup{\control_{\runlen-2},
        \control_{\runlen-1}}\tup{\control_{\runlen-1}, \control_\runlen} \in
        \nfatrans^\ast
    \]
    with $\control_1 = \control_\runlen$ satisfy $\control_\idxi = \control_1$
    for all $1 \leq \idxi \leq \runlen$.
\end{definition}

A key result of Lin is that the Parikh image of a weakly extended sGTRS with
output symbols can be represented by an existential Presburger formula that is
constructible in polynomial time.   We use this result to obtain regular
automata representing the possible outputs of weakly extended sGTRSs, which will
be used later in our decidability proofs.  In the following lemma, fix an
arbitrary linear ordering over the output alphabet of $\gtrs$.

\begin{namedlemma}{lem:wgtrsrega}{Parikh Image of Weakly Extended sGTRS}
    Given a weakly extended sGTRS $\gtrs$ with outputs $\oalphabet$, control
    states $\control_1$ and $\control_2$ and tree automata $\ta_1$ and $\ta_2$,
    we can construct a regular automaton $\rega$ with outputs $\oalphabet$ such
    that we have some trees $\tree_1 \in \ap{\lang}{\ta_1}$ and $\tree_2 \in
    \ap{\lang}{\ta_2}$ and a run 
    \[
        \config{\control_1}{\tree_1} \orun{\osym_1\ldots\osym_\runlen}
        \config{\control_2}{\tree_2}
    \] 
    with $\parikh{\osym_1\ldots\osym_\runlen} = \vecnums$ iff $\vecnums \in
    \parikh{\rega}$.  Moreover, the size of $\rega$ is at most triply
    exponential in the size of $\gtrs$.
\end{namedlemma}
\begin{proof}
    Directly from Lemma~2 in Lin 2012~\cite{L12} we can construct in polynomial
    time an existential Presburger formula $\fmla$ with $\sizeof{\oalphabet}$
    free variables such that we have some trees $\tree_1 \in \ap{\lang}{\ta_1}$
    and $\tree_2 \in \ap{\lang}{\ta_2}$ and a run 
    \[
        \config{\control_1}{\tree_1} \orun{\osym_1\ldots\osym_\runlen}
        \config{\control_2}{\tree_2}
    \] 
    with $\parikh{\osym_1\ldots\osym_\runlen} = \vecnums$ iff $\vecnums$
    satisfies $\fmla$.  We know from Ginsburg and Spanier~\cite{GS66} that the
    set of satisfying assignments to an existential Presburger formula can be
    described by a semilinear set (and vice-versa).  That is, there exists some
    $\num$ and for all $1 \leq \idxi \leq \num$ there are vectors of natural
    numbers $\varvecnums_\idxi$ and $\varvecnums^1_\idxi, \ldots,
    \varvecnums^{\varnum_\idxi}_\idxi$ such that for all $\vecnums$, $\vecnums$
    satisfies $\fmla$ iff there exists some $1 \leq \idxi \leq \num$ and
    constants $\mu_1, \ldots, \mu_{\varnum_\idxi} \in \nats$ such that $\vecnums
    = \varvecnums_\idxi + \mu_1 \cdot \varvecnums^1_\idxi + \cdots +
    \mu_{\varnum_\idxi} \cdot \varvecnums^{\varnum_{\idxi}}_\idxi$.  In fact,
    via algorithms of Pottier~\cite{P91}, it is possible to obtain vectors such
    that the size of the values appearing in $\varvecnums_\idxi$ and
    $\varvecnums^1_\idxi, \ldots, \varvecnums^{\varnum_\idxi}_\idxi$ are at most
    doubly exponential in the size of $\fmla$ (via an exponential translation of
    $\fmla$ into DNF, then applying Pottier to gain a bound exponential in the
    size of the DNF --- see Haase~\cite{H13} or Piskac~\cite{P11}).  Since
    $\fmla$ is polynomial in the size of $\gtrs$, we know that that each vector
    $\varvecnums_\idxi$ or $\varvecnums^\idxj_\idxi$ has elements at most doubly
    exponential in size, and since there are at most a triply exponential number
    of such sets of vectors, we know that $\num$ is at most triply exponential
    in the size of $\gtrs$.

    It is straightforward to build, from a semilinear set, a regular automaton
    $\rega$ such that $\parikh{\rega}$ is equivalent to the set: for each
    $\idxi$ we have a branch in $\rega$ first outputting the appropriate number
    of characters to describe $\varvecnums_\idxi$, and then passing through a
    succession of loops each outputting characters describing some
    $\varvecnums^\idxj_\idxi$.  Such an automaton will be at most triply
    exponential in the size of $\gtrs$.
\end{proof}

\section{Senescent Ground Tree Rewrite Systems with State}

In this paper we generalise weakly-synchronised sGTRS to define senescent ground
tree rewrite systems by incorporating ideas from scope-bounded multi-stack
pushdown systems~\cite{lTN11}, where stack characters may only be accessed if
they were created less than a fixed number of context switches previously.

Intuitively, during each transition of a run that changes the control state, the
nodes in the tree ``age'' by one timestep.  When the nodes reach a certain
(fixed) age, they become fossilised and may no longer be changed by any future
transitions.

\subsection{Model Definition}

Given a run $\config{\control_1}{\tree_1} \tran \cdots \tran
\config{\control_\runlen}{\tree_\runlen}$ of an sGTRS, let $\context_1, \ldots,
\context_{\runlen-1}$ be the sequence of tree contexts used in the transitions
from which the run was constructed.  That is, for all $1 \leq \idxi < \runlen$,
we have $\tree_\idxi = \csub{\context_\idxi}{\treeout_\idxi}$ and
$\tree_{\idxi+1} = \csub{\context_\idxi}{\treein_{\idxi+1}}$ where
$\rrule{\control_\idxi}{\ta_\idxi}{\control_{\idxi+1}}{\ta'_\idxi}$ was the
rewrite rule used in the transition and $\treeout_\idxi \in
\ap{\lang}{\ta_\idxi}$, $\treein_{\idxi+1} \in \ap{\lang}{\ta'_\idxi}$ were the
trees that were used in the tree update.

For a given position $\config{\control_\idxi}{\tree_\idxi}$ in the run and a
given node $\tnode$ in the domain of $\tree_\idxi$, the \emph{birthdate} of the
node is the largest $1 \leq \idxj \leq \idxi$ such that $\tnode$ is in the
domain of $\csub{\context_\idxj}{\treein_{\idxj}}$ and $\tnode$ is in the domain
of $\csub{\context_\idxj}{\cvar}$ only if its label is $\cvar$.  The \emph{age}
of a node is the cardinality of the set $\setcomp{\idxi'}{\idxj \leq \idxi' <
\idxi \land \control_{\idxi'} \neq \control_{\idxi'+1}}$.  That is, the age is
the number of times the control state changed between the $\idxj$th and the
$\idxi$th configurations in the run.  This is illustrated in
Figure~\ref{fig:trans}.

\begin{figure}
\input{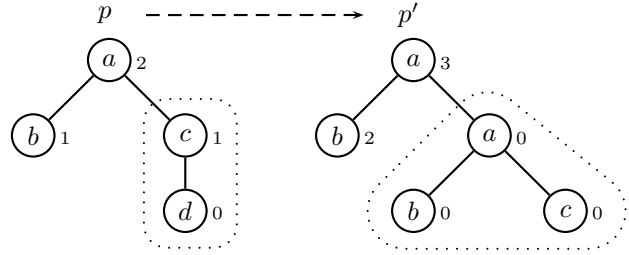}
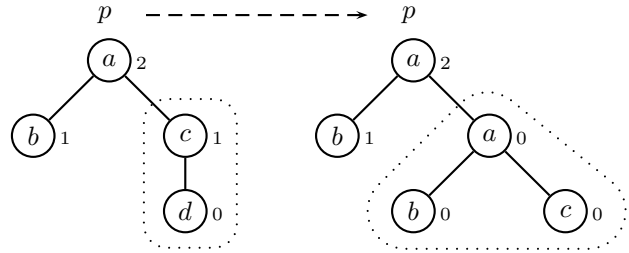
\begin{center}
    \subfloat[]
             [\label{fig:tran-change}A transition changing the control state.]{ 
        \gtrsegchange
    }
    \hspace{2ex}
    \subfloat[]
             [\label{fig:tran-nochange}A transition that does not change the control state.]{ 
        \gtrsegnochange
    }
    \caption{\label{fig:trans}Transitions of a senescent GTRS.}
\end{center}
\end{figure}

Figure~\ref{fig:trans} shows two transitions of a senescent GTRS.  A
configuration is written as its control state ($\control$ or $\control'$) with
the tree appearing below.  The label of each node appears in the centre of the
node, while the ages of each node appears to the right.  The parts of the tree
rewritten by the transition appear inside the dotted lines.
Figure~\ref{fig:tran-change} shows a transition where the control state is
changed.  This change causes the nodes that are not rewritten to increase their
age by $1$.  The rewritten nodes are given the age $0$.
Figure~\ref{fig:tran-nochange} shows a transition that does not change the
control state.  Notice that, in this case, the nodes that are not rewritten
maintain the same age.

A lifespan restricted run with a lifespan of $\lifespan$ is a run such that each
transition $\config{\control_\idxi}{\csub{\context_\idxi}{\treeout_\idxi}} \tran
\config{\control_{\idxi+1}}{\csub{\context_\idxi}{\treein_{\idxi+1}}}$ has the
property that all nodes $\tnode$ in the domain of
$\csub{\context_\idxi}{\treeout_\idxi}$ but only in the domain of
$\csub{\context_\idxi}{\cvar}$ if the label is $\cvar$ have an age of at most
$\lifespan$.  For example, the transitions in Figure~\ref{fig:trans} require a
lifespan $\geq 1$ since the oldest node that is rewritten by the transitions has
age $1$.

\begin{definition}[Senescent Ground Tree Rewrite Systems]
    A \emph{senescent ground tree rewrite system} with \emph{lifespan}
    $\lifespan$ is an sGTRS $\gtrs = \tup{\controls, \alphabet, \rules}$ where
    runs are \emph{lifespan restricted} with a lifespan of $\lifespan$.
\end{definition}

We will study the control state reachability problem and the regular
reachability problem for senescent GTRS.  We will show in
Theorem~\ref{thm:ackermanncomplete} that the control state reachability problem
is $\acker$-complete, and in Theorem~\ref{thm:gtrsreach} that the regular
reachability is undecidable.

\begin{definition}[Control State Reachability Problem]
    Given a senescent GTRS $\gtrs$ with lifespan $\lifespan$, initial
    configuration $\config{\controlinit}{\treeinit}$, and target control state
    $\controldest$, the \emph{control state reachability problem} is to decide
    whether there exists a lifespan restricted run 
    \[
        \config{\controlinit}{\treeinit} \tran \cdots \tran
        \config{\controldest}{\tree}
    \]
    for some $\tree$.
\end{definition}

\begin{definition}[Regular Reachability Problem]
    Given a senescent GTRS $\gtrs$ with lifespan $\lifespan$, initial
    configuration $\config{\controlinit}{\treeinit}$, target control state
    $\controldest$, and tree automaton $\ta$, the \emph{regular reachability
    problem} is to decide whether there exists a lifespan restricted run 
    \[
        \config{\controlinit}{\treeinit} \tran \cdots \tran
        \config{\controldest}{\tree}
    \]
    for some $\tree \in \ap{\lang}{\ta}$.
\end{definition}

One might expect that decidability of control state reachability would imply
decidability of regular reachability: one could simply encode the tree automaton
into the senescent GTRS.  However, it is not possible to enforce conditions on
the final tree (e.g. that all leaf nodes are labelled by initial states of the
tree automaton) when only the final control state can be specified.

\subsection{Example}

\subsubsection{A Simple Concurrent Program}

We present a simple example of how a senescent GTRS may be used to model a
simple concurrent system.  Consider the toy program in Figure~\ref{fig:spawner}.
This simple (outline) program has dynamic thread creation.  

Beginning from the main function, the program creates a thread using a spawn
function that takes as an argument the ID to assign to the new thread, and the
function to run in the new thread.  IDs are obtained using a getID function.
This function is declared ``critical'' to ensure that no two threads execute the
function simultaneously.  It uses a shared variable to ensure that a fresh ID is
returned after each call.

The run function describes the behaviour of each process.  It takes as an
argument its own ID and the ID of another process with which it may communicate.
The switch statement represents a non-deterministic choice between each of the
cases: the process creates two new child processes who may communicate with each
other; or the process communicates with its sibling.

\begin{figure}
    \centering
    \begin{minipage}{80ex}
    \begin{lstlisting}[language=c]
        shared int nextID = 0;

        critical int getID() { 
            return nextID++;
        }

        void run(int myID, int siblingID) {
            switch * {
                case: 
                    int id1 = getID();
                    int id2 = getID();
                    spawn(id1, run(id1, id2));
                    spawn(id2, run(id2, id1)):
                    break;
                case:
                    if (siblingID >= 0)
                        sendMessage(siblingID);
                case:
                    if (siblingID >= 0)
                        receiveMessage(siblingID);
            }
        }

        void main() {
            int id = getID();
            spawn(id, run(id, -1));
        }
    \end{lstlisting}
    \end{minipage}
    \caption{\label{fig:spawner} A simple program with thread creation.}
\end{figure}

In order to ensure reliable communication, a relevant property may be that each
thread has a unique ID.  With a standard C-like semantics, the program in
Figure~\ref{fig:spawner} does not satisfy this property: when the value of
nextID surpasses the maximum value that can be stored in an int, it will loop
back around to $0$.  Thus, two processes may share the same ID.  This is a bug
in the program.

\subsubsection{Modelling the Program as a Senescent GTRS}

We may model the above program as a senescent GTRS.  For simplicity, we will
inline all function calls (we show in Section~\ref{sec:modelling-procs} how to
model threads with stacks, and hence function calls).  We will also abstract
away the sendMessage and receiveMessage functions into silent
actions (since we are only interested in whether two threads share the same ID).

We define the senescent GTRS $\gtrs = \tup{\controls, \alphabet, \rules}$.  Let
$\num$ be the maximum value of an int.  The set of control states will be 
\[
    \begin{array}{rcl}
        \controls &=& \set{0, \ldots, \num}\ \cup \\ 
        & & \setcomp{\spawncontrol{\idxi}{\idxi_1}{\idxj_1}}{0 \leq \idxi,
        \idxi_1
        \leq \num \land -1 \leq \idxj_1 \leq \num}\ \cup \\ 
        & & \set{\hasid{0}, \ldots, \hasid{\num}}\ \cup \\ 
        & & \set{\errorcontrol} \ .
    \end{array}
\]
The control states in $\set{0, \ldots, \num}$ will track the value of the nextID
variable.  States of the form $\spawncontrol{\idxi}{\idxi_1}{\idxj_1}$ indicate
that nextID is $\idxi$ and a new thread with ID $\idxi_1$ and sibling ID
$\idxj_1$ should be started.  Control states of the form $\hasid{\idxi}$ will be
used to detect whether two threads exist with the same ID.  The final control
state $\errorcontrol$ indicates that an ID error has been detected.

The alphabet will be the set
\[
    \begin{array}{rcl}
        \alphabet &=& \setcomp{\threadstate{\idxi_1}{\idxj_1}}{-1 \leq \idxi_1,
        \idxj_1 \leq \num} \ \cup \\ 
        & & \setcomp{\spawncase{\idxi_1}{\idxj_1}{\idxi_2}{\idxj_2},
        \spawncasenext{\idxi_1}{\idxj_1}{\idxi_2}{\idxj_2}}{0 \leq \idxi_1,
        \idxi_2, \idxj_1 \leq \num \land -1 \leq \idxj_2 \leq \num} \ \cup \\ 
        & & \set{\spawnnode, \splitnode, \deadnode} \ .
    \end{array}
\]
The senescent GTRS will maintain a tree with a single leaf node per thread.
Each leaf corresponding to a thread will be of the form
$\threadstate{\idxi_1}{\idxj_1}$ indicating the values of myID and siblingID
respectively.  The labels $\spawncase{\idxi_1}{\idxj_1}{\idxi_2}{\idxj_2}$ and
$\spawncasenext{\idxi_1}{\idxj_1}{\idxi_2}{\idxj_2}$ will be used respectively
for the first and second calls to spawn: $\idxi_2$ and $\idxj_2$ will store the
values returned by getID before the spawn actions are called.

Note, if we wanted to model function calls using a pushdown system, we would use
a branch representing the call stack rather than simple leaves (as in
Section~\ref{sec:modelling-procs}).

The label $\splitnode$ will label internal nodes of the tree, while $\spawnnode$
will label a unique leaf node from which the tree may be expanded to accommodate
new threads.  Finally, $\deadnode$ indicates a sleeping thread.

The rules $\rules$ of $\gtrs$ will be of several kinds: internal thread actions,
spawn actions, and error detection actions.  
\begin{itemize}
    \item
        The internal actions (including the abstracted send and receive actions)
        will simply be rules of the form
        $\rrule{\control}{\tasingle{\threadstate{\idxi_1}{\idxj_1}}}{\control}{\tasingle{\threadstate{\idxi_1}{\idxj_1}}}$
        for all $\control \in \controls$ and $-1 \leq \idxi_1, \idxj_1, \leq
        \num$.  Note that applying these rules does not change the structure of
        given configuration, but it does reset the age of each leaf node to
        zero, preventing the node from fossilising.  Likewise, we can also have
        rules $\rrule{\control}{\tasingle{\cha}}{\control}{\tasingle{\cha}}$ for
        each $\control \in \controls$ and each $\cha \in \alphabet \setminus
        \set{\splitnode}$.

    \item
        The spawn actions will be modelled in several steps.
        \begin{itemize}
            \item
                First we get the value of id1 by
                $\rrule{\idxi}{\tasingle{\threadstate{\idxi_1}{\idxj_1}}}{\idxi'}{\tasingle{\spawncase{\idxi_1}{\idxj_1}{\idxi}{-1}}}$
                where $\idxi' = (\idxi + 1) \bmod{\num}$.  
        
            \item
                Next we get the value of id2 using
                $\rrule{\idxi}{\tasingle{\spawncase{\idxi_1}{\idxj_1}{\idxi_2}{-1}}}{\idxi'}{\tasingle{\spawncase{\idxi_1}{\idxj_1}{\idxi_2}{\idxi}}}$
                (where $\idxi' = (\idxi + 1) \bmod{\num}$).

            \item 
                We then perform the first spawn action using two rules.  First
                we use
                \[
                    \rrule{\idxi}{\tasingle{\spawncase{\idxi_1}{\idxj_1}{\idxi_2}{\idxj_2}}}
                    {\spawncontrol{\idxi}{\idxi_2}{\idxj_2}}
                    {\tasingle{\spawncasenext{\idxi_1}{\idxj_1}{\idxi_2}{\idxj_2}}}
            \]
            to spawn the first thread, and then 
            \[
                \rrule{\idxi} {\tasingle{\spawncasenext{\idxi_1} {\idxj_1} {\idxi_2}
                {\idxj_2}}} {\spawncontrol{\idxi}{\idxj_2}{\idxi_2}}
                {\tasingle{\threadstate{\idxi_1}{\idxj_1}}}
            \]
            to spawn the second.  

            Note that each spawn request changes the control state to
            $\spawncontrol{\idxi}{\idxi_1}{\idxj_1}$.  The actual creation of
            each new thread is via rules of the form
            \[
                \rrule{\spawncontrol{\idxi}{\idxi_1}{\idxj_1}}{\tasingle{\spawnnode}}{\idxi}{\tasingle{\tap{\splitnode}{\threadstate{\idxi_1}{\idxj_1},
                \spawnnode}}} \ .
            \]
         \end{itemize}

     \item
         Finally, we will detect multiple IDs via two sets of rules.  In the first
         a thread can question whether any other thread has the same ID.  That
         is, we have
         $\rrule{\idxi}{\tasingle{\threadstate{\idxi_1}{\idxj_1}}}{\hasid{\idxi_1}}{\tasingle{\deadnode}}$.
         That is, the node requests if any thread has the same ID, and then goes
         to sleep (so that it does not answer its own request).  The request can
         be answered with a rule
         $\rrule{\hasid{\idxi}}{\tasingle{\threadstate{\idxi}{\idxj_1}}}{\errorcontrol}{\tasingle{\threadstate{\idxi}{\idxj_1}}}$,
         taking the GTRS to control state $\errorcontrol$, indicating that an
         error has occurred.
\end{itemize}

The above defined system will always have a run to $\errorcontrol$ from the
initial configuration $\config{1}{\tasingle{\tap{\splitnode}{\threadstate{0,
-1}, \spawnnode}}}$ which represents the system state of the initial spawn
action.  This holds even with a lifespan of $1$: the first process simply
performs spawn actions until the nextID reaches $\num$ and wraps around to $0$;
then by starting another two new processes and moving to $\hasid{0}$ the error
can be detected.

\section{Modelling Power of Senescent GTRSs}

We show in this section that senescent GTRSs at least capture scope-bounded
multi-stack pushdown systems.  Essentially just by encoding each stack as a
different branch of the tree.  We also show that we can encode coverability and
reachability of a reset Petri net.  The former can be reduced to control state
reachability, whereas the latter is reducible to regular reachability.  Thus,
regular reachability is undecidable.

\subsection{Scope-Bounded Pushdown Systems} 
\label{sec:threads-are-branches} 
\label{sec:modelling-procs}

A senescent GTRS may quite naturally model a scope-bounded multi-stack pushdown
system.  Scope-bounded multi-pushdown systems were first introduced by La Torre
and Napoli~\cite{lTN11}, where they were shown to have a PSPACE-complete control
state reachability problem.  This is in contrast to the control state
reachability problem for senescent GTRS which we will show to be
$\acker$-complete.  Hence, senescent GTRS represent a significant increase in
modelling power.  We first define multi-stack and scope-bounded pushdown
systems, before comparing them with senescent GTRS.

\subsubsection{Model Definition}

A multi-stack pushdown system consists of, at any one moment, a control state
and a fixed number $\numstacks$ of stacks over an alphabet $\pdsalphabet$.  Runs
of a scope-bounded multi-stack pushdown system are organised into \emph{rounds},
where each round consists of $\numstacks$ \emph{phases} and during the $\idxi$th
phase, stack operations may only occur on the $\idxi$th stack.  This can be
thought of as several threads running on a round robin scheduler.  The
\emph{scope-bound} $\scopebound$ is a restriction on which characters may be
removed from a stack.  That is, a character may only be removed if it was pushed
within the previous $\scopebound$ rounds.

\begin{definition}[Multi-Stack Pushdown Systems]
    A \emph{multi-stack pushdown system} is a tuple $\pds = \tup{\pdscontrols,
    \pdsalphabet, \pdsrules{1}, \ldots, \pdsrules{\numstacks}}$ where
    $\pdscontrols$ is a finite set of control states, $\pdsalphabet$ is a finite
    set of stack characters, and for each $1 \leq \idxi \leq \numstacks$ we have
    $\pdsrules{\idxi} = \pdspushrules{\idxi} \cup \pdsintrules{\idxi} \cup
    \pdspoprules{\idxi}$ is a set of rules where $\pdspushrules{\idxi} \subseteq
    \pdscontrols \times \pdscontrols \times \pdsalphabet$ is a set of
    push rules, $\pdsintrules{\idxi} \subseteq \pdscontrols \times
    \pdscontrols$ is a set of internal rules, and $\pdspoprules{\idxi}
    \subseteq \pdscontrols \times \pdsalphabet \times \pdscontrols$ is a set of
    pop rules.
\end{definition}

A \emph{configuration} of a multi-stack pushdown system is a tuple
$\config{\pdscontrol}{\pdsstack_1, \ldots, \pdsstack_\numstacks}$ where
$\pdscontrol \in \pdscontrols$ and for all $1 \leq \idxi \leq \numstacks$ we
have $\pdsstack_1 \in \pdsalphabet^\ast$.  We have a transition on stack
$\idxi$, written 
\[
    \config{\pdscontrol}{\pdsstack_1, \ldots, \pdsstack_\numstacks}
    \pdsitran{\idxi} \config{\pdscontrol'}{\pdsstack_1, \ldots,
    \pdsstack_{\idxi-1}, \pdsstack'_\idxi, \pdsstack_{\idxi+1}, \ldots,
    \pdsstack_\numstacks}
\]
whenever we have
\begin{enumerate}
    \item
        a rule $\pdspushrule{\pdscontrol}{\pdscontrol'}{\cha} \in
        \pdspushrules{\idxi}$ and $\pdsstack'_\idxi = \cha \pdsstack_\idxi$, or

    \item
        a rule $\pdsintrule{\pdscontrol}{\pdscontrol'} \in \pdsintrules{\idxi}$
        and $\pdsstack'_\idxi = \pdsstack_\idxi$, or

    \item
        a rule $\pdspoprule{\pdscontrol}{\cha}{\pdscontrol'} \in
        \pdspoprules{\idxi}$ and $\cha \pdsstack'_\idxi = \pdsstack_\idxi$.
\end{enumerate}
We write $\pdsirun{\idxi}$ for the transitive closure of $\pdsitran{\idxi}$ and
$\pdstran$ for the union of all $\pdsitran{\idxi}$.  Take a \emph{run}
\[
    \pdsconfsym_0 \pdstran \cdots \pdstran \pdsconfsym_\runlen
\]
of a multi-stack pushdown system and suppose that the $\idxi$th configuration
$\pdsconfsym_\idxi$ is the configuration
$\config{\pdscontrol_\idxi}{\pdsstack_1, \ldots, \pdsstack_{\idxj-1}, \cha
\pdsstack_\idxj, \pdsstack_{\idxj+1}, \ldots, \pdsstack_\numstacks}$.  We say
that $\cha$ was pushed at configuration $\pdsconfsym_{\idxi'}$ whenever $\idxi'
\leq \idxi$ is the largest index such that we have 
\[
    \begin{array}{rcl}
        \pdsconfsym_{\idxi'-1} = & \config{\pdscontrol_{\idxi'-1}}{\pdsstack'_1,
        \ldots, \pdsstack'_{\idxj-1}, \pdsstack_{\idxj}, \pdsstack'_{\idxj+1},
        \ldots, \pdsstack'_\numstacks} \\
        & \pdsitran{\idxj} & \\ 
        & \config{\pdscontrol_{\idxi'}}{\pdsstack'_1, \ldots,
        \pdsstack'_{\idxj-1}, \cha \pdsstack_{\idxj}, \pdsstack'_{\idxj+1},
        \ldots, \pdsstack'_\numstacks} & = \pdsconfsym_{\idxi'} \ .
    \end{array}
\]
Note that, by convention, $\cha$ was pushed at configuration $\pdsconfsym_0$ if
no such $\idxi'$ exists.

A \emph{round} $\pdsconfsym_0 \pdsround \pdsconfsym_\numstacks$ of a multi-stack
pushdown system is a sequence 
\[
    \pdsconfsym_0 \pdsirun{1} \pdsconfsym_1 \pdsirun{2} \cdots
    \pdsirun{\numstacks} \pdsconfsym_\numstacks
\]
where each $\pdsconfsym_\idxi$ is a configuration.  

Finally, a \emph{$\scopebound$-scope-bounded} run of a multi-stack pushdown
system is a sequence of rounds
\[
    \pdsconfsym_1 \pdsround \cdots \pdsround \pdsconfsym_\runlen
\]
for some $\runlen$ such that for all $\idxj$ and all pop transitions 
\[
    \config{\pdscontrol}{\pdsstack_1, \ldots, \pdsstack_{\idxi-1}, \cha
    \pdsstack_\idxi, \pdsstack_{\idxi+1}, \ldots, \pdsstack_\numstacks}
    \pdsitran{\idxi} \config{\pdscontrol'}{\pdsstack_1, \ldots,
    \pdsstack_{\idxi-1}, \pdsstack_\idxi, \pdsstack_{\idxi+1}, \ldots,
    \pdsstack_\numstacks}
\]
occurring during the round $\pdsconfsym_\idxj \pdsround \pdsconfsym_{\idxj+1}$
we have that $\cha$ was pushed during the round $\pdsconfsym_{\idxj'} \pdsround
\pdsconfsym_{\idxj'+1}$ where $(\idxj - \scopebound) \leq \idxj' \leq \idxj$.

\begin{definition}[Scope-bounded Multi-Stack Pushdown System]
    We define a \emph{$\scopebound$-scope-bounded multi-stack pushdown system}
    to be a multi-stack pushdown system together with a scope-bound
    $\scopebound$.
\end{definition}

We thus define the control state reachability problem for scope-bounded
multi-stack pushdown systems.

\begin{definition}[Control State Reachability]
    Given a scope-bounded multi-stack pushdown system $\pds$ with scope-bound
    $\scopebound$ and $\numstacks$ stacks, control states $\pdscontrolinit$ and
    $\pdscontroldest$, and initial stack character $\pdschainit$, the
    \emph{control state reachability problem} is to decide whether there exists
    a $\scopebound$-scope-bounded run 
    \[
        \config{\pdscontrolinit}{\pdschainit, \ldots, \pdschainit} \pdstran
        \cdots \pdstran \config{\pdscontroldest}{\pdsstack_1, \ldots,
        \pdsstack_\numstacks}
    \]
    for some stacks $\pdsstack_1, \ldots, \pdsstack_\numstacks$.
\end{definition}

\subsubsection{Reduction to Senescent GTRS}

We show that the control state reachability problem for scope-bounded
multi-stack pushdown systems can be simply reduced to the control state
reachability problem for senescent GTRS.  The reduction is a straightforward
extension of the standard method for encoding a pushdown system with an sGTRS
with a single control state, which was generalised to \emph{context-bounded}
multi-stack pushdown systems by Lin~\cite{L12}.  

Without loss of generality, we will assume a stack symbol $\sbot$ that is the
bottom-of-stack symbol.  It is neither pushed onto, nor popped from the stack.
It will also be the initial stack character in the control state reachability
problem. 

Furthermore, by abuse of notation, for a stack $\pdsstack = \cha_1 \ldots
\cha_\num$ we write $\tap{\pdsstack}{\tree}$ for the tree
$\tap{\cha_\num}{\cdots\tap{\cha_1}{\tree}}$.  

A single-stack pushdown system can be modelled as follows.  A configuration
$\config{\pdscontrol}{\pdsstack}$ is encoded as a tree containing a single path.
Consider the tree $\tap{\pdsstack}{\pdscontrol}$.  Since the rules of the
pushdown system only depend on and change the control state and the top of the
stack, they can be encoded as tree rewriting operations.  For example,
the push rule $\pdspushrule{\pdscontrol}{\pdscontrol'}{\cha}$ can be modelled by
matching the subtree $\pdscontrol$ and replacing it with
$\tap{\cha}{\pdscontrol'}$.

To extend this to multi-stack pushdown systems with $\numstacks$ stacks, we
maintain a tree whose root is a node with $\numstacks$ children, where
each child encodes a stack.  However, in this case, the control state must be
stored in the control state of the sGTRS since the tree rewriting rules can only
rewrite sub-trees.

Fortunately, the structure of a scope-bounded run allows us to choose an
encoding that is more economical with the use of the senescent GTRS's control
state.  We will use the pushdown system's control state as a kind of ``token''
to indicate which stack is currently active in the round.  That is, 
it will appear as a leaf of the branch containing the currently active stack.
To move to the next stack (i.e. use $\pdsitran{\idxi+1}$ instead of
$\pdsitran{\idxi}$) the control state of the senescent GTRS will be used to
transfer the pushdown system's control state to the next branch.  Thus, a
$\scopebound$-scope-bounded multi-stack pushdown system will be modelled by a
senescent GTRS with a lifespan of $\scopebound \cdot \numstacks$.  This is
natural since a round of a scope bounded pushdown system contains
$\numstacks$ communications, and hence $\scopebound$ rounds contain $\scopebound
\cdot \numstacks$ communications.

\begin{nameddefinition}{def:gtrsscoped}{$\gtrsscoped{\pds}$}
    Given a $\scopebound$-scope-bounded multi-stack pushdown system $\pds$ as
    the tuple 
    $\tup{\pdscontrols, \pdsalphabet, \pdsrules{1}, \ldots,
    \pdsrules{\numstacks}}$ we define the senescent GTRS $\gtrsscoped{\pds} =
    \tup{\gtrsscopedcontrols, \gtrsscopedalphabet, \gtrsscopedrules}$ with
    lifespan $\scopebound \cdot \numstacks$ where 
    \[
        \begin{array}{rcl}
            \gtrsscopedcontrols & = & \brac{\controls \times \set{1, \ldots,
            \numstacks}} \cup \set{\pdscontrolinit, \pdscontroldest}
            \\
            \\
            \gtrsscopedalphabet & = & \pdsalphabet \cup \pdscontrols \cup
            \set{\splitnode, \stoppednode{1}, \ldots, \stoppednode{\numstacks}}
            \\
            \\
            \gtrsscopedrules & = & \gtrsscopedrulesbeginend \cup \gtrsscopedrulessim
            \cup \gtrsscopedrulesswitch
        \end{array}
    \]
    and $\gtrsscopedrulesbeginend$ is the set
    \[
        \set{\rrule{\pdscontrolinit}{\tasingle{\stoppednode{\numstacks}}}{\tup{\pdscontrolinit,
        1}}{\tasingle{\stoppednode{\numstacks}}}, \rrule{\tup{\pdscontroldest,
        1}}{\tasingle{\pdscontroldest}}{\pdscontroldest}{\tasingle{\pdscontroldest}}}
    \]
    and $\gtrsscopedrulessim$ is the set
    \[
        \begin{array}{l}
            \setcomp{\rrule{\tup{\pdscontrol, \idxi}}
                           {\tasingle{\pdscontrol_1}}
                           {\tup{\pdscontrol, \idxi}}
                           {\tasingle{\tap{\cha}{\pdscontrol_2}}}}
                    {
                        \begin{array}{c}
                            1 \leq \idxi \leq \numstacks \ \land \\
                            \pdscontrol \in \pdscontrols \ \land \\
                            \pdspushrule{\pdscontrol_1}
                                        {\pdscontrol_2}
                                        {\cha} \in \pdsrules{\idxi}
                        \end{array}
                    } \ \cup \\
            \setcomp{\rrule{\tup{\pdscontrol, \idxi}}
                           {\tasingle{\pdscontrol_1}}
                           {\tup{\pdscontrol, \idxi}}
                           {\tasingle{\pdscontrol_2}}}
                    {
                        \begin{array}{c}
                            1 \leq \idxi \leq \numstacks \ \land \\
                            \pdscontrol \in \pdscontrols \ \land \\
                            \pdsintrule{\pdscontrol_1}
                                       {\pdscontrol_2} \in \pdsintrules{\idxi}
                        \end{array}
                    } \ \cup \\
            \setcomp{\rrule{\tup{\pdscontrol, \idxi}}
                           {\tasingle{\tap{\cha}{\pdscontrol_1}}}
                           {\tup{\pdscontrol, \idxi}}
                           {\tasingle{\pdscontrol_2}}}
                    {
                        \begin{array}{c}
                            1 \leq \idxi \leq \numstacks \ \land \\
                            \pdscontrol \in \pdscontrols \ \land \\
                            \pdspoprule{\pdscontrol_1}
                                       {\cha}
                                       {\pdscontrol_2} \in \pdspoprules{\idxi}
                        \end{array}
                    }
        \end{array}
    \]
    and finally $\gtrsscopedrulesswitch$ is the set
    \[
        \begin{array}{l}
            \setcomp{\rrule{\tup{\pdscontrol, \idxi}}
                           {\tasingle{\pdscontrol'}}
                           {\tup{\pdscontrol', \brac{\idxi \bmod{\numstacks}} + 1}}
                           {\tasingle{\stoppednode{\idxi}}}}
                    {1 \leq \idxi \leq \numstacks} \ \cup \\
            \setcomp{\rrule{\tup{\pdscontrol, \idxi}}
                           {\tasingle{\stoppednode{\idxi}}}
                           {\tup{\pdscontrol, \idxi}}
                           {\tasingle{\pdscontrol}}}
                    {1 \leq \idxi \leq \numstacks} \ . 
        \end{array}
    \]
\end{nameddefinition}

\begin{theorem}[Scope-Bounded to Senescent GTRS]
    The control state reachability problem for scope-bounded multi-stack
    pushdown systems can be reduced to the control state reachability problem
    for senescent GTRS.
\end{theorem}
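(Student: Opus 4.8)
The construction of $\gtrsscoped\pds$ in Definition~\ref{def:gtrsscoped} is already given, so the plan is to prove its \emph{correctness}. I would take as the initial configuration of $\gtrsscoped\pds$ the tree $\treeinit = \tap{\splitnode}{\tap{\sbot}{\stoppednode 1}, \ldots, \tap{\sbot}{\stoppednode\numstacks}}$ under control state $\pdscontrolinit$, with target control state $\pdscontroldest$. The goal is then to show that $\pds$ has a $\scopebound$-scope-bounded run from $\config{\pdscontrolinit}{\sbot, \ldots, \sbot}$ reaching $\pdscontroldest$ if and only if $\gtrsscoped\pds$ has a lifespan-restricted run (lifespan $\scopebound \cdot \numstacks$) from $\config{\pdscontrolinit}{\treeinit}$ reaching control state $\pdscontroldest$. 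As $\gtrsscoped\pds$ is plainly computable from $\pds$ in polynomial time, this equivalence is all that is required.

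First I would fix the encoding. A multi-stack configuration $\config{\pdscontrol}{\pdsstack_1, \ldots, \pdsstack_\numstacks}$ in which stack $\idxi$ is active is encoded by the tree whose root $\splitnode$ has $\idxi$th child $\tap{\pdsstack_\idxi}{\pdscontrol}$ -- the token leaf $\pdscontrol$ recording the live control state -- and whose remaining children are $\tap{\pdsstack_\idxj}{\stoppednode\idxj}$. The central observation, which I would isolate as a lemma, is that the rules of $\gtrsscopedrulessim$ and the \emph{activation} rules of $\gtrsscopedrulesswitch$ leave the control state unchanged, whereas each \emph{end-of-phase} rule of $\gtrsscopedrulesswitch$ changes it; hence the only control-state changes along an intended run are the $\numstacks$ phase-ends per round, and the age of a node counts exactly the phase-switches elapsed since it was last written. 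For the forward direction I would then replay a scope-bounded run round by round: at the start of phase $\idxi$ the activation rule turns $\stoppednode\idxi$ into the token $\pdscontrol$, the rules of $\gtrsscopedrulessim$ mimic that phase's push, pop and internal moves on branch $\idxi$, and the end-of-phase rule restores $\stoppednode\idxi$ while advancing the second control component to $(\idxi \bmod \numstacks)+1$ and carrying the live control state in its first component.

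To see that the simulating run is lifespan restricted I would compute the age of a stack node when it is matched. A symbol pushed during round $j'$ on stack $\idxi$ is created with age $0$ and aged by one at each subsequent phase-end, so a pop of that symbol during round $j$ on the same stack finds it with age exactly $\numstacks(j - j')$, independently of $\idxi$. The scope-bound requirement $j - \scopebound \leq j'$ is thus equivalent to age $\leq \scopebound \cdot \numstacks$, so a pop is admissible in $\gtrsscoped\pds$ precisely when the scope bound admits it. A dual count shows each $\stoppednode\idxi$ is reactivated after ageing only $\numstacks - 1 < \scopebound \cdot \numstacks$, so activation rules never fossilise a stopped leaf; and the initialising and finalising rules of $\gtrsscopedrulesbeginend$ touch only nodes ($\stoppednode\numstacks$ and a phase-$1$ token) whose ages are likewise within bounds.

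The converse direction is the main obstacle: I must show that an \emph{arbitrary} lifespan-restricted run reaching $\pdscontroldest$ is already of the intended shape, so that a scope-bounded $\pds$-run can be read off. No rule rewrites $\splitnode$, so the root persists and the tree stays a fan of $\numstacks$ single-path branches; since $\stoppednode\idxi$ occurs only on branch $\idxi$, and since the activation and end-of-phase rules are the only ones creating or destroying the (never more than one) control-labelled token leaf, the token is confined to one branch at a time and the second control component cycles strictly through $1, 2, \ldots, \numstacks, 1, \ldots$. This pins each maximal token-bearing segment of the run to a phase of a determined stack and yields a legal round-structured $\pds$-run, with $\pdscontroldest$ necessarily reached at a round boundary (matching the finalising rule). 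The delicate point is re-deriving the identity age $= \numstacks(j - j')$ from the control-change count of this arbitrary run -- rather than from a canonical simulation -- so as to conclude that every pop it performs respects $j - \scopebound \leq j'$; checking this, together with the boundary behaviour of the first round and of the rules of $\gtrsscopedrulesbeginend$, completes the equivalence.
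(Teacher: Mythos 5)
Your proposal is correct and follows essentially the same route as the paper: the same encoding with the pushdown control state as a token leaf, a round-by-round replay using $\gtrsscopedrulessim$ topped and tailed by $\gtrsscopedrulesswitch$ and $\gtrsscopedrulesbeginend$, an age-counting argument to match the lifespan $\scopebound \cdot \numstacks$ against the scope bound, and a converse obtained by observing that every run of $\gtrsscoped{\pds}$ is forced into this shape. If anything you are more precise than the paper, which asserts ``fewer than $\scopebound \cdot \numstacks$ control state changes'' and dispatches the converse in one sentence, whereas your exact identity age $= \numstacks(j - j')$ (equality is attained when $j - j' = \scopebound$) and your explicit structural invariants for arbitrary runs make both halves of the equivalence airtight.
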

\begin{proof}
    Given a $\scopebound$-scope-bounded multi-stack pushdown system $\pds =
    \tup{\pdscontrols, \pdsalphabet, \pdsrules{1}, \ldots,
    \pdsrules{\numstacks}}$ we obtain the senescent GTRS $\gtrsscoped{\pds}$
    with lifespan $\scopebound \cdot \numstacks$ as in
    \refdefinition{def:gtrsscoped}.  It is almost direct to obtain from a run
    \[
        \config{\pdscontrolinit}{\pdschainit, \ldots, \pdschainit} \pdstran
        \cdots \pdstran \config{\pdscontroldest}{\pdsstack_1, \ldots,
        \pdsstack_\numstacks}
    \]
    of $\pds$ a run
    \[
        \config{\pdscontrolinit}{\tap{\splitnode}{\tap{\pdschainit}{\stoppednode{1}},
        \ldots, \tap{\pdschainit}{\stoppednode{\numstacks}}}} \pdstran \cdots
        \pdstran \config{\pdscontroldest}{\tap{\splitnode}{\tree_1, \ldots,
        \tree_\numstacks}}
    \]
    of $\gtrsscoped{\pds}$ and vice versa for some $\pdsstack_1, \ldots,
    \pdsstack_\numstacks$ and $\tree_1, \ldots, \tree_\numstacks$.

    To go from $\pds$ to $\gtrsscoped{\pds}$ we divide the run into rounds and
    the rounds into sub-runs 
    \[
        \config{\pdscontrol}{\pdsstack_1, \ldots, \pdsstack_{\idxi-1},
        \pdsstack_{\idxi}, \pdsstack_{\idxi+1}, \ldots, \pdsstack_\numstacks}
        \pdsirun{\idxi} \config{\pdscontrol'}{\pdsstack_1, \ldots,
        \pdsstack_{\idxi-1}, \pdsstack'_{\idxi}, \pdsstack_{\idxi+1}, \ldots,
        \pdsstack_\numstacks} \ .
    \]
    We can obtain by straightforward induction (using rules from
    $\gtrsscopedrulessim$ and the fact we never remove $\sbot$ from a stack) a
    run 
    \[
        \begin{array}{c}
            \config{\tup{\pdscontrol, \idxi}}{\tap{\splitnode}{
                \tap{\pdsstack_1}{\stoppednode{1}},
                \ldots, 
                \tap{\pdsstack_{\idxi-1}}{\stoppednode{\idxi-1}},
                \tap{\pdsstack_{\idxi}}{\pdscontrol},
                \tap{\pdsstack_{\idxi+1}}{\stoppednode{\idxi+1}}, 
                \ldots,
                \tap{\pdsstack_\numstacks}{\stoppednode{\numstacks}}
            }} \\
            \run \\
            \config{\tup{\pdscontrol, \idxi}}{\tap{\splitnode}{
                \tap{\pdsstack_1}{\stoppednode{1}},
                \ldots, 
                \tap{\pdsstack_{\idxi-1}}{\stoppednode{\idxi-1}},
                \tap{\pdsstack'_{\idxi}}{\pdscontrol'},
                \tap{\pdsstack_{\idxi+1}}{\stoppednode{\idxi+1}}, 
                \ldots,
                \tap{\pdsstack_\numstacks}{\stoppednode{\numstacks}}
            }}
        \end{array}
    \]
    of $\gtrsscoped{\pds}$.  Note that this holds true even for empty runs.
    Then by topping and tailing with transitions from $\gtrsscopedrulesswitch$
    we obtain.
    \[
        \begin{array}{c}
            \config{\tup{\pdscontrol, \idxi}}{\tap{\splitnode}{
                \tap{\pdsstack_1}{\stoppednode{1}},
                \ldots, 
                \tap{\pdsstack_{\idxi-1}}{\stoppednode{\idxi-1}},
                \tap{\pdsstack_{\idxi}}{\stoppednode{\idxi}},
                \tap{\pdsstack_{\idxi+1}}{\stoppednode{\idxi+1}}, 
                \ldots,
                \tap{\pdsstack_\numstacks}{\stoppednode{\numstacks}}
            }} \\
            \run \\
            \config{\tup{\pdscontrol', \idxi'}}{\tap{\splitnode}{
                \tap{\pdsstack_1}{\stoppednode{1}},
                \ldots, 
                \tap{\pdsstack_{\idxi-1}}{\stoppednode{\idxi-1}},
                \tap{\pdsstack'_{\idxi}}{\stoppednode{\idxi}},
                \tap{\pdsstack_{\idxi+1}}{\stoppednode{\idxi+1}}, 
                \ldots,
                \tap{\pdsstack_\numstacks}{\stoppednode{\numstacks}}
            }}
        \end{array}
    \]
    where $\idxi' = \brac{\idxi \bmod{\numstacks}} + 1$.

    Thus, combining these runs, from a round 
    \[
        \config{\pdscontrol}{\pdsstack_1, \ldots, \pdsstack_\numstacks}
        \pdsirun{1} \cdots \pdsirun{\numstacks} 
        \config{\pdscontrol'}{\pdsstack'_1, \ldots, \pdsstack'_\numstacks}
    \]
    of $\pds$ we obtain a run
    \[
        \config{\tup{\pdscontrol, 1}}{\tap{\splitnode}{
            \tap{\pdsstack_1}{\stoppednode{1}},
            \ldots, 
            \tap{\pdsstack_\numstacks}{\stoppednode{\numstacks}}
        }} 
        \run 
        \config{\tup{\pdscontrol', 1}}{\tap{\splitnode}{
            \tap{\pdsstack'_1}{\stoppednode{1}},
            \ldots, 
            \tap{\pdsstack'_\numstacks}{\stoppednode{\numstacks}}
        }}
    \]
    of $\gtrsscoped{\pds}$.

    To complete the direction, we must now combine rounds of the run of $\pds$
    into a run of $\gtrsscoped{\pds}$.  To do this, we simply concatenate the
    runs obtained for each round.  We have to be careful that in doing so we
    respect the lifespan $\scopebound \cdot \numstacks$ of $\gtrsscoped{\pds}$.
    Indeed, this is a simple consequence of the scope-bound of $\pds$: since we
    never remove a character that was pushed $\scopebound$ rounds earlier, we
    know that the top character must have been pushed at most $\scopebound$
    rounds earlier, and thus there are fewer than $\scopebound \cdot \numstacks$
    control state changes of $\gtrsscoped{\pds}$ since the birth date of the
    corresponding node in the tree (note also that the leaf node is rewritten
    every $\numstacks$ control state changes, and hence does not become
    fossilised).  We thus obtain an almost complete run of $\gtrsscoped{\pds}$:
    all that remains is to append and concatenate transitions from
    $\gtrsscopedrulesbeginend$, resulting in a run of the required form.

    In the opposite direction one need only observe that all runs of
    $\gtrsscoped{\pds}$ must be of the form constructed during the proof of the
    direction above.  Hence, applying the above reasoning in reverse obtains a
    run of $\pds$ as required.
\end{proof}

\subsection{Reset Petri-Nets}

We show that the coverability and reachability problems for reset Petri-nets can
be reduced to the control state and regular reachability problems for senescent
GTRS respectively.  The idea is that the control state of the reset Petri-net
can be directly encoded by the control state of the senescent GTRS.  To keep
track of the marking for each counter $\ctr$, we maintain a tree with
$\ap{\pnmarking}{\ctr}$ leaf nodes labelled $\ctr$.  Decrementing a counter is
then a case of rewriting a leaf node $\ctr$ to the empty tree, while
incrementing the counter requires adding a new leaf node.  To avoid leaf nodes
becoming fossilised, we allow all leaf nodes to rewrite to themselves in
(almost) every control state.  We can then reset a counter by forcing the GTRS
to change control states $\lifespan$ times without allowing the counter to
refresh; thus, all $\ctr$ nodes become fossilised and the counter is effectively
set to zero.

\subsubsection{Coverability}

We will begin with coverability.  Without loss of generality, we assume that we
aim to cover the zero marking.  Moreover, we assume that in each rule
$\pnrule{\pnstate}{\pnops}{\pnstate'}$ we have at most a single counter
operation in $\pnops$.  In the following definition, we use $\spawnnode$ to
label an open node which may spawn new counter-labelled nodes, $\splitnode$ to
label internal nodes of the tree, and $\deadnode$ to label counter nodes that
have been disactivated by a decrement operation.

\begin{nameddefinition}{def:gtrscover}{$\gtrscover{\pn}$}
    Given a reset Petri-net $\pn = \tup{\pnstates, \counters, \pnrules}$ we
    define the senescent GTRS $\gtrscover{\pn} = \tup{\controls, \alphabet,
    \rules}$ with lifespan $1$ where
    \begin{align*}
        \controls = & \pnstates \cup \setcomp{\killst{\ctr}{\pnstate}}
                                             {\ctr \in \counters \land 
                                              \pnstate \in \pnstates} \\
        \alphabet = & \counters \cup \set{\spawnnode, \splitnode, \deadnode} \\
        \rules = & \setcomp{\rrule{\control}
                                  {\tasingle{\ctr}}
                                  {\control}
                                  {\tasingle{\ctr}}}
                           {\ctr \in \counters \land
                            \control \in \controls \land 
                            \forall \pnstate \in \pnstates . 
                                \control \neq \killst{\ctr}{\pnstate}} \ \cup \\
        & \setcomp{\rrule{\control}
                         {\tasingle{\spawnnode}}
                         {\control}
                         {\tasingle{\spawnnode}}}
                  {\control \in \controls} \ \cup \\
        & \setcomp{\rrule{\pnstate}
                         {\tasingle{\spawnnode}}
                         {\pnstate'}
                         {\tasingle{\tap{\splitnode}
                                        {\ctr, \spawnnode}}}}
                  {\pnrule{\pnstate}
                          {\set{\incr{\ctr}}}
                          {\pnstate'} \in \pnrules} \ \cup \\
        & \setcomp{\rrule{\pnstate}
                         {\tasingle{\ctr}}
                         {\pnstate'}
                         {\tasingle{\deadnode}}}
                  {\pnrule{\pnstate}
                          {\set{\decr{\ctr}}}
                          {\pnstate'} \in \pnrules} \ \cup \\
        & \setcomp{
            \begin{array}{l}
                \rrule{\pnstate}
                      {\tasingle{\spawnnode}}
                      {\killst{\ctr}{\pnstate'}}
                      {\tasingle{\spawnnode}}, \\
                \rrule{\killst{\ctr}{\pnstate'}}
                      {\tasingle{\spawnnode}}
                      {\pnstate'}
                      {\tasingle{\spawnnode}}
            \end{array}
          }
          {\pnrule{\pnstate}
                  {\set{\reset{\ctr}}}
                  {\pnstate'} \in \pnrules} \ \cup \\
        & \setcomp{\rrule{\pnstate}
                         {\tasingle{\spawnnode}}
                         {\pnstate'}
                         {\tasingle{\spawnnode}}}
                  {\pnrule{\pnstate}
                          {\emptyset}
                          {\pnstate'} \in \pnrules} \\
    \end{align*}
\end{nameddefinition}

\begin{namedtheorem}{thm:gtrscover}{Coverability to Control State Reachability}
    The coverability problem for reset Petri-nets can be reduced to the control
    state reachability problem for senescent GTRS.
\end{namedtheorem}
\begin{proof}
    Given a reset Petri-net $\pn = \tup{\pnstates, \counters, \pnrules}$ we
    obtain the senescent GTRS $\gtrscover{\pn} = \tup{\controls, \alphabet,
    \rules}$ with lifespan $1$ from \refdefinition{def:gtrscover}.  We show
    there exists a run
    \[
        \config{\pnstate_1}{\pnmarking_1} \pntran \cdots \pntran
        \config{\pnstate_\runlen}{\pnmarking_\runlen}
    \]
    of $\pn$ where $\pnmarking_1 = \pnzeromarking$ iff there is a run
    \[
        \config{\pnstate_1}{\tree_1} \run \config{\pnstate_2}{\tree_2} \run \cdots
        \run \config{\pnstate_\runlen}{\tree_\runlen}
    \]
    of $\gtrscover{\pn}$ where $\tree_1 = \spawnnode$.  This implies our
    theorem.

    First consider the direction from $\pn$ to $\gtrscover{\pn}$.  We induct
    from $\idxi = 1$ to $\idxi = \runlen$.  We maintain the induction invariant
    that $\tree_\idxi$ has exactly $\ap{\pnmarking_\idxi}{\ctr}$ leaf nodes of
    age $0$ labelled by $\ctr$ for each counter $\ctr \in \counters$.
    Furthermore, there is exactly one leaf node labelled $\spawnnode$, and this
    node has age $1$.  In the base case the invariant is immediate.

    Now assume the invariant for $\idxi$.  Consider the transition 
    \[
        \config{\pnstate_\idxi}{\pnmarking_\idxi} \pnlabtran{\pnops}
        \config{\pnstate_{\idxi+1}}{\pnmarking_{\idxi+1}}
    \]
    of the run of $\pn$.  We show the existence of a run
    \[
        \config{\pnstate_\idxi}{\tree_\idxi} \run
        \config{\pnstate_{\idxi+1}}{\tree_{\idxi+1}}
    \]
    satisfying the invariant.  We perform a case split on $\pnops$.  In the
    following, when we say ``refresh the leaf nodes'' we mean that we execute
    for each counter $\ctr$ and for each leaf node (with age $< 2$) labelled by
    $\ctr$ a rule
    $\rrule{\pnstate_{\idxi+1}}{\tasingle{\ctr}}{\pnstate_{\idxi+1}}{\tasingle{\ctr}}$
    to set the age of each leaf node back to $0$, and finally we fire
    $\rrule{\pnstate_{\idxi+1}}{\tasingle{\spawnnode}}{\pnstate_{\idxi+1}}{\tasingle{\spawnnode}}$
    to set the age of the node labelled $\spawnnode$ to $0$.
    \begin{enumerate}
        \item
            When $\pnops = \emptyset$ we can fire the rule
            $\rrule{\pnstate_\idxi}{\tasingle{\spawnnode}}{\pnstate_{\idxi+1}}{\tasingle{\spawnnode}}$
            and since $\pnmarking_\idxi = \pnmarking_{\idxi+1}$ we simply
            refresh the leaf nodes to obtain the induction invariant.

        \item
            When $\pnops = \set{\incr{\ctr}}$ we fire the rule
            $\rrule{\pnstate_\idxi}{\tasingle{\spawnnode}}{\pnstate_{\idxi+1}}{\tasingle{\tap{\splitnode}{\ctr,
            \spawnnode}}}$ to obtain $\tree_{\idxi+1}$ with the correct number
            of leaf nodes concordant with $\pnmarking_{\idxi+1}$ and then
            refresh the leaf nodes to obtain the invariant.

        \item
            When $\pnops = \set{\decr{\ctr}}$ we know from the invariant that
            since $\ap{\pnmarking_\idxi}{\ctr} > 0$ that there is a leaf
            labelled $\ctr$ of age $0$.  We fire the rule
            $\rrule{\pnstate_\idxi}{\tasingle{\ctr}}{\pnstate_{\idxi+1}}{\tasingle{\deadnode}}$
            to obtain $\tree_{\idxi+1}$ with the correct number of leaf nodes
            concordant with $\pnmarking_{\idxi+1}$ and then refresh the leaf
            nodes to obtain the invariant.

        \item
            When $\pnops = \set{\reset{\ctr}}$ we fire the rule
            $\rrule{\pnstate_\idxi}{\tasingle{\spawnnode}}{\killst{\ctr}{\pnstate_{\idxi+1}}}{\tasingle{\spawnnode}}$
            and then refresh all leaf nodes except those labelled $\ctr$, which
            cannot be reset.  Thus all leaf nodes have age $0$ except those
            labelled $\ctr$ which have age $1$.  Then we fire
            $\rrule{\killst{\ctr}{\pnstate_{\idxi+1}}}{\tasingle{\spawnnode}}{\pnstate_{\idxi+1}}{\tasingle{\spawnnode}}$.
            Note that all leaf nodes labelled $\ctr$ now have age $2$ and are
            fossilised.  Then we refresh all leaf nodes (that are young enough)
            to obtain $\tree_{\idxi+1}$ concordant with $\pnmarking_{\idxi+1}$
            and obtain the invariant.
    \end{enumerate}
    Thus, by induction, we obtain a run as required.

    In the other direction, we take a run 
    \[
        \config{\pnstate_1}{\tree_1} \run \cdots \run
        \config{\pnstate_\runlen}{\tree_\runlen}
    \]
    where $\tree_1 = \spawnnode$ and for all configurations between
    $\config{\pnstate_\idxi}{\tree_\idxi}$ and
    $\config{\pnstate_{\idxi+1}}{\tree_{\idxi+1}}$ (if there are any) there is
    some $\ctr$ such that such that the control state of the configurations is
    of the form $\killst{\ctr}{\pnstate_{\idxi+1}}$.  That all runs are of this
    form follows easily from the definition of $\gtrscover{\pn}$.  We build a
    run
    \[
        \config{\pnstate_1}{\pnmarking_1} \pntran \cdots \pntran
        \config{\pnstate_\runlen}{\pnmarking_\runlen}
    \]
    of $\pn$ where $\pnmarking_1 = \pnzeromarking$.  For technical convenience
    we assume that $\pn$ can perform ``no-op'' transitions that change neither
    the control state nor the marking.  One can simply remove these transitions
    from the final run to obtain a run of $\pn$.

    We induct from $\idxi = 1$ to $\idxi = \runlen$.  We maintain the induction
    invariant $\ap{\pnmarking}{\ctr}$ is greater than or equal to the number of
    leaf nodes of age $\leq 1$ labelled by $\ctr$ for each counter $\ctr \in
    \counters$.  In the base case the invariant is immediate.

    Now consider the first transition on the run
    \[
        \config{\pnstate_\idxi}{\tree_\idxi} \run
        \config{\pnstate_{\idxi+1}}{\tree_{\idxi+1}} \ .
    \]
    There are several cases depending on the rule used by the transition.
    \begin{enumerate}
        \item
            When the rule is
            $\rrule{\pnstate_\idxi}{\tasingle{\ctr}}{\pnstate_{\idxi+1}}{\tasingle{\ctr}}$
            or $\rrule{\pnstate_\idxi}{\tasingle{\spawnnode}}{\pnstate_{\idxi+1}}{\tasingle{\spawnnode}}$
            we have $\pnstate_\idxi = \pnstate_{\idxi+1}$.  Since this
            transition can only set the age of some leaf node of age $\leq 1$ to
            $0$ we extend the run of $\pn$ with a no-op transition which
            maintains the invariant.
            
        \item
            When the rule is
            $\rrule{\pnstate_\idxi}{\tasingle{\spawnnode}}{\pnstate_{\idxi+1}}{\tasingle{\tap{\splitnode}{\ctr,
            \spawnnode}}}$ we have a rule
            $\pnrule{\pnstate_\idxi}{\set{\incr{\ctr}}}{\pnstate_{\idxi+1}}$ by
            definition.  We extend the run of $\pn$ by firing this rule to
            obtain $\pnmarking_{\idxi+1}$.  For all $\ctr' \neq \ctr$ we know
            that the number of leaf nodes labelled $\ctr'$ of age $\leq 1$ can
            only be reduced, and hence $\ap{\pnmarking_{\idxi+1}}{\ctr'}$
            remains larger or equal.  For $\ctr$ the number of leaf nodes
            may increase by at most $1$, but since we fired $\incr{\ctr}$ the
            invariant is maintained.
            
        \item
            When the rule is
            $\rrule{\pnstate_\idxi}{\tasingle{\ctr}}{\pnstate_{\idxi+1}}{\tasingle{\deadnode}}$
            we have a rule
            $\pnrule{\pnstate_\idxi}{\set{\decr{\ctr}}}{\pnstate_{\idxi+1}}$ by
            definition.  We extend the run of $\pn$ by firing this rule to
            obtain $\pnmarking_{\idxi+1}$.  This is possible since we know there
            is at least one leaf node of age $\leq 1$ labelled $\ctr$ and hence
            $\ap{\pnmarking_\idxi}{\ctr} > 0$ by induction.  Then for all $\ctr'
            \neq \ctr$ we know that the number of leaf nodes labelled $\ctr'$ of
            age $\leq 1$ can only be reduced, and hence
            $\ap{\pnmarking_{\idxi+1}}{\ctr'}$ remains larger or equal.  For
            $\ctr$ the number of leaf nodes decreases by at least $1$, and hence
            the invariant is maintained.
            
        \item
            When the rule is
            $\rrule{\pnstate_\idxi}{\tasingle{\spawnnode}}{\killst{\ctr}{\pnstate_{\idxi+1}}}{\tasingle{\spawnnode}}$
            we know by definition that the transition reaching
            $\config{\pnstate_{\idxi+1}}{\tree_{\idxi+1}}$ is via the rule
            $\rrule{\killst{\ctr}{\pnstate_{\idxi+1}}}{\tasingle{\spawnnode}}{\pnstate_{\idxi+1}}{\tasingle{\spawnnode}}$
            and all transitions in between are of the form
            $\rrule{\killst{\ctr}{\pnstate_{\idxi+1}}}{\tasingle{\ctr'}}{\killst{\ctr}{\pnstate_{\idxi+1}}}{\tasingle{\ctr'}}$
            where $\ctr' \neq \ctr$ or
            $\rrule{\killst{\ctr}{\pnstate_{\idxi+1}}}{\tasingle{\spawnnode}}{\killst{\ctr}{\pnstate_{\idxi+1}}}{\tasingle{\spawnnode}}$.
            Hence the number of leaf nodes of age $\leq 1$ in $\tree_{\idxi+1}$
            will be zero with the label $\ctr$ and less than or equal to the
            number in $\tree_\idxi$ when the label is some other $\ctr'$.  Thus,
            by firing the transition
            $\pnrule{\pnstate_\idxi}{\set{\reset{\ctr}}}{\pnstate_{\idxi+1}}$ to
            obtain $\config{\pnstate_{\idxi+1}}{\pnmarking_{\idxi+1}}$ we
            maintain the invariant.

        \item
            When the rule is
            $\rrule{\pnstate_\idxi}{\tasingle{\spawnnode}}{\pnstate_{\idxi+1}}{\tasingle{\spawnnode}}$ the
            number of leaf nodes with any label of age $\leq 1$ can only
            decrease.  By definition we have a rule
            $\pnrule{\pnstate_\idxi}{\emptyset}{\pnstate_{\idxi+1}}$ which we
            fire to obtain $\config{\pnstate_{\idxi+1}}{\pnmarking_{\idxi+1}}$
            and maintain the invariant.
    \end{enumerate}
    Thus we are done.
\end{proof}

\subsubsection{Reachability}

Finally, using a slight extension of the reduction used for coverability, we can
show that the reachability problem reduces to the regular reachability problem
for GTRS.

The proof is by a minor extension of the coverability reduction.  Naively, since
the reduction uses leaf nodes to store the value of the counters, we could
simply test reachability with respect to a tree automaton $\ta$ that accepts
trees where the number of leaf nodes labelled by each counter matches the target
marking of the counter.  However, this does not work since the reset actions are
encoded by forcing leaf nodes to become fossilised.  Hence, the number of leaf
nodes labelled by a counter will not match the actual marking of the counter.

To overcome this problem we make two modifications to the encoding.  First, when
a counter is being reset, we give all leaves labelled by that counter the
opportunity to rewrite themselves to $\deadnode$.  Furthermore, when we have
reached the target control state, we have the possibility to make a
non-deterministic guess that the target marking has also been reached.  At this
point we let all active leaf nodes labelled by a counter $\ctr$ to rewrite
themselves to be labelled by $\targetctr{\ctr}$.  We then define the target tree
automaton $\ta$ to accept trees where the number of leaves labelled
$\targetctr{\ctr}$ matches the target marking value of the counter, and,
moreover, there are no leaves labelled by a counter $\ctr$.  The second
condition ensures that no node labelled $\ctr$ allowed itself to become
fossilised while labelled $\ctr$ (in particular, during a reset, all nodes
rewrote themselves to $\deadnode$).  Similarly, after guessing that the target
configuration had been reached, all nodes labelled $\ctr$ rewrote themselves to
$\targetctr{\ctr}$ thus ensuring that the tree accurately represents the true
counter values.

\begin{nameddefinition}{def:gtrsreach}{$\gtrsreach{\pn}$}
    Given a reset Petri-net $\pn = \tup{\pnstates, \counters, \pnrules}$, let
    $\gtrscover{\pn} = \tup{\controls, \alphabet, \rules}$ be the senescent GTRS
    from \refdefinition{def:gtrscover}.  We define $\gtrsreach{\pn} =
    \tup{\controls', \alphabet', \rules'}$ with lifespan $1$ where
    \[
        \begin{array}{rcl}
            \controls' & = & \controls \cup \set{\varcontroldest} \\
            \alphabet' & = & \alphabet \cup \setcomp{\targetctr{\ctr}}{\ctr \in
            \counters} \\
            \rules' & = & \rules \ \cup \\
            & & \set{\rrule{\controldest}
                           {\tasingle{\spawnnode}}
                           {\varcontroldest}
                           {\tasingle{\spawnnode}}}\ \cup \\
            & & \setcomp{\rrule{\varcontroldest}
                               {\tasingle{\ctr}}
                               {\varcontroldest}
                               {\tasingle{\targetctr{\ctr}}}}
                        {\ctr \in \counters} \ \cup \\
            & & \setcomp{\rrule{\killst{\ctr}{\pnstate}}
                               {\tasingle{\ctr}}
                               {\killst{\ctr}{\pnstate}}
                               {\tasingle{\deadnode}}}
                        {\ctr \in \counters \land \pnstate \in \controls} \\
        \end{array}
    \]
\end{nameddefinition}

\begin{namedtheorem}{thm:gtrsreach}{Reachability to Regular Reachability}
    The reachability problem for reset Petri-nets can be reduced to the regular
    reachability problem for senescent GTRS and is thus undecidable.
\end{namedtheorem}
\begin{proof}
    Given a reset Petri-net $\pn = \tup{\pnstates, \counters, \pnrules}$ we
    obtain the senescent GTRS $\gtrsreach{\pn} = \tup{\controls, \alphabet,
    \rules}$ with lifespan $1$ from \refdefinition{def:gtrsreach}.  Let
    $\controlinit$ be the initial control state, and (without loss of
    generality) let $\pnzeromarking$ be the initial marking.  Then, let
    $\controldest$ be the target control state and $\pnmarking$ be the target
    marking.  

    First, define $\ta$ to be the tree automaton accepting all trees of the
    following form:
    \begin{enumerate}
        \item
            all internal nodes are labelled $\splitnode$, and

        \item
            there is one leaf node labelled $\spawnnode$, and

        \item
            for each $\ctr \in \counters$ there are exactly
            $\ap{\pnmarking}{\ctr}$ leaf nodes labelled $\targetctr{\ctr}$, and

        \item
            all other leaf nodes are labelled $\deadnode$.
    \end{enumerate}
    It is straightforward to construct such a $\ta$.

    We first show that the existence of a run
    \[
        \config{\controlinit}{\pnzeromarking} \pnrun 
        \config{\controldest}{\pnmarking}
    \]
    of $\pn$ implies there is a run
    \[
        \config{\controlinit}{\tree_1} \run \config{\varcontroldest}{\tree_2}
    \]
    of $\gtrscover{\pn}$ where $\tree_1 = \spawnnode$ and $\tree_2 \in
    \ap{\lang}{\ta}$.

    We can use almost the same proof as the same direction in the proof of
    \reftheorem{thm:gtrscover}, with a minor modification to the induction
    hypothesis.  That is, instead of maintaining for each $\ctr$ that there are
    exactly $\ap{\pnmarking_\idxi}{\ctr}$ leaf nodes \emph{of age $0$} labelled
    by $\ctr$, we maintain the stronger property that there are exactly
    $\ap{\pnmarking_\idxi}{\ctr}$ leaf nodes labelled by $\ctr$.  To do so, we
    need only update the handling of the reset transition.  That is, when
    handling $\reset{\ctr}$ we fire, for each leaf node labelled $\ctr$, the
    rule $\rrule{\killst{\ctr}{\pnstate}} {\tasingle{\ctr}}
    {\killst{\ctr}{\pnstate}} {\tasingle{\deadnode}}$.  Thus, there are no leaf
    nodes labelled $\ctr$.

    We therefore obtain a run to $\config{\controldest}{\tree}$ such that
    $\tree$ has all internal nodes labelled by $\splitnode$, one node labelled
    by $\spawnnode$, exactly $\ap{\pnmarking}{\ctr}$ leaf nodes labelled $\ctr$
    for each $\ctr$, and all other leaf nodes labelled $\deadnode$.  To complete
    the direction, we fire the rule
    $\rrule{\controldest}{\tasingle{\spawnnode}}{\varcontroldest}{\tasingle{\spawnnode}}$
    followed by
    $\rrule{\varcontroldest}{\tasingle{\ctr}}{\varcontroldest}{\tasingle{\targetctr{\ctr}}}$
    for each leaf node labelled by some $\ctr$.  Whence, we reach the
    configuration $\config{\varcontroldest}{\tree_2}$ with $\tree_2 \in
    \ap{\lang}{\ta}$ as required.

    The other direction is also similar to the coverability case.  We take a run 
    \[
        \config{\pnstate_1}{\tree_1} \run \cdots \run
        \config{\pnstate_\runlen}{\tree_\runlen} \pntran
        \config{\varcontroldest}{\tree_\runlen} \run
        \config{\varcontroldest}{\tree}
    \]
    where $\pnstate_1 = \controlinit$, $\tree_1 = \spawnnode$, $\pnstate_\runlen
    = \controldest$, $\tree \in \ap{\lang}{\ta}$, and for all configurations
    between $\config{\pnstate_\idxi}{\tree_\idxi}$ and
    $\config{\pnstate_{\idxi+1}}{\tree_{\idxi+1}}$ (if there are any) there is
    some $\ctr$ such that such that the control state of the configurations is
    of the form $\killst{\ctr}{\pnstate_{\idxi+1}}$.  We then build a run
    \[
        \config{\pnstate_1}{\pnmarking_1} \pntran \cdots \pntran
        \config{\pnstate_\runlen}{\pnmarking_\runlen}
    \]
    of $\pn$ where $\pnmarking_1 = \pnzeromarking$.  Again, we assume that $\pn$
    can perform ``no-op'' transitions.  We induct from $\idxi = 1$ to $\idxi =
    \runlen$.  We maintain the induction invariant $\ap{\pnmarking}{\ctr}$ equal
    to the number of leaf nodes labelled by $\ctr$ for each counter $\ctr \in
    \counters$.  In the base case the invariant is immediate.

    Now consider the first transition on the run
    \[
        \config{\pnstate_\idxi}{\tree_\idxi} \run
        \config{\pnstate_{\idxi+1}}{\tree_{\idxi+1}} \ .
    \]
    There are several cases depending on the rule used by the transition.  In
    all cases, it is key to observe that no leaf node labelled by $\ctr$ can
    become fossilised in the run of $\gtrsreach{\pn}$: if such a leaf were to
    become fossilised it would still be present in $\tree$ and thus we could not
    have $\tree \in \ap{\lang}{\ta}$.
    \begin{enumerate}
        \item
            When the rule is
            $\rrule{\pnstate_\idxi}{\tasingle{\ctr}}{\pnstate_{\idxi+1}}{\tasingle{\ctr}}$
            or $\rrule{\pnstate_\idxi}{\tasingle{\spawnnode}}{\pnstate_{\idxi+1}}{\tasingle{\spawnnode}}$
            we have $\pnstate_\idxi = \pnstate_{\idxi+1}$.  Since this
            transition can only set the age of some leaf node of age $\leq 1$ to
            $0$ we extend the run of $\pn$ with a no-op transition which
            maintains the invariant.
            
        \item
            When the rule is
            $\rrule{\pnstate_\idxi}{\tasingle{\spawnnode}}{\pnstate_{\idxi+1}}{\tasingle{\tap{\splitnode}{\ctr,
            \spawnnode}}}$ we have a rule
            $\pnrule{\pnstate_\idxi}{\set{\incr{\ctr}}}{\pnstate_{\idxi+1}}$ by
            definition.  We extend the run of $\pn$ by firing this rule to
            obtain $\pnmarking_{\idxi+1}$ satisfying the invariant. 

        \item
            When the rule is
            $\rrule{\pnstate_\idxi}{\tasingle{\ctr}}{\pnstate_{\idxi+1}}{\tasingle{\deadnode}}$
            we have a rule
            $\pnrule{\pnstate_\idxi}{\set{\decr{\ctr}}}{\pnstate_{\idxi+1}}$ by
            definition.  We extend the run of $\pn$ by firing this rule to
            obtain $\pnmarking_{\idxi+1}$.  This is possible since we know there
            is at least one leaf node $\ctr$ which cannot be fossilised (as
            remarked above) and hence $\ap{\pnmarking_\idxi}{\ctr} > 0$ by
            induction.
            
        \item
            When the rule is
            $\rrule{\pnstate_\idxi}{\tasingle{\spawnnode}}{\killst{\ctr}{\pnstate_{\idxi+1}}}{\tasingle{\spawnnode}}$
            we know by definition that the transition reaching
            $\config{\pnstate_{\idxi+1}}{\tree_{\idxi+1}}$ is via the rule
            $\rrule{\killst{\ctr}{\pnstate_{\idxi+1}}}{\tasingle{\spawnnode}}{\pnstate_{\idxi+1}}{\tasingle{\spawnnode}}$
            and all transitions in between are of the form
            $\rrule{\killst{\ctr}{\pnstate_{\idxi+1}}}{\tasingle{\ctr'}}{\killst{\ctr}{\pnstate_{\idxi+1}}}{\tasingle{\ctr'}}$
            where $\ctr' \neq \ctr$,
            $\rrule{\killst{\ctr}{\pnstate_{\idxi+1}}}{\tasingle{\spawnnode}}{\killst{\ctr}{\pnstate_{\idxi+1}}}{\tasingle{\spawnnode}}$,
            or
            $\rrule{\killst{\ctr}{\pnstate_{\idxi+1}}}{\tasingle{\ctr}}{\killst{\ctr}{\pnstate_{\idxi+1}}}{\tasingle{\deadnode}}$.
            Indeed, we know that a rule of the first form must be fired for all
            leaf nodes labelled by some $\ctr' \neq \ctr$, and a rule of the
            third form must be fired for each leaf labelled $\ctr$.  If this
            were not the case then some leaf labelled by a counter would become
            fossilised, preventing $\tree$ from being accepted by $\ta$.  Hence,
            the number of leaf nodes labelled $\ctr$ is zero and the number
            labelled by some other counter $\ctr'$ is the same as in
            $\tree_\idxi$.  Thus, we have the invariant.

        \item
            When the rule is
            $\rrule{\pnstate_\idxi}{\tasingle{\spawnnode}}{\pnstate_{\idxi+1}}{\tasingle{\spawnnode}}$
            then by definition we have a rule
            $\pnrule{\pnstate_\idxi}{\emptyset}{\pnstate_{\idxi+1}}$ which we
            fire to obtain $\config{\pnstate_{\idxi+1}}{\pnmarking_{\idxi+1}}$
            and maintain the invariant.
    \end{enumerate}
    Note that the rules
    $\rrule{\controldest}{\tasingle{\spawnnode}}{\varcontroldest}{\tasingle{\spawnnode}}$
    and
    $\rrule{\varcontroldest}{\tasingle{\ctr}}{\varcontroldest}{\tasingle{\targetctr{\ctr}}}$
    can only be fired during the final stage of the run reaching
    $\config{\varcontroldest}{\tree}$.  Moreover, observe that in the final
    phase, in order for $\tree$ to be accepted, all leaf nodes labelled $\ctr$
    must have been rewritten to $\targetctr{\ctr}$ and hence their number must
    match $\pnmarking$.  Thus, in $\tree_\runlen$ the number of leaf nodes
    labelled by $\ctr$ must have matched $\pnmarking$.  Thus, upon reaching
    $\config{\control_\runlen}{\pnmarking_\runlen}$ we have a run of the
    Petri-net as required.
\end{proof}

\section{Reachability Analysis of Senescent GTRSs}

For the following section, fix a senescent GTRS $\gtrs = \tup{\controls,
\alphabet, \rules}$ with lifespan $\lifespan$.  Furthermore, fix an ordering
$\rruler_1, \ldots, \rruler_\numrules$ on the rules in $\rules$.  Thus, we will
use each rule $\rruler \in \rules$ as an index (that is, we use $\rruler$
instead of $\idxi$ when $\rruler = \rruler_\idxi$).  Notice that $\numrules$
denotes the number of rules in $\gtrs$.

Without loss of generality, we assume that all tree automata appearing in the
rules $\rules$ of $\gtrs$ accept at least one tree (rules not satisfying this
condition can be discarded since they cannot be applied).

We show, using ideas from~\cite{lTP12}, that the control state reachability
problem for senescent GTRSs is decidable and is $\acker$-complete. 

\begin{namedtheorem}
      {thm:ackermanncomplete}
      {Ackermann-Completeness of Reachability} 
    It is the case that the control state reachability problem for senescent
    GTRS is $\acker$-complete.
\end{namedtheorem}
\begin{proof}
    $\acker$-hardness follows from \reftheorem{thm:gtrscover} and the
    $\acker$-hardness of the coverability problem for reset
    Petri-nets~\cite{S02,S10}.  

    The upper bound is obtained in the following sections.  In outline, given a
    senescent GTRS $\gtrs$ we obtain from Definition~\ref{def:pnreach} a reset
    Petri-net $\pnreach{\gtrs}$ triply-exponential in the size of $\gtrs$.  From
    Lemma~\ref{lem:reachcorrect} we know that we can decide the control state
    reachability problem for $\gtrs$ via a coverability problem over
    $\pnreach{\gtrs}$.  Since $\acker$ is closed under all
    primitive-recursive reductions, we have our upper bound as required.
\end{proof}

\subsection{Independent Sub-Tree Interfaces}

Our algorithm will non-deterministically construct a representation of a run of
$\gtrs$ witnessing the reachability property.  A key idea is that during the
guessed run, certain sub-trees may operate independently of one another.

That is, suppose we have a tree consisting of a root node $\tnode$ with a left
sub-tree $\tree_1$ and a right sub-tree $\tree_2$.  If it is the case that,
during the run, the complete tree rooted at $\tnode$ is never matched by the LHS
of a rewrite rule, then we may say that $\tree_1$ and $\tree_2$ develop
independently: any rewrite rule applied during the run either matches a sub-tree
of $\tree_1$ or a sub-tree of $\tree_2$, but never depends on both the contents
of $\tree_1$ and $\tree_2$.  Thus, the interaction between $\tree_1$ and
$\tree_2$ is only via the changes to the control state.

When a rewrite rule $\rruler$ is applied to the tree, rewriting a sub-tree
$\tree_1$ to $\tree_2$, there are two possibilities: either $\tree_2$ develops
independently of the rest of the tree for the remainder of the run, or $\tree_2$
appears as a strict sub-tree of a later rewrite rule application.  In the former
case, a new independent sub-tree has been generated, while, in the later, no new
independent tree has been created.  When $\tree_2$ is independent we say that an
independent sub-tree has been generated via rule $\rruler$.

Adapting the \emph{thread interfaces} introduced by La Torre and Parlato in
their analysis of scope-bounded multi-stack pushdown systems~\cite{lTP12}, we
define a notion of \emph{independent sub-tree interfaces}, which we will refer
to simply as \emph{interfaces}.

\begin{definition}[Independent Sub-Tree Interfaces]
    For a senescent GTRS with lifespan $\lifespan$ and rewrite rule-set
    $\set{\rruler_1, \ldots, \rruler_\numrules}$, an \emph{independent sub-tree
    interface} $\istiseq$ is defined to be a sequence $\tup{\control_1, \bit_1,
    \istitrees_1}\ldots\tup{\control_\num, \bit_\num, \istitrees_\num}$ of
    triples in $\controls \times \set{0,1} \times \nats^\numrules$ with $\num
    \leq \lifespan$.
\end{definition}

An interface $\istiseq$ describes the external effect of the evolution of a
sub-tree over up to $\lifespan$ control state changes.  A sequence $\istiseq =
\tup{\control_1, \bit_1, \istitrees_1}, \ldots, \tup{\control_\num, \bit_\num,
\istitrees_\num}$ describes the sequence of control state changes $\control_1,
\ldots, \control_\num$ witnessed by the sub-tree before it becomes fossilised
(or ceases to change for the remainder of the run).  The component $\bit_\idxi$
indicates whether the subtree effected the control state change (via the
application of a rewrite rule modifying both the tree and the control state) or
whether the control state change is supposed to have been made by an external
independent sub-tree.

The final component $\istitrees_\idxi = \tup{\istigen^{\rruler_1}_\idxi, \ldots,
\istigen^{\rruler_\numrules}_\idxi}$ indicates how many new independent
sub-trees are generated during the lifespan of the sub-tree.  That is, during
the run described by $\istiseq$, we have $\istigen^{\rruler}_\idxi$ independent
sub-trees generated using rule $\rruler$ after the control state has been
changed to $\control_\idxi$ but before the change to control state
$\control_{\idxi+1}$ takes place.  Note if a rule both changes the control state
and generates a new independent sub-tree, the sub-tree is considered to have
been generated \emph{after} the control state has changed.

\subsection{Examples of Independent Sub-Tree Interfaces}

In the following, by abuse of notation, let
$\rrule{\control_1}{\tree_1}{\control_2}{\tree_2}$ for given control states
$\control_1, \control_2$ and trees $\tree_1, \tree_2$ denote a rule
$\rrule{\control_1}{\ta_1}{\control_2}{\ta_2}$ where $\ap{\lang}{\ta_1} =
\set{\tree_1}$ and $\ap{\lang}{\ta_2} = \set{\tree_2}$.  Also, recall $\zerovec$
to denotes the tuple $\tup{0, \ldots, 0}$ and $\onevec{\idxi}$ denotes the tuple
where all components are $0$ except the $\idxi$th, which is $1$.

Consider a senescent GTRS with rules $\set{\rruler_1, \ldots, \rruler_5}$ where
\begin{align*}
    \rruler_1 &= \rrule{\control_1}{\tree_0}{\control_2}{\tap{\tnode}{\tree_1,
    \tree_2}} \\
    \rruler_2 &= \rrule{\control_2}{\tree_1}{\control_2}{\tree^1_1} \\
    \rruler_3 &= \rrule{\control_2}{\tree_2}{\control_3}{\tree^1_2} \\
    \rruler_4 &= \rrule{\control_3}{\tree^1_2}{\control_4}{\tree^2_2} \\
    \rruler_5 &= \rrule{\control_4}{\tree^1_1}{\control_5}{\tree^2_1} \ .
\end{align*}
Now consider the run formed from $\rruler_1, \ldots, \rruler_5$ in sequence,
\begin{align*} 
    \config{\control_1}{\tree_0} 
    &\tran \config{\control_2}{\tap{\tnode}{\tree_1, \tree_2}}
    \tran \config{\control_2}{\tap{\tnode}{\tree^1_1, \tree_2}} \\
    &\tran \config{\control_3}{\tap{\tnode}{\tree^1_1, \tree^1_2}} 
    \tran \config{\control_4}{\tap{\tnode}{\tree^1_1, \tree^2_2}} 
    \tran \config{\control_5}{\tap{\tnode}{\tree^2_1, \tree^2_2}} \ .
\end{align*}

Below we present several alternative decompositions of the above run into
interfaces.  In the first decomposition, we take a lifespan of $5$.  In this
case, we may simply have the decomposition 
\[
    \begin{array}{lllll}
        \tup{\control_1, 0, \zerovec}, 
        &\tup{\control_2, 1, \zerovec}, 
        &\tup{\control_3, 1, \zerovec}, 
        &\tup{\control_4, 1, \zerovec}, 
        &\tup{\control_5, 1, \zerovec}
    \end{array}
\]
indicating that no new independent sub-trees are considered to have been
generated, and thus, all control state changes are effected by the evolution of
the original tree.  Note that $\bit_1 = 0$ since the control state was initially
$\control_1$.

However, the above run can also be decomposed if the lifespan is set to $4$.
One such decomposition can be obtained by considering the application of the
rule $\rruler_1$ to generate $\tap{\tnode}{\tree_1, \tree_2}$, where
$\tap{\tnode}{\tree_1, \tree_2}$ is a new independent sub-tree.  Using
$\exttree$ to denote an independent sub-tree that has been generated, we can
decompose the run into two runs
\begin{align*} 
    \config{\control_1}{\tree_0} 
    &\tran \config{\control_2}{\exttree}
\end{align*}
and the run of the generated independent sub-tree
\begin{align*} 
    \config{\control_2}{\tap{\tnode}{\tree_1, \tree_2}}
    &\tran \config{\control_2}{\tap{\tnode}{\tree^1_1, \tree_2}} \\
    &\tran \config{\control_3}{\tap{\tnode}{\tree^1_1, \tree^1_2}} 
    \tran \config{\control_4}{\tap{\tnode}{\tree^1_1, \tree^2_2}} 
    \tran \config{\control_5}{\tap{\tnode}{\tree^2_1, \tree^2_2}} \ .
\end{align*}
These two runs give rise to two independent sub-tree interfaces that can be
combined to represent the original run.
\[
    \begin{array}{lllll}
        \tup{\control_1, 0, \zerovec}, 
        &\tup{\control_2, 1, \onevec{1}} \\

        &\tup{\control_2, 0, \zerovec}, 
        &\tup{\control_3, 1, \zerovec}, 
        &\tup{\control_4, 1, \zerovec}, 
        &\tup{\control_5, 1, \zerovec}
    \end{array}
\]
The upper interface comes from the first part of the decomposed run, and the
lower interface represents the second part.  Note, the lifespan of $4$
is respected and $\onevec{1}$ indicates that an independent sub-tree has been
generated as the RHS of $\rruler_1$.

Finally, we observe that the evolution of $\tree^1_1$ and $\tree^1_2$ are
independent.  Hence, we could be more eager in our generation of independent
sub-trees.  That is, we can decompose the original run into the following runs.
\[
    \config{\control_1}{\tree_0} 
    \tran \config{\control_2}{\exttree}
\]
and
\[
    \config{\control_2}{\tap{\tnode}{\tree_1, \tree_2}}
    \tran \config{\control_2}{\tap{\tnode}{\exttree, \tree_2}} 
    \tran \config{\control_3}{\tap{\tnode}{\exttree, \exttree}} 
\]
where the evolution of $\tree^1_1$ is given by
\[
    \config{\control_2}{\tree^1_1} 
    \tran \config{\control_3}{\tree^1_1} 
    \tran \config{\control_4}{\tree^1_1} 
    \tran \config{\control_5}{\tree^2_1} 
\]
and the evolution of $\tree^1_2$ by
\[
    \config{\control_3}{\tree^1_2} 
    \tran \config{\control_4}{\tree^2_2} 
    \tran \config{\control_5}{\tree^2_2} \ .
\]
Note, the control state change to $\control_5$ was effected by the evolution of
$\tree^1_1$ and the change to $\control_4$ by the evolution of $\tree^1_2$.  The
respective interfaces for the above runs, aligned to suggest how they combine,
are
\[
    \begin{array}{lllll}
        \tup{\control_1, 0, \zerovec},
        &\tup{\control_2, 1, \onevec{1}} \\
        &\tup{\control_2, 0, \onevec{2}},
        &\tup{\control_3, 1, \onevec{4}} \\
        &\tup{\control_2, 0, \zerovec},
        &\tup{\control_3, 0, \zerovec},
        &\tup{\control_4, 0, \zerovec},
        &\tup{\control_5, 1, \zerovec} \\
        & &\tup{\control_3, 0, \zerovec},
        &\tup{\control_4, 1, \zerovec},
        &\tup{\control_5, 0, \zerovec}  \ .
    \end{array}
\]
Each column represents a single control state change.  It is important that in
each column there is exactly one independent sub-tree for which $\bit_\idxi =
1$.  That is, each control state change is performed by exactly one independent
sub-tree.

\subsection{Representing Interfaces}

In this section we show that interfaces $\istiseq$ can be generated as the
Parikh image of regular automata.  

For each rule $\rruler \in \rules$ and sequence $\tup{\control_1, \bit_1},
\ldots, \tup{\control_\num, \bit_\num}$ with $\num \leq \lifespan$ we will build
a regular automaton $\rega$ over the alphabet 
\[
    \istigtrsoalphabet = \setcomp{\tup{\rruler, \idxi}}{\rruler \in \rules \land
    1 \leq \idxi \leq \num} \ .
\]
By abuse of notation, for a run over a word $\word \in \istigtrsalphabet^\ast$,
we define 
\[
    \parikh{\word} = \tup{\istitrees_1, \ldots, \istitrees_\num}
\]
where for all $1 \leq \idxi \leq \num$ we have $\istitrees_\idxi =
\tup{\istigen^{\rruler_1}_\idxi, \ldots, \istigen^{\rruler_\numrules}}$ and
$\istigen^\rruler_\idxi = \countof{\word}{\tup{\rruler, \idxi}}$.  This
generalises to $\parikh{\rega}$ in the natural way.

In particular, we build $\rega$ such that, if $\tup{\istitrees_1, \ldots,
\istitrees_\num}$ is an element of $\parikh{\rega}$ then there is an independent
sub-tree interface 
\[
    \tup{\control_1, \bit_1, \istitrees_1}, \ldots, \tup{\control_\num,
    \bit_\num, \istitrees_\num}
\]
beginning with a tree $\tree \in \ap{\lang}{\ta_1}$ where $\rruler =
\rrule{\control}{\ta}{\control_1}{\ta_1}$.

We obtain the above regular automaton as follows.  First, from $\gtrs$,
$\rruler$ and $\tup{\control_1, \bit_1}, \ldots, \tup{\control_\num, \bit_\num}$
we build a weakly extended sGTRS $\istigtrs$ that simulates a run of $\gtrs$ from a subtree
appearing on the RHS of $\rruler$, passing precisely the control states
$\control_1, \ldots, \control_\num$ and only effecting a control state change
with a rule in $\gtrs$ if $\bit_\idxi = 1$ (else $\istigtrs$ guesses the control
state change).  The output of this sGTRS gives us information on the independent
sub-trees created during the run.  Then, using \reflemma{lem:wgtrsrega} we
obtain a regular automaton as required.

\begin{definition}[$\istigtrs$] 
    Given a senescent GTRS $\gtrs = \tup{\controls, \alphabet, \rules}$ with
    lifespan $\lifespan$, an $\rruler \in \rules$ and sequence $\tup{\control_1,
    \bit_1}, \ldots, \tup{\control_\num, \bit_\num}$ with $\num \leq \lifespan$
    we construct a weakly extended sGTRS $\istigtrs = \tup{\istigtrscontrols,
    \istigtrsalphabet, \istigtrsoalphabet, \istigtrsrules}$ where, letting $\ta$
    be the tree automaton on the RHS of $\rruler$,
    \begin{align*}        
        \istigtrscontrols =& \set{\tup{\control_1, \bit_1, 1}, \ldots,
        \tup{\control_\num, \bit_\num, \num}} \\
        \istigtrsalphabet =& \alphabet \uplus \set{\exttree, \istigtrsinitlab}
        \\ 
        \istigtrsoalphabet =& \setcomp{\tup{\rruler, \idxi}}{\rruler \in \rules
        \land 1 \leq \idxi \leq \num} \\
    \end{align*}
    and $\istigtrsrules$ is the set 
    \[
    \begin{array}{l}
        \set{\orrule{\tup{\control_1, \bit_1,
        1}}{\tasingle{\istigtrsinitlab}}{\empsym}{\tup{\control_1, \bit_1,
        1}}{\ta}} \ \cup \\
        \setcomp{\orrule{\tup{\control_\idxi, \bit_\idxi,
        \idxi}}{\ta_1}{\empsym}{\tup{\control_\idxi, \bit_\idxi,
        \idxi}}{\ta_2}}{\begin{array}{c}1 \leq \idxi \leq \num\ \land \\
            \rrule{\control_\idxi}{\ta_1}{\control_\idxi}{\ta_2} \in
            \rules\end{array}} \ \cup \\
        \setcomp{\orrule{\tup{\control_\idxi, \bit_\idxi,
        \idxi}}{\ta_1}{\empsym}{\tup{\control_{\idxi+1}, 1,
        \idxi+1}}{\ta_2}}{\begin{array}{c}1 \leq \idxi < \num \land
            \bit_{\idxi+1} = 1 \ \land \\
            \rrule{\control_\idxi}{\ta_1}{\control_{\idxi+1}}{\ta_2} \in
            \rules\end{array}} \ \cup \\
        \setcomp{\orrule{\tup{\control_\idxi, \bit_\idxi,
        \idxi}}{\tasingle{\cha}}{\empsym}{\tup{\control_{\idxi+1}, 0,
        \idxi+1}}{\tasingle{\cha}}}{\begin{array}{c} 1 \leq \idxi < \num \land
        \bit_{\idxi+1} = 0 \ \land \\ \cha \in \istigtrsalphabet \land \text{$\cha$
        has arity $0$} \end{array}} \ \cup \\
        \setcomp{\orrule{\tup{\control_\idxi, \bit_\idxi,
        \idxi}}{\ta_1}{\tup{\rruler, \idxi}}{\tup{\control_\idxi, \bit_\idxi,
        \idxi}}{\tasingle{\exttree}}}{\begin{array}{c}1 \leq \idxi \leq \num\
            \land \\ \rruler =
            \rrule{\control_\idxi}{\ta_1}{\control_\idxi}{\ta_2} \in
            \rules\end{array}} \ \cup \\
        \setcomp{\orrule{\tup{\control_\idxi, \bit_\idxi,
        \idxi}}{\ta_1}{\tup{\rruler, \idxi+1}}{\tup{\control_{\idxi+1}, 1,
        \idxi+1}}{\tasingle{\exttree}}}{\begin{array}{c}1 \leq \idxi < \num
            \land \bit_{\idxi+1} = 1 \ \land \\ \rruler =
            \rrule{\control_\idxi}{\ta_1}{\control_{\idxi+1}}{\ta_2} \\ \land \
            \rruler \in \rules\end{array}} 
    \end{array} 
    \]
    and both $\exttree$ and $\istigtrsinitlab$ have arity $0$.
\end{definition}

In the above definition, we use $\istigtrsinitlab$ to be the starting label of
$\istigtrs$, and the first set of rules contains only the rule generating a
(independent sub-)tree that could have been created by rule $\rruler$.  The
second set of rules simply simulates the rules of $\gtrs$ that do not change the
control state.  The next two sets of rules take care of the cases where either
the control state change is effected by the independent sub-tree under
consideration ($\bit_\idxi = 1$), or whether the control state change is
effected by another (independent) part of the tree ($\bit_\idxi = 0$).  The
final two sets of rules take care of the generation of new independent
sub-trees.  That is, when applying a rule of $\gtrs$, instead of the new tree
appearing in the current tree, a place-holder tree (accepted by
$\tasingle{\exttree}$) is created.  Note, since $\exttree$ is a new label, the
place-holder sub-tree cannot be rewritten during the remainder a run of
$\istigtrs$. 

Using $\istigtrs$ we are able to build a regular representation of the
independent sub-trees generated during a run matching a given interface.

\begin{nameddefinition}{def:istirega}{$\istirega$}
    Given a senescent GTRS $\gtrs = \tup{\controls, \alphabet, \rules}$ with
    lifespan $\lifespan$, an $\rruler \in \rules$ and sequence $\tup{\control_1,
    \bit_1}, \ldots, \tup{\control_\num, \bit_\num}$ with $\num \leq \lifespan$
    we construct $\istigtrs$ as above, and then via \reflemma{lem:wgtrsrega} a
    regular automaton $\istirega$ such that there is a run 
    \[
        \config{\tup{\control_1, \bit_1, 1}}{\tree_1} \orun{\word}
        \config{\tup{\control_\num, \bit_\num, \num}}{\tree_2}
    \]
    where $\tree_1 \in \ap{\lang}{\tasingle{\istigtrsinitlab}}$ and $\tree_2$ is
    any tree iff $\parikh{\word} \in \parikh{\istirega}$.
\end{nameddefinition}

\subsection{Reduction to Reset Petri-Nets}

\subsubsection{Interface Summaries}

We reduce the control state reachability problem for senescent GTRSs to the
coverability problem for reset Petri nets.  To do so, we construct a reset Petri
net whose control states hold a sequence $\tup{\control_1,
\bit_1}\ldots\tup{\control_\num, \bit_\num}$ where $\num \leq \lifespan$.  It
will also have a set of counters
\[
    \pnreachctrs{\gtrs} = \setcomp{\istictr{\rruler}{\idxi}}{1 \leq \idxi \leq
    \num} \ .
\]

We will refer to this sequence as an \emph{interface summary}.  Such a summary
will summarise the combination of a number of interfaces.  Each $\control_\idxi$
indicates that the $\idxi$th next control state is $\control_\idxi$ (with
$\control_1$ being the current control state), and $\bit_\idxi$ will indicate
whether an independent sub-tree has already been generated to account for the
control state change.  The value of each counter $\istictr{\rruler}{\idxi}$
indicates how many independent sub-trees are generated using rule $\rruler$
between the $\idxi$th and $(\idxi+1)$th control state by the combination of the
thread interfaces in the summary.

There are two operations we perform on the interface summary: addition and
resolution.

\paragraph{Addition}

Addition refers to the addition of a thread interface to a given summary.
Suppose we have a summary $\config{\tup{\control_1, \bit_1}, \ldots,
\tup{\control_\num, \bit_\num}}{\pnmarking}$ where $\pnmarking$ gives the
valuation of the counters.  Now suppose we want to add to the summary the effect
of the evolution of an independent sub-tree with interface 
\[
    \tup{\control'_1, \bit'_1, \istitrees_1}\ldots\tup{\control'_{\num'},
    \bit'_{\num'}, \istitrees_{\num'}} \ .
\]
We first require the two sequences $\tup{\control_1,
\bit_1}\ldots\tup{\control_\num, \bit_\num}$ and $\tup{\control'_1,
\bit'_1}\ldots\tup{\control'_{\num'}, \bit'_{\num'}}$ to be \emph{compatible}.
There are two conditions for this.
\begin{enumerate}
    \item
        They must agree on their control states.  That is, for all $1 \leq \idxi
        \leq \minimumof{\num, \num'}$ we have $\control_\idxi =
        \control'_\idxi$.

    \item
        At most one independent sub-tree can effect a control state change. That
        is, for all $1 \leq \idxi \leq \minimumof{\num, \num'}$ we do not have
        $\bit_\idxi = \bit'_\idxi = 1$.
\end{enumerate}
We first define the addition only over the states and bits, that is, we define
\[
    \intadd{\tup{\control_1, \bit_1}, \ldots, \tup{\control_\num,
    \bit_\num}}{\tup{\control'_1, \bit'_1}\ldots\tup{\control'_{\num'},
    \bit'_{\num'}}} 
\]
when the two are compatible to be,
\begin{enumerate}
    \item
        when $\num \leq \num'$, 
        \[
            \tup{\control_1, \bit''_1}\ldots\tup{\control_\num,
            \bit''_\num}\tup{\control'_{\num+1},
            \bit'_{\num+1}}\ldots\tup{\control'_{\num'}, \bit'_{\num'}}
            \]
    \item
        and when $\num > \num'$, 
        \[
            \tup{\control_1, \bit''_1}\ldots\tup{\control_{\num'},
            \bit''_{\num'}}\tup{\control_{\num'+1},
            \bit_{\num'+1}}\ldots\tup{\control_{\num}, \bit_{\num}}
        \]
\end{enumerate}
where in both cases $\bit''_\idxi = 1$ if $\bit_\idxi = 1$ or $\bit'_\idxi = 1$,
and otherwise $\bit''_\idxi = 0$.

Then, the addition, 
\[
    \intadd{\config{\pnsumseq}{\pnmarking}}{\tup{\control'_1, \bit'_1,
    \istitrees_1}\ldots\tup{\control'_{\num'}, \bit'_{\num'},
    \istitrees_{\num'}}}
\]
when the two are compatible is $\config{\pnsumseq'}{\pnmarking'}$ where
\[
    \pnsumseq' = \intadd{\pnsumseq}{\tup{\control'_1,
    \bit'_1}\ldots\tup{\control'_{\num'}, \bit_{\num'}}}
\]
and for all $\rruler$ and $\idxi$
\[
    \ap{\pnmarking'}{\istictr{\rruler}{\idxi}} = 
    \begin{cases}
        \ap{\pnmarking}{\istictr{\rruler}{\idxi}} + \istigen^\rruler_\idxi  &
        \idxi \leq \num' \\
        \ap{\pnmarking}{\istictr{\rruler}{\idxi}} & \idxi > \num'  
    \end{cases}
\]
That is, we add the sub-trees generated to the appropriate counters of the Petri
net.

\paragraph{Resolution}

Addition of thread interfaces to the summary handles the evolution of new
independent sub-trees generated on the run between the current control state
$\control_1$ and the next $\control_2$.  Once all such trees have been accounted
for, we can perform \emph{resolution}.  That is, we remove the completed first
round from the summary.  Note that this can only be done if $\bit_2 = 1$, that
is, some independent sub-tree has taken responsibility for the change to the
next control state $\control_2$.

We thus define 
\[
    \intres{\config{\tup{\control_1, \bit_1}, \tup{\control_2, \bit_2}, \ldots,
    \tup{\control_\num, \bit_\num}}{\pnmarking}} = \config{\tup{\control_2,
    \bit_2}, \ldots, \tup{\control_\num, \bit_\num}}{\pnmarking'}
\]
when $\bit_2 = 1$ and where 
\[
    \ap{\pnmarking'}{\istictr{\rruler}{\idxi}} =
    \ap{\pnmarking}{\istictr{\rruler}{\idxi+1}}
\]
for all $1 \leq \idxi < \num$, and 
\[
    \ap{\pnmarking'}{\istictr{\rruler}{\num}} = 0 \ .
\]

\subsubsection{Reduction to Coverability}

We define a reset Petri-net that has a positive solution to the coverability
problem iff the control state reachability problem for the given senescent GTRS
$\gtrs$ is also positive.

\label{sec:rule1} 
For technical convenience, we assume $\rruler_1 =
\rrule{\controlinit}{\ta_1}{\controlinit}{\ta_2}$ where $\ta_1$ accepts no trees
and $\ta_2$ accepts only the initial tree $\treeinit$.  The assumption of such a
rule does not allow more runs of $\gtrs$ since $\ta_1$ matches no trees.  

\paragraph{Initial Configuration}

The Petri-net begins in a configuration
\[
    \config{\tup{\controlinit, 1}}{\pninitmarking}
\]
where 
\[
    \ap{\pninitmarking}{\istictr{\rruler}{\idxi}} = 
    \begin{cases}
        1 & \text{if $\idxi = 1$ and $\rruler = \rruler_1$} \\ 
        0 & \text{otherwise} \ . 
    \end{cases}
\]
This means that the Petri net is simulating a configuration of the senescent
GTRS where the control state is $\controlinit$ and the only independent sub-tree
that can be generated is the $\treeinit$.

\paragraph{Addition of New Interfaces}

The Petri net can then begin simulating execution as follows.  It will
non-deterministically guess the independent sub-tree interface of the initial
tree during a satisfying run of the reachability problem.  It will do this by
subtracting $1$ from the variable $\istictr{\rruler_1}{1}$ then guessing a
sequence $\tup{\control_1, \bit_1} \ldots \tup{\control_\num, \bit_\num}$.
Since there are only a finite number of possibilities for such a sequence, the
guess can be made in the control state.  To fully guess an interface, however,
the Petri net must also guess the values of $\istitrees_\idxi$ for each $1 \leq
\idxi \leq \num$.  To do this it will simulate (in its control state) the
automaton $\rega$ generated by \refdefinition{def:istirega}, but, instead of
outputting a symbol $\tup{\rruler, \idxi}$, it will increment the counter
$\istictr{\rruler}{\idxi}$.

In the manner described above, the Petri net can update its control state and
counter values to perform an addition
\[
    \intadd{\config{\tup{\control_1, \bit_1}, \ldots, \tup{\control_\num,
    \bit_\num}}{\pnmarking}}{\tup{\control'_1, \bit'_1,
    \istitrees_1}\ldots\tup{\control'_{\num'}, \bit'_{\num'},
    \istitrees_{\num'}}}
\]
for the interface summary it is currently storing in its control state and
counters, and a guessed new interface generated from some available independent
sub-tree.

\paragraph{Resolving The Current Interface Summary}

Given a configuration 
\[ 
    \config{\tup{\control_1, \bit_1}, \ldots, \tup{\control_\num,
    \bit_\num}}{\pnmarking} 
\] 
the Petri net can non-deterministically decide whether to add another interface
to the summary, or whether (if $\bit_2 = 1$) perform a resolution step.

To perform resolution the Petri net first updates the control state to obtain
the sequence $\tup{\control_2, \bit_2}, \ldots, \tup{\control_\num, \bit_\num}$
(that is, deletes the first tuple), and then updates its marking to 
\[
    \ap{\pnmarking'}{\istictr{\rruler}{\idxi}} =
    \ap{\pnmarking}{\istictr{\rruler}{\idxi+1}}
\]
for all $1 \leq \idxi < \num$, and 
\[
    \ap{\pnmarking'}{\istictr{\rruler}{\num}} = 0 \ .
\]
It does this incrementally from $\idxi = 1$ to $\idxi = \num$.  For each given
$\idxi$ it first uses reset transitions to zero each counter
$\istictr{\rruler}{\idxi}$.  Then, when $\idxi < \num$, it performs a loop for
each counter, decrementing $\istictr{\rruler}{\idxi+1}$ and incrementing
$\istictr{\rruler}{\idxi}$.  It repeats this loop a non-deterministic number of
times before moving to the next counter.

Note that this is not a faithful implementation of the resolution operation
since the Petri net cannot ensure that it transfers $\istictr{\rruler}{\idxi+1}$
to $\istictr{\rruler}{\idxi}$ in its entirety, merely that
$\istictr{\rruler}{\idxi} \leq \istictr{\rruler}{\idxi + 1}$.  However,
``forgetting'' the existence of independent sub-trees merely restricts the
number of runs and does not add new behaviours.  Hence such an inaccuracy is
benign (since it is still possible to transfer all sub-trees).  The reset
operation is used to ensure that no leakage occurs between each $\idxi$.

\paragraph{Formal Definition}

We give the formal definition of the reset Petri net $\pnreach{\gtrs}$ that
simulates $\gtrs$ with respect to the control state reachability problem.  For
each $\pnsumseq = \tup{\control_1, \bit_1}\ldots\tup{\control_\num, \bit_\num}$
with $1 \leq \num \leq \lifespan$ and rule $\rruler \in \rules$, let
$\istifulla{\rruler}{\pnsumseq} = \tup{\ististates{\rruler}{\pnsumseq},
\istigtrsoalphabet, \istitrans{\rruler}{\pnsumseq},
\istiinitst{\rruler}{\pnsumseq}, \set{\istifinst{\rruler}{\pnsumseq}}}$ be the
regular automaton obtained via \refdefinition{def:istirega} and without loss of
generality assume $\istifulla{\rruler}{\pnsumseq}$ has the unique initial state
$\istiinitst{\rruler}{\pnsumseq}$ and final state
$\istifinst{\rruler}{\pnsumseq}$.  We assume for all $\rruler$ and $\pnsumseq$
that $\istifulla{\rruler}{\pnsumseq}$ have disjoint state sets.

\begin{nameddefinition}{def:pnreach}{$\pnreach{\gtrs}$}
    Given the senescent GTRS $\gtrs$ (with notation and assumptions as described
    in this section), we define the reset Petri net $\pnreach{\gtrs} =
    \tup{\pnreachstates{\gtrs}, \pnreachctrs{\gtrs}, \pnreachrules{\gtrs}}$
    where $\pnreachctrs{\gtrs}$ is defined above and
    \begin{align*}
        \pnsumseqs =& \setcomp{\tup{\control_1, \bit_1} \ldots
        \tup{\control_\num, \bit_\num} \in \brac{\controls \times
        \set{0,1}}^\num}{1 \leq \num \leq \lifespan}  \\ 
        \pnreachstates{\gtrs} =& \pnsumseqs \cup \setcomp{\tup{\pnsumseq,
        \nfastate} \in \pnsumseqs \times
        \ististates{\rruler}{\pnsumseq'}}{\pnsumseq' \in \pnsumseqs \land
        \rruler \in \rules} \cup
        \setcomp{\pnshiftst{\pnsumseq}{\idxi}}{\pnsumseq \in \pnsumseqs \land 1
        \leq \idxi \leq \lifespan} \\
        \pnreachrules{\gtrs} =& \pnrulesadd \cup \pnrulesres
    \end{align*}
    where
    \begin{align*}
        \pnrulesadd =&
        \setcomp{\pnrule{\pnsumseq}{\decr{\istictr{\rruler}{1}}}{\tup{\pnsumseq_1,
        \istiinitst{\rruler}{\pnsumseq_2}}}}{\rruler \in \rules \land \pnsumseq,
        \pnsumseq_1, \pnsumseq_2 \in \pnsumseqs \land \pnsumseq_1 =
        \intadd{\pnsumseq}{\pnsumseq_2}} \ \cup \\
        & \setcomp{\pnrule{\tup{\pnsumseq,
        \nfastate}}{\incr{\istictr{\rruler}{\idxi}}}{\tup{\pnsumseq,
        \nfastate'}}}{\pnsumseq \in \pnsumseqs \land \exists \rruler',
        \pnsumseq' \text{ s.t. } \nfastate \nfatran{\tup{\rruler,
        \idxi}} \nfastate' \in \istitrans{\rruler'}{\pnsumseq'}} \ \cup \\
        & \setcomp{\pnrule{\tup{\pnsumseq,
        \istifinst{\rruler}{\pnsumseq'}}}{\emptyset}{\pnsumseq}}{\rruler \in
        \rules \land \pnsumseq' \in \pnsumseqs} 
    \end{align*}
    and 
    \begin{align*}
        \pnrulesres =& \setcomp{\pnrule{\pnsumseq}{\pnops}{
        \pnshiftst{\pnsumseq'}{1}}}{\begin{array}{c} \pnsumseq, \pnsumseq' \in
            \pnsumseqs \ \land \\ \pnsumseq' = \tup{\control_2, \bit_2} \ldots
            \tup{\control_\num, \bit_\num}\ \land \\ \pnsumseq =
            \tup{\control_1, \bit_1}\pnsumseq' \land \bit_2 = 1 \ \land \\
            \pnops = \setcomp{\reset{\istictr{\rruler}{1}}}{\rruler \in \rules}
        \end{array}} \ \cup \\
        &
        \setcomp{\pnrule{\pnshiftst{\pnsumseq}{\idxi}}{\set{\decr{\istictr{\rruler}{\idxi+1}},
        \incr{\istictr{\rruler}{\idxi}}}}{\pnshiftst{\pnsumseq}{\idxi}}}{\pnsumseq
        \in \pnsumseqs \land 1 \leq \idxi < \lifespan} \ \cup \\
        & 
        \setcomp{\pnrule{\pnshiftst{\pnsumseq}{\idxi}}{\pnops}{\pnshiftst{\pnsumseq}{\idxi+1}}}{\pnsumseq
        \in \pnsumseqs \land 1 \leq \idxi < \lifespan \land \pnops =
        \setcomp{\reset{\istictr{\rruler}{\idxi+1}}}{\rruler \in \rules}} \ \cup
        \\
        &
        \setcomp{\pnrule{\pnshiftst{\pnsumseq}{\lifespan}}{\emptyset}{\pnsumseq}}{\pnsumseq
        \in \pnsumseqs}
    \end{align*}
\end{nameddefinition}

Note that the size of $\pnreach{\gtrs}$ is dominated by the size of the regular
automata $\istifulla{\rruler}{\pnsumseq}$.  Thus, the size of $\pnreach{\gtrs}$
is triply exponential in the size of $\gtrs$.

\subsection{Correctness of Reduction}

We prove that the control state reachability problem for $\gtrs$ has a positive
solution iff the coverability problem for $\pnreach{\gtrs}$, initial
configuration $\config{\tup{\controlinit, 1}}{\pninitmarking}$, and target
configuration $\config{\tup{\controldest, 1}}{\pnzeromarking}$ is also positive.

We prove each direction in the sections that follow.

\begin{namedlemma}{lem:reachcorrect}{Correctness of Reduction}
    For a given senescent GTRS $\gtrs$ with lifespan $\lifespan$, control states
    $\controlinit$ and $\controldest$, and tree $\treeinit$, there is a lifespan
    restricted run 
    \[
        \config{\controlinit}{\treeinit} \tran \cdots \tran
        \config{\controldest}{\tree}
    \]
    for some $\tree$ of $\gtrs$ iff there is a run
    \[
        \config{\tup{\controlinit, 1}}{\pninitmarking} \pnrun
        \config{\tup{\controldest, 1}}{\pnmarking}
    \]
    of $\pnreach{\gtrs}$ for some $\pnzeromarking \pncovers \pnmarking$.
\end{namedlemma}
\begin{proof}
    From Lemma~\ref{lem:gtrs2petri} and Lemma~\ref{lem:petri2gtrs} below.        
\end{proof}

In the following, let the marking $\submarking{\pnmarking}{\rruler}$ be the
marking 
\[
    \ap{\brac{\submarking{\pnmarking}{\rruler}}}{\istictr{\rruler'}{\idxj}}
    = 
    \begin{cases}
        \ap{\pnmarking}{\istictr{\rruler'}{\idxj}} & \rruler' \neq
        \rruler \lor \idxj > 1 \\
        \ap{\pnmarking}{\istictr{\rruler'}{\idxj}} - 1 & \rruler' =
        \rruler \land \idxj = 1  \ .
    \end{cases}
\]

\subsubsection{From Senescent GTRS to Reset Petri Nets}

\begin{lemma} \label{lem:gtrs2petri}
    For a given senescent GTRS $\gtrs$ with lifespan $\lifespan$, control states
    $\controlinit$ and $\controldest$, and tree $\treeinit$, there is a lifespan
    restricted run 
    \[
        \config{\controlinit}{\treeinit} \tran \cdots \tran
        \config{\controldest}{\tree}
    \]
    for some $\tree$ of $\gtrs$ only if there is a run
    \[
        \config{\tup{\controlinit, 1}}{\pninitmarking} \pnrun
        \config{\tup{\controldest, 1}}{\pnmarking}
    \]
    of $\pnreach{\gtrs}$ for some $\pnzeromarking \pncovers \pnmarking$.
\end{lemma}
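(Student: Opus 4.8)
The plan is to take the given lifespan-restricted run of $\gtrs$, decompose it into a forest of independent sub-tree interfaces in the style of the third example above, and replay this decomposition inside $\pnreach{\gtrs}$ as a schedule of addition and resolution steps. Observe first that the covering target is vacuous here: $\pnzeromarking \pncovers \pnmarking$ holds for \emph{every} marking $\pnmarking$ (it merely asserts $0 \leq \ap{\pnmarking}{\ctr}$ for all $\ctr$), so it suffices to drive $\pnreach{\gtrs}$ into the control state $\tup{\controldest, 1}$, with no constraint on the marking reached.

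First I would fix the decomposition. From the run $\config{\controlinit}{\treeinit} \tran \cdots \tran \config{\controldest}{\tree}$ I read off, for each rule application that creates a sub-tree whose root (and all of whose ancestors) are never rewritten again, an independent sub-tree, cutting a still-active descendant off as a fresh independent sub-tree at the point where the enclosing root would fossilise. The lifespan restriction is exactly what makes this work: measured from its birth, any node may participate in rewrites across at most $\lifespan$ control-state changes, so cutting each sub-tree off when its root fossilises bounds every resulting interface $\tup{\control_1, \bit_1, \istitrees_1}\ldots\tup{\control_\num, \bit_\num, \istitrees_\num}$ to length $\num \leq \lifespan$, as the definition of interfaces requires. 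Here $\bit_\idxi = 1$ marks the (unique) sub-tree whose rewrite effected the change into $\control_\idxi$, and $\istitrees_\idxi$ records how many children, per rule, this sub-tree spins off in its $\idxi$th window. By \refdefinition{def:istirega}, the interface of a sub-tree created by rule $\rruler$ with control/bit prefix $\tup{\control_1, \bit_1}\ldots\tup{\control_\num, \bit_\num}$ has $\tup{\istitrees_1, \ldots, \istitrees_\num} \in \parikh{\istirega}$; this is precisely the set of child-count vectors the addition rules $\pnrulesadd$ can realise, since they simulate $\istifulla{\rruler}{\pnsumseq}$ in the control state while incrementing each counter $\istictr{\rruler}{\idxi}$ in place of emitting $\tup{\rruler, \idxi}$.

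Next I would build the run of $\pnreach{\gtrs}$ by processing the forest in the global order of control-state changes, maintaining the invariant that when the net holds summary $\tup{\control_1, \bit_1}\ldots\tup{\control_\num, \bit_\num}$ and marking $\pnmarking$, the $\control_1, \ldots, \control_\num$ are the next (at most) $\lifespan$ control states of the run beginning at the one currently simulated, each $\bit_\idxi$ records whether the change into $\control_\idxi$ has already been claimed, and each $\istictr{\rruler}{\idxi}$ counts the children spun off via $\rruler$ whose own interface is due to begin at position $\idxi$ of the window but which have not yet been incorporated. The initial token in $\istictr{\rruler_1}{1}$ (with $\rruler_1$ the fictitious rule introduced above, whose right-hand side is $\treeinit$) lets me add the interface of $\treeinit$; thereafter each sub-tree born at the currently simulated control state is incorporated by one $\pnrulesadd$ addition, which decrements its availability counter, runs $\istifulla{\rruler}{\pnsumseq}$ to deposit the counts of its children into the counters, and merges the bit sequences via $\intadd{\cdot}{\cdot}$. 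Once the change into $\control_2$ has been claimed I fire the $\pnrulesres$ rules to shift the window forward one control state. The resolution transfer is lossy, guaranteeing only $\istictr{\rruler}{\idxi} \leq \istictr{\rruler}{\idxi+1}$ after the shift, but for this direction I simply choose to move every pending token, so no child is lost; iterating until the simulated control state is $\controldest$ leaves the net in $\tup{\controldest, 1}$, as required.

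The main obstacle is making the decomposition and its schedule precise enough to verify this invariant, and two points carry the weight. The first is establishing that the run can always be arranged so each single control-state change is claimed by \emph{exactly} one sub-tree: this is what both licenses the column alignment of the third example and guarantees the compatibility condition of addition (never $\bit_\idxi = \bit'_\idxi = 1$) while ensuring no change goes unaccounted. The second is a timing check: a child spun off in a parent's $\idxi$th window is deposited into $\istictr{\cdot}{\idxi}$, is shifted down one slot by each subsequent resolution, and so arrives in slot $1$ exactly when the simulated control state matches its birth control state --- the precise moment the $\pnrulesadd$ rule decrementing $\istictr{\cdot}{1}$ should incorporate it. Both hinge on the bound $\num \leq \lifespan$ supplied by the lifespan restriction, which keeps every child within the $\lifespan$-wide window until it is due, and on scheduling all additions in a window before the resolution that resets slot $1$.
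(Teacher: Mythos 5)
Your proposal is correct and takes essentially the same route as the paper's proof: you decompose the lifespan-restricted run into independent sub-tree interfaces of length at most $\lifespan$, add each interface at the birth of its sub-tree by simulating $\istirega$ while depositing its children into the counters $\istictr{\rruler}{\idxi}$, perform a resolution at each control-state change (choosing to transfer every token through the lossy $\pnrulesres$ shift), and correctly observe that covering $\pnzeromarking$ is vacuous, so only reaching the control state $\tup{\controldest, 1}$ matters. The two points you flag as carrying the weight --- that exactly one sub-tree claims each control-state change, and that a child deposited in slot $\idxi$ reaches slot $1$ exactly at its birth control state --- are precisely what the paper's inductive bookkeeping (the maps $\istid{\idxi}$, $\istrun{\idxi}$, $\istint{\idxi}$) is constructed to verify, the only cosmetic divergence being your lazier splitting criterion (cutting a still-active descendant off when the enclosing window expires, versus the paper's splitting whenever the parent node is never again matched), which is fine provided the cut is placed at the descendant's creation event so that its interface begins as a rule right-hand side with its availability token in the correct slot.
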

\begin{proof}
    The proof proceeds in two steps.  We begin with a run 
    \[
        \config{\control_1}{\tree_1} \tran \cdots \tran
        \config{\control_\runlen}{\tree_\runlen}
    \]
    of $\gtrs$.  First we deconstruct this run into independent sub-trees,
    coupled with their interfaces.  From this deconstruction, we then build a
    run of $\pnreach{\gtrs}$.

    The deconstruction is a sequence
    \[
        \tup{\control_1, \istid{1}, \istrun{1}, \istint{1}} \ldots
        \tup{\control_\runlen, \istid{\runlen}, \istrun{\runlen},
        \istint{\runlen}}
    \]
    where for all $1 \leq \idxi \leq \runlen$ we have that $\istid{\idxi}$
    assigns to each node of $\tree_\idxi$ a natural number indicating the ID of
    the independent sub-tree the node currently belongs to, $\istrun{\idxi}$
    maps each ID (natural number) to a run of some $\istigtrs$ (where the
    appropriate $\rruler, \control_1, \bit_1, \ldots, \control_\num, \bit_\num$
    are defined on-the-fly) which is the evolution of the independent sub-tree
    in the run up to $\idxi$, and finally $\istint{\idxi}$ maps each ID to an
    independent sub-tree interface also representing the run up to $\idxi$.

    We build this sequence by induction (simultaneously arguing its existence).
    We begin with $\istid{1}$ mapping each node in $\tree_1$ to $1$, then
    \begin{align*}
        \ap{\istrun{1}}{1} =& \config{\tup{\control_1, 0, 1}}{\tree_1} \\
        \ap{\istint{1}}{1} =& \tup{\control_1, 0, \zerovec} \ .
    \end{align*}
    Now, inductively take $\tup{\control_\idxi, \istid{\idxi}, \istrun{\idxi},
    \istint{\idxi}}$ and consider the transition
    \[
        \config{\control_\idxi}{\csub{\context}{\tree}} \tran
        \config{\control_{\idxi+1}}{\csub{\context}{\tree'}}
    \]
    where $\tree_\idxi = \csub{\context}{\tree}$ and $\tree_{\idxi+1} =
    \csub{\context}{\tree'}$ and the transition is via rule $\rruler \in \rules$.

    Let $\tnodeparent$ be the parent of $\tree$ in $\tree_\idxi$, should it
    exist (it does not exist if $\tree = \csub{\context}{\tree}$).  There are
    now two cases.
    \begin{enumerate}
        \item 
            There is some $\idxj > \idxi$ such that in the transition
            \[
                \config{\control_\idxj}{\csub{\context'}{\tree^1}} \tran
                \config{\control_{\idxj+1}}{\csub{\context'}{\tree^2}}
            \]
            of the run, the node $\tnodeparent$ appears in $\tree^1$.

            In this case, $\tree'$ cannot form a new independent sub-tree since
            it appears as part of the run over the sub-tree including
            $\tnodeparent$.

            We define, for each $\tnode$ in $\tree_{\idxi+1}$, 
            \[
                \ap{\istid{\idxi+1}}{\tnode} =
                \begin{cases}
                    \ap{\istid{\idxi}}{\tnode} & \text{$\tnode$ is in
                    $\context$} \\
                    \ap{\istid{\idxi}}{\tnodeparent} & \text{$\tnode$ is in
                    $\tree'$} \ .
                \end{cases}
            \]
            
            There are now two further cases, depending on whether the control
            state is changed by the transition.
            \begin{enumerate}
                \item 
                    When $\control_\idxi = \control_{\idxi+1}$ we define, for
                    each $\idxj$ in the image of $\istid{\idxi+1}$,
                    \begin{align*} 
                        \ap{\istrun{\idxi+1}}{\idxj} &=
                        \begin{cases}
                            \ap{\istrun{\idxi}}{\idxj} & \idxj \neq
                            \ap{\istid{\idxi}}{\tnodeparent} \\
                            \runsym & \idxj = \ap{\istid{\idxi}}{\tnodeparent}
                            \\ 
                        \end{cases} \\
                        \ap{\istint{\idxi+1}}{\idxj} &=
                        \ap{\istint{\idxi}}{\idxj}
                    \end{align*}
                    where we define $\runsym$ as follows.  We know
                    $\ap{\istrun{\idxi}}{\ap{\istid{\idxi}}{\tnodeparent}}$ is a
                    run to some configuration $\config{\tup{\control_{\idxi},
                    \bit, \idxi'}}{\csub{\context'}{\tree}}$ where there is some
                    $\context''$ such that $\tree_\idxi =
                    \csub{\context''}{\csub{\context'}{\tree}}$.  We define
                    \[
                        \runsym =
                        \ap{\istrun{\idxi}}{\ap{\istid{\idxi}}{\tnodeparent}}
                        \otran{\empsym} \config{\tup{\control_{\idxi+1}, \bit,
                        \idxi'}}{\csub{\context'}{\tree'}}
                    \]
                    which can be seen to be a transition of $\istigtrs$ due to
                    $\rruler \in \rules$.  
                    
                    Note, in all other cases we keep the same run as in
                    $\istrun{\idxi}$.  Since $\control_\idxi =
                    \control_{\idxi+1}$ we maintain (for all independent
                    sub-trees that have not expired their lifespan) that
                    $\istrun{\idxi}$ tracks the run up to $\idxi$.

                \item 
                    When $\control_\idxi \neq \control_{\idxi+1}$ we define
                    $\ap{\istrun{\idxi+1}}{\idxj}$ for each $\idxj$ in the image
                    of $\istid{\idxi+1}$.  There are two cases.  
                    
                    When $\idxj \neq \ap{\istid{\idxi}}{\tnodeparent}$ we know
                    $\ap{\istrun{\idxi}}{\idxj}$ is a run to some configuration
                    $\config{\tup{\control, \bit, \idxi'}}{\tree''}$.  If
                    $\idxi' = \lifespan$ then the nodes in $\tree''$ can no
                    longer be rewritten after the control state change (they are
                    fossilised).  Hence, we define $\ap{\istrun{\idxi+1}}{\idxj}
                    = \ap{\istrun{\idxi}}{\idxj}$.  When $\idxi' < \lifespan$,
                    we define    
                    \[
                    \ap{\istrun{\idxi+1}}{\idxj} = \ap{\istrun{\idxi}}{\idxj}
                    \otran{\empsym} \config{\tup{\control_{\idxi+1}, 0,
                    \idxi'+1}}{\tree''} \ .
                    \]
                    Such a transition is always possible by the definition of
                    $\istigtrs$.

                    When $\idxj = \ap{\istid{\idxi}}{\tnodeparent}$ we know
                    $\ap{\istrun{\idxi}}{\idxj}$ is a run to some configuration
                    $\config{\tup{\control_{\idxi}, \bit,
                    \idxi'}}{\csub{\context'}{\tree}}$ where there is some
                    $\context''$ such that $\tree_\idxi =
                    \csub{\context''}{\csub{\context'}{\tree}}$.  We define
                    \[
                        \runsym = \ap{\istrun{\idxi}}{\idxj} \otran{\empsym}
                        \config{\tup{\control_{\idxi+1}, 1,
                        \idxi'+1}}{\csub{\context'}{\tree'}}
                    \]
                    which can be seen to be a transition of $\istigtrs$ due to
                    $\rruler \in \rules$.  

                    Finally, all $\idxj$ we define $\ap{\istint{\idxi}}{\idxj}$
                    when $\ap{\istint{\idxi}}{\idxj}$ has $\lifespan$ tuples,
                    and otherwise,
                    \[
                        \ap{\istint{\idxi+1}}{\idxj} = 
                        \begin{cases}
                            \ap{\istint{\idxi}}{\idxj} \tup{\control_{\idxi+1},
                            1, \zerovec} &  \idxj =
                            \ap{\istid{\idxi}}{\tnodeparent} \\ 
                            \ap{\istint{\idxi}}{\idxj} \tup{\control_{\idxi+1},
                            0, \zerovec} & \idxj \neq
                            \ap{\istid{\idxi}}{\tnodeparent} \ .
                        \end{cases}
                    \]
            \end{enumerate}

        \item 
            Either $\tree = \csub{\context}{\tree}$, or there is no $\idxj >
            \idxi$ such that in the transition
            \[
                \config{\control_\idxj}{\csub{\context'}{\tree^1}} \tran
                \config{\control_{\idxj+1}}{\csub{\context'}{\tree^2}}
            \]
            of the run, the node $\tnodeparent$ appears in $\tree^1$.

            In this case, $\tree'$ can form a new independent sub-tree since its
            parent node is not read during the remainder of the run.  
           
            Let $\idxjnew$ be a natural number not in the image of
            $\istid{\idxi}$.  We define, for each $\tnode$ in $\tree_{\idxi+1}$, 
            \[
                \ap{\istid{\idxi+1}}{\tnode} =
                \begin{cases}
                    \ap{\istid{\idxi}}{\tnode} & \text{$\tnode$ is in
                    $\context$} \\
                    \idxjnew & \text{$\tnode$ is in $\tree'$} \ .
                \end{cases}
            \]
            
            There are now two further cases, depending on whether the control
            state is changed by the transition.  In the following,  recall
            $\onevec{\idxe}$ is the vector that is zero in all components except
            the $\idxe$th, which is $1$.
            \begin{enumerate}
                \item 
                    When $\control_\idxi = \control_{\idxi+1}$ we define, for
                    each $\idxj$ in the image of $\istid{\idxi+1}$,
                    \begin{align*} 
                        \ap{\istrun{\idxi+1}}{\idxj} &=
                        \begin{cases}
                            \ap{\istrun{\idxi}}{\idxj} & \idxj \neq
                            \ap{\istid{\idxi}}{\tnodeparent} \land \idxj \neq
                            \idxjnew \\
                            \runsym & \idxj = \ap{\istid{\idxi}}{\tnodeparent}
                            \\ 
                            \config{\tup{\control_{\idxi+1}, 0, 1}}{\tree'} &
                            \idxj = \idxjnew  
                        \end{cases} \\
                        \ap{\istint{\idxi+1}}{\idxj} &= 
                        \begin{cases}
                            \ap{\istint{\idxi}}{\idxj} & \idxj \neq
                            \ap{\istid{\idxi}}{\tnodeparent} \land \idxj \neq
                            \idxjnew \\
                            \istiseq & \idxj = \ap{\istid{\idxi}}{\tnodeparent}
                            \\ 
                            \tup{\control_{\idxi+1}, 0, \zerovec} & \idxj =
                            \idxjnew  
                        \end{cases}
                    \end{align*}
                    where we define $\runsym$ and $\istiseq$ as follows.  
                    
                    First, for $\runsym$, we know
                    $\ap{\istrun{\idxi}}{\ap{\istid{\idxi}}{\tnodeparent}}$ is a
                    run to some configuration $\config{\tup{\control_{\idxi},
                    \bit, \idxi'}}{\csub{\context'}{\tree}}$ where there is some
                    $\context''$ such that $\tree_\idxi =
                    \csub{\context''}{\csub{\context'}{\tree}}$.  We define
                    \[
                        \runsym =
                        \ap{\istrun{\idxi}}{\ap{\istid{\idxi}}{\tnodeparent}}
                        \otran{\tup{\rruler, \idxi'}}
                        \config{\tup{\control_{\idxi+1}, \bit,
                        \idxi'}}{\csub{\context'}{\exttree}}
                    \]
                    which can be seen to be a transition of $\istigtrs$ due to
                    $\rruler \in \rules$.  
                    
                    Note, in all other cases we keep the same run as in
                    $\istrun{\idxi}$.  Since $\control_\idxi =
                    \control_{\idxi+1}$ we maintain (for all independent
                    sub-trees that have not expired their lifespan) that
                    $\istrun{\idxi}$ tracks the run up to $\idxi$.

                    Next, we define $\istiseq$.  Let
                    $\ap{\istint{\idxi}}{\tnodeparent} = \istiseq'
                    \tup{\control_\idxi, \bit, \istitrees}$ for some $\istiseq',
                    \bit$ and $\istitrees$.  We define
                    \[
                        \istiseq = \istiseq' \tup{\control_\idxi, \bit,
                        \istitrees + \onevec{\idxe}}
                    \]
                    when $\rruler = \rruler_\idxe$.

                \item 
                    When $\control_\idxi \neq \control_{\idxi+1}$ we define
                    $\ap{\istrun{\idxi+1}}{\idxj}$ for each $\idxj$ in the image
                    of $\istid{\idxi+1}$.  There are three cases.  

                    When $\idxj = \idxjnew$, we have
                    $\ap{\istrun{\idxi+1}}{\idxj} =
                    \config{\control_{\idxi+1}}{\tree'}$.

                    When $\idxj = \ap{\istid{\idxi}}{\tnodeparent}$ we know
                    $\ap{\istrun{\idxi}}{\idxj}$ is a run to some configuration
                    $\config{\tup{\control_{\idxi}, \bit,
                    \idxi'}}{\csub{\context'}{\tree}}$ where there is some
                    $\context''$ such that $\tree_\idxi =
                    \csub{\context''}{\csub{\context'}{\tree}}$.  We define
                    \[
                        \runsym = \ap{\istrun{\idxi}}{\idxj}
                        \otran{\tup{\rruler, \idxi+1}}
                        \config{\tup{\control_{\idxi+1}, 1,
                        \idxi'+1}}{\csub{\context'}{\exttree}}
                    \]
                    which can be seen to be a transition of $\istigtrs$ due to
                    $\rruler \in \rules$.

                    When $\idxj \neq \ap{\istid{\idxi}}{\tnodeparent}$ and
                    $\idxj \neq \idxjnew$ we know $\ap{\istrun{\idxi}}{\idxj}$
                    is a run to some configuration $\config{\tup{\control, \bit,
                    \idxi'}}{\tree''}$.  If $\idxi' = \lifespan$ then the nodes
                    in $\tree''$ can no longer be rewritten after the control
                    state change (they are fossilised).  Hence, we define
                    $\ap{\istrun{\idxi+1}}{\idxj} = \ap{\istrun{\idxi}}{\idxj}$.
                    When $\idxi' < \lifespan$, we define    
                    \[
                        \ap{\istrun{\idxi+1}}{\idxj} =
                        \ap{\istrun{\idxi}}{\idxj} \otran{\empsym}
                        \config{\tup{\control_{\idxi+1}, 0, \idxi'+1}}{\tree''}
                        \ .
                    \]
                    Such a transition is always possible by the definition of
                    $\istigtrs$.

                    Finally, we define for all $\idxj$ that
                    $\ap{\istint{\idxi}}{\idxj}$ when
                    $\ap{\istint{\idxi}}{\idxj}$ has $\lifespan$ tuples, and
                    otherwise, when $\rruler = \rruler_\idxe$,
                    \[
                        \ap{\istint{\idxi+1}}{\idxj} = 
                        \begin{cases}
                            \tup{\control_{\idxi+1}, 0, \zerovec} & \idxj =
                            \idxjnew \\
                            \ap{\istint{\idxi}}{\idxj} \tup{\control_{\idxi+1},
                            1, \onevec{\idxe}} & \idxj =
                            \ap{\istid{\idxi}}{\tnodeparent} \\ 
                            \ap{\istint{\idxi}}{\idxj} \tup{\control_{\idxi+1},
                            0, \zerovec} & \idxj \neq
                            \ap{\istid{\idxi}}{\tnodeparent} \land \idxj \neq
                            \idxjnew \ .
                        \end{cases}
                    \]
            \end{enumerate}
    \end{enumerate}

    We have now defined our deconstruction
    \[
        \tup{\control_1, \istid{1}, \istrun{1}, \istint{1}} \ldots
        \tup{\control_\runlen, \istid{\runlen}, \istrun{\runlen},
        \istint{\runlen}}  
    \]
    from which we will construct a run
    \[
        \config{\tup{\controlinit, 1}}{\pninitmarking} \pntran
        \config{\pnsumseq_1}{\pnmarking_1} \pntran \cdots \pntran
        \config{\pnsumseq_{\runlen'}}{\pnmarking_{\runlen'}} 
    \]
    of $\pnreach{\gtrs}$, where $\pnsumseq_{\runlen'} = \tup{\controldest, 1}$
    and $\pnzeromarking \pncovers \pnmarking_{\runlen'}$ as required.

    Note that the independent sub-tree interfaces in the decomposition of the
    run are given by $\istint{\runlen}$.  We will now iterate over the run of
    $\gtrs$ from $\idxi = 1$ to $\idxi = \runlen$, building the required run of
    $\pnreach{\gtrs}$ as we go.  First, let $\runsym_1$ be the run
    \[
        \config{\tup{\controlinit, 1}}{\pninitmarking} \pnrun
        \config{\pnsumseq}{\pnmarking} 
    \]
    where, recalling that the ID $1$ denotes the independent sub-tree
    corresponding to the evolution of $\treeinit$ and the independent sub-trees
    it generates, we have 
    \[
        \config{\pnsumseq}{\pnmarking} = \intadd{\config{\tup{\controlinit,
        1}}{\pnzeromarking}}{\ap{\istint{\runlen}}{1}} \ .
    \]
    It remains to prove that such a run exists.  For this, take the $\istigtrs$
    defined by the initial rule $\rruler_1$ (assumed in
    Section~\ref{sec:rule1}), and the sequence $\pnsumseq' = \tup{\control_1,
    \bit_1} \ldots \tup{\control_\num, \bit_\num}$ where
    $\ap{\istint{\runlen}}{1} = \tup{\control_1, \bit_1, \istitrees_1} \ldots
    \tup{\control_\num, \bit_\num, \istitrees_\num}$.  Note that in this case
    $\pnsumseq$ differs from $\pnsumseq'$ only in the value of $\bit_1$ (it is
    $0$ in $\pnsumseq'$ and $1$ in $\pnsumseq$).  Further, observe that by
    construction $\ap{\istrun{\runlen}}{1}$ gives a run of $\istigtrs$
    outputting $\word$ such that $\parikh{\word} = \tup{\istitrees_1, \ldots,
    \istitrees_\num}$.  From this we obtain a run of the accompanying regular
    automaton $\istirega$ with the same Parikh image and thus a run
    \[
        \config{\tup{\controlinit, 1}}{\pninitmarking}
        \pnlabtran{\decr{\istictr{\rruler_1}{1}}} \config{\tup{\pnsumseq,
        \istiinitst{\rruler_1}{\pnsumseq'}}}{\pnzeromarking} \pnrun
        \config{\tup{\pnsumseq, \istifinst{\rruler_1}{\pnsumseq'}}}{\pnmarking}
        \pntran \config{\pnsumseq}{\pnmarking} 
    \]
    of $\pnreach{\gtrs}$ as required.
   
    Now, inductively assume a run $\runsym_\idxi$.  Let
    $\config{\pnsumseq}{\pnmarking}$ be the final configuration of
    $\runsym_\idxi$.  We extend $\runsym_\idxi$ to build $\runsym_{\idxi+1}$ by
    considering the transition
    \[
        \config{\control_\idxi}{\tree_\idxi} \tran
        \config{\control_{\idxi+1}}{\tree_{\idxi+1}}
    \]
    of the run of $\gtrs$.  As in the above deconstruction, there are two cases
    to consider.
    \begin{enumerate}
        \item 
            When the transition does not generate a new independent sub-tree,
            there are two further cases depending on whether a control state
            change occurs.
            \begin{enumerate}
                \item
                    When $\control_\idxi = \control_{\idxi+1}$ we simply define
                    $\runsym_{\idxi+1} = \runsym_\idxi$.

                \item
                    When $\control_\idxi \neq \control_{\idxi+1}$ we perform a
                    resolution step.  That is, we define $\runsym_{\idxi+1}$ to
                    be a run
                    \[
                        \runsym_\idxi \pnrun \config{\pnsumseq'}{\pnmarking'}
                    \]
                    where, recalling $\runsym_\idxi$ ends with the configuration
                    $\config{\pnsumseq}{\pnmarking}$, we have
                    $\config{\pnsumseq'}{\pnmarking'} =
                    \intres{\config{\pnsumseq}{\pnmarking}}$.  That such a run
                    \[
                        \runsym_\idxi \pntran
                        \config{\pnshiftst{\pnsumseq'}{1}}{\pnmarking} \pnrun
                        \config{\pnshiftst{\pnsumseq'}{\lifespan}}{\pnmarking'}
                        \pntran \config{\pnsumseq'}{\pnmarking'}
                    \]
                    of $\pnreach{\gtrs}$ exists follows in a straightforward
                    manner from the transitions in $\pnrulesres$.  Note that to
                    start the resolution, we require $\bit_2 = 1$.  This will
                    always be the case because exactly one independent sub-tree
                    is responsible for firing $\rruler$ and its interface was
                    added when the sub-tree was generated.  Similarly, the
                    compatibility conditions come from the construction of
                    $\istrun{\runlen}$ from a valid run.
            \end{enumerate}

        \item
            When a new independent sub-tree is generated, we again have two
            cases depending on whether the control state changes.  Let
            $\idxjnew$ be the ID of the new sub-tree.
            \begin{enumerate}
                \item 
                    When $\control_\idxi = \control_{\idxi+1}$ we proceed in the
                    same manner as we defined $\runsym_1$.  That is, take the
                    $\istigtrs$ defined by the rule $\rruler$ responsible for
                    the transition, and the sequence $\pnsumseq_I =
                    \tup{\control_1, \bit_1} \ldots \tup{\control_\num,
                    \bit_\num}$ where $\ap{\istint{\runlen}}{\idxjnew} =
                    \tup{\control_1, \bit_1, \istitrees_1} \ldots
                    \tup{\control_\num, \bit_\num, \istitrees_\num}$.  Again,
                    observe that by construction
                    $\ap{\istrun{\runlen}}{\idxjnew}$ gives a run of $\istigtrs$
                    outputting $\word$ such that $\parikh{\word} =
                    \tup{\istitrees_1, \ldots, \istitrees_\num}$.  From this we
                    obtain a run of the accompanying regular automaton
                    $\istirega$ with the same Parikh image and thus a run
                    \[
                        \config{\pnsumseq}{\pnmarking}
                        \pnlabtran{\decr{\istictr{\rruler}{1}}}
                        \config{\tup{\pnsumseq',
                        \istiinitst{\rruler_1}{\pnsumseq_I}}}{\submarking{\pnmarking}{\rruler}}
                        \pnrun \config{\tup{\pnsumseq',
                        \istifinst{\rruler_1}{\pnsumseq_I}}}{\pnmarking'}
                        \pntran \config{\pnsumseq'}{\pnmarking'} 
                    \]
                    of $\pnreach{\gtrs}$ where 
                    \[
                        \config{\pnsumseq'}{\pnmarking'} =
                        \intadd{\config{\pnsumseq}{\submarking{\pnmarking}{\rruler}}}{\ap{\istint{\runlen}}{\idxjnew}}
                        \ .
                    \]
                    That $\submarking{\pnmarking}{\rruler}$ exists follows from
                    the fact that by construction of $\runsym_\idxi$ we have
                    that $\ap{\pnmarking}{\istictr{\rruler}{1}}$ contains the
                    number of independent sub-trees generated by $\rruler$
                    between the change to $\control_\idxi$ and the change to the
                    next control state.  Hence, since we only decrement
                    $\istictr{\rruler}{1}$ when such a sub-tree is generated, we
                    do not fall below $0$.

                \item
                    When $\control_\idxi \neq \control_{\idxi+1}$ we peform a
                    resolution, a subtraction and an addition in the same way as
                    the previous cases.  That is, we build a run
                    \begin{align*}
                        \runsym_\idxi &\pntran
                        \config{\pnshiftst{\pnsumseq^1}{1}}{\pnmarking} \pnrun
                        \config{\pnshiftst{\pnsumseq^1}{\lifespan}}{\pnmarking^1}
                        \pntran \config{\pnsumseq^1}{\pnmarking^1} \\
                        &\pntran \config{\tup{\pnsumseq^2,
                        \istiinitst{\rruler_1}{\pnsumseq_I}}}{\submarking{\pnmarking^1}{\rruler}}
                        \pnrun \config{\tup{\pnsumseq^2,
                        \istifinst{\rruler_1}{\pnsumseq_I}}}{\pnmarking^2}
                        \pntran \config{\pnsumseq^2}{\pnmarking^2} 
                    \end{align*}
                    where $\config{\pnsumseq^1}{\pnmarking^1} =
                    \intres{\config{\pnsumseq}{\pnmarking}}$ and 
                    \[
                        \config{\pnsumseq^2}{\pnmarking^2} =
                        \intadd{\config{\pnsumseq^1}{\submarking{\pnmarking^1}{\rruler}}}{\ap{\istint{\runlen}}{\idxjnew}}
                        \ .
                    \]
                    The existence of such a run follows by the same arguments
                    presented above.
            \end{enumerate}
    \end{enumerate}
    Thus, we construct $\runsym_\runlen$ to some configuration
    $\config{\pnsumseq}{\pnmarking}$ where, necessarily, $\pnzeromarking
    \pncovers \pnmarking$ and $\pnsumseq = \tup{\control_\num, 1}$ where by
    assumption $\control_\num = \controldest$.  This follows since $\pnsumseq$
    was built by additions of interfaces that did not pass beyond
    $\control_\num$ to some $\control_{\num+1}$, and the resolution steps
    occured at the points where the control state changed.  The second component
    of the tuple is $1$ since by construction some sub-tree interface was
    responsible for every control state change.  Thus, we are done.
\end{proof}

\subsubsection{From Reset Petri Nets to Senescent GTRS}

\begin{lemma} \label{lem:petri2gtrs}
    For a given senescent GTRS $\gtrs$ with lifespan $\lifespan$, control states
    $\controlinit$ and $\controldest$, and tree $\treeinit$, there is a lifespan
    restricted run 
    \[
        \config{\controlinit}{\treeinit} \tran \cdots \tran
        \config{\controldest}{\tree}
    \]
    for some $\tree$ of $\gtrs$ if there is a run
    \[
        \config{\tup{\controlinit, 1}}{\pninitmarking} \pnrun
        \config{\tup{\controldest, 1}}{\pnmarking}
    \]
    of $\pnreach{\gtrs}$ for some $\pnzeromarking \pncovers \pnmarking$.
\end{lemma}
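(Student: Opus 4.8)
The plan is to invert the deconstruction carried out in the proof of Lemma~\ref{lem:gtrs2petri}: from a run of $\pnreach{\gtrs}$ reaching $\config{\tup{\controldest, 1}}{\pnmarking}$ I would recover a forest of independent sub-tree interfaces, together with concrete $\gtrs$-rewrites witnessing each, and then linearise this forest into a single lifespan-restricted run of $\gtrs$ ending in control state $\controldest$. First I would note that, since the marking to be covered is $\pnzeromarking$, the condition $\pnzeromarking \pncovers \pnmarking$ holds vacuously, so it suffices to realise \emph{any} run of $\pnreach{\gtrs}$ reaching the summary $\tup{\controldest, 1}$. I would parse such a run into a sequence of \emph{macro-steps}, each being either a complete addition (a $\pnrulesadd$ transition decrementing some $\istictr{\rruler}{1}$, followed by a simulation of the automaton $\istifulla{\rruler}{\pnsumseq_2}$ that increments counters, terminated by the $\emptyset$-labelled return to the plain summary state) or a complete resolution (the $\pnrulesres$ block passing through the shift states $\pnshiftst{\pnsumseq'}{1}, \ldots, \pnshiftst{\pnsumseq'}{\lifespan}$).

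Each addition macro-step commits to one independent sub-tree: the decrement of $\istictr{\rruler}{1}$ consumes a sub-tree previously generated by rule $\rruler$, and the guessed sequence $\pnsumseq_2 = \tup{\control_1, \bit_1}\ldots\tup{\control_{\num'}, \bit_{\num'}}$ together with the counter increments yields a Parikh image in $\parikh{\istifulla{\rruler}{\pnsumseq_2}}$. By \refdefinition{def:istirega} this Parikh image is witnessed by an actual run of $\istigtrs$ from a tree on the RHS of $\rruler$, which by construction of $\istigtrs$ projects to a genuine sequence of rewrites applied to the corresponding sub-tree of $\gtrs$ — one passing exactly the control states $\control_1, \ldots, \control_{\num'}$, effecting a real $\gtrs$ control-state change precisely when $\bit_\idxi = 1$ (otherwise the change is supplied externally), and replacing each newly generated sub-tree by the placeholder $\exttree$. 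I would assemble these recovered sub-runs into a generation forest rooted at the interface of $\treeinit$, obtained from the initial rule $\rruler_1$ fixed in Section~\ref{sec:rule1}, attaching each consumed sub-tree as a child of the sub-tree whose run output the matching symbol $\tup{\rruler, \idxi}$.

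I would then linearise this forest into a single run of $\gtrs$, proceeding global round by global round. The key alignment is that the summary's first component is always the current control state, so each resolution macro-step corresponds to exactly one global control-state change; the requirement $\bit_2 = 1$ for resolution guarantees that in each column exactly one active sub-tree (the responsible one) fires a real control-state-changing rule of $\gtrs$, while all other active sub-trees merely age, and the compatibility conditions built into $\intadd{\cdot}{\cdot}$ guarantee that all active sub-trees agree on the control states and that no two claim responsibility for the same change. Within a global round I would execute that round's same-control-state rewrites and sub-tree generations of every active sub-tree in depth-first order: whenever a sub-tree's run produces an $\exttree$ via rule $\rruler$, I apply the real rule $\rruler$ of $\gtrs$ and immediately splice in the child's run, which by the counter bookkeeping is consumable in precisely this global round. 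Because the interface shifting synchronises generation and consumption to the same global round, and because each interface has length at most $\lifespan$, every sub-tree is rewritten only within the $\lifespan$ control-state changes following its birth; hence the resulting run is lifespan restricted, and since the final summary is $\tup{\controldest, 1}$ it ends in control state $\controldest$.

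The main obstacle is this linearisation step. Unlike the forward direction, where the task was to show that decrements never underflow, here I must produce a concrete, globally consistent ordering of transitions from a forest whose nodes interleave across rounds and nest recursively, since a sub-tree generated within a round may itself generate further sub-trees in the same round. I would handle this by a depth-first schedule keyed on the global round at which each sub-tree is consumed, using finiteness of the Petri-net run to ensure the forest is finite and the recursion terminates. The remaining care is purely in the bookkeeping: matching each decrement to an earlier increment across the lossy, reset-punctuated resolution steps, and observing that any sub-tree which is generated but never consumed (discarded by a reset) can simply be produced and then left inert in $\gtrs$, since an unrewritten sub-tree merely ages and fossilises without violating any constraint.
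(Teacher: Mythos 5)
Your proposal is correct, but it assembles the run by a genuinely different route than the paper. The shared core is the same: you parse the run of $\pnreach{\gtrs}$ into addition and resolution macro-steps, recover from each guessed Parikh image a concrete run of $\istigtrs$ via \refdefinition{def:istirega}, observe that the covering condition against $\pnzeromarking$ is vacuous, treat sub-trees lost to the inexact transfers and resets as generated-but-inert (fossilising harmlessly), and bound ages by the length $\num \leq \lifespan$ of each interface. Where you diverge is the assembly: you build an explicit generation forest by matching each decrement of $\istictr{\rruler}{1}$ to an earlier increment, then linearise it round by round with a depth-first schedule, choosing at each splice the RHS tree of $\rruler$ to be the initial tree of the child's recovered run. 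The paper never materialises this forest; instead it runs a single forward induction over the Petri-net run, maintaining for the current summary $\tup{\control_1, \bit_1} \ldots \tup{\control_\num, \bit_\num}$ a family of contexts $\context_1, \ldots, \context_\num$ whose holes are counted exactly by the marking, together with segment runs (with single-transition gaps precisely where $\bit_\idxj = 0$, later closed by the responsible sub-tree) that hold \emph{for every} choice of trees filling the holes. An addition substitutes the recovered contexts $\context'_1, \ldots, \context'_{\num'}$ into one hole; a resolution appends the first segment to the accumulated run and plugs the holes lost by the inexact transfer with fixed trees $\tree_\rruler$ — the point where the standing nonemptiness assumption on the RHS automata is used. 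The universal quantification over hole-fillings is what lets the paper defer the choice of each consumed sub-tree's initial tree and dispense entirely with your token-matching step; your route buys a more modular, global picture, but it owes exactly the two debts you acknowledge, and both are dischargeable: a flow invariant showing each counter value is a lower bound on the number of available same-round sub-trees (increments are exact per generated $\exttree$, transfers and resets only lose, so every decrement can be matched injectively), and well-foundedness of the depth-first schedule for same-round nesting, which follows from finiteness of the Petri-net run. One detail worth making explicit in your write-up is that the parent's later rewrites can never need to inspect a spliced-in child: the LHS automata of $\istigtrs$ are over $\alphabet$ and so cannot match any sub-tree containing $\exttree$, which is what makes the lifting of the recovered rewrites to the real tree legitimate.
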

\begin{proof}
    Take a run 
    \[
        \config{\pnsumseq_1}{\pnmarking_1} \pnrun
        \config{\pnsumseq_2}{\pnmarking_2} \pnrun \cdots \pnrun
        \config{\pnsumseq_\runlen}{\pnmarking_\runlen}
    \]
    of $\pnreach{\gtrs}$ where $\config{\pnsumseq_1}{\pnmarking_1}, \ldots,
    \config{\pnsumseq_\runlen}{\pnmarking_\runlen}$ are all configurations with
    a control state in $\pnsumseqs$ that occur on the run.  

    By induction from $\idxi = 1$ to $\idxi = \runlen$ we build a run of $\gtrs$
    witnessing the control state reachability property.  We assume by induction
    that for $\pnsumseq_\idxi = \tup{\control_1, \bit_1} \ldots
    \tup{\control_\num, \bit_\num}$ that for all $\num < \idxj \leq \lifespan$
    and $\rruler' \in \rules$, we have
    $\ap{\pnmarking_\idxi}{\istictr{\rruler'}{\idxj}} = 0$.  Moreover, we have
    contexts $\context_1, \ldots, \context_\num$ that are intuitively the trees
    corresponding to the runs of $\gtrs$ built so far (where $\context_\idxi$ is
    paired with the run over control state $\control_\idxi$), but where the
    independent sub-trees (represented by $\exttree$ from the runs of the
    $\istigtrs$ used to build the run) are context variables.  Since the runs
    were built to leave these independent sub-trees uninspected, they can be
    replaced by any tree matching the rule which generated them.  More
    precisely, we have contexts $\context_1, \ldots, \context_\num$ such that
    for all sequences $\treeseq{1}, \ldots, \treeseq{\num}$ where for all $1
    \leq \idxj \leq \num$ we have 
    \[ 
        \treeseq{\idxj} = \treeseq{\rruler_1, \idxj}, \ldots,
        \treeseq{\rruler_\numrules, \idxj} 
    \]
    and for all $\rruler$ 
    \[
        \treeseq{\rruler, \idxj} = \pntree{\rruler}{\idxj}{1}, \ldots,
        \pntree{\rruler}{\idxj}{\ap{\pnmarking}{\istictr{\rruler}{\idxj}}} \in
        \ap{\lang}{\ta}^{\ap{\pnmarking}{\istictr{\rruler}{\idxj}}}
    \]
    where $\ta$ is the tree automaton on the RHS of $\rruler$, we have a run
    $\runsym_\idxi$ of $\gtrs$ reaching a configuration
    \[
        \config{\control_1}{\csub{\context_1}{\treeseq{1}}} \ .
    \]
    and for all $1 < \idxj \leq \num$, if $\bit_\idxj = 1$, we have a run 
    \[
        \config{\control_{\idxj-1}}{\csub{\context_{\idxj-1}}{\treeseq{1},
        \ldots, \treeseq{\idxj-1}}} \run
        \config{\control_\idxj}{\csub{\context_\idxj}{\treeseq{1}, \ldots,
        \treeseq{\idxj}}}
    \]
    of $\gtrs$ and if $\bit_\idxj = 0$, we have a run 
    \[
        \config{\control_\idxj}{\csub{\context_{\idxj-1}}{\treeseq{1}, \ldots,
        \treeseq{\idxj-1}}} \run
        \config{\control_\idxj}{\csub{\context_\idxj}{\treeseq{1}, \ldots,
        \treeseq{\idxj}}} \ .
    \]
    That is, we have a run to
    $\config{\control_\num}{\csub{\context_\num}{\treeseq{1}, \ldots,
    \treeseq{\num}}}$ which may have single transition gaps where control state
    changes occur (in which case $\bit_\idxj = 0$).  
    
    In the base case we have $\pnsumseq_1 = \tup{\controlinit, 1}$ and
    $\pnmarking_1 = \pninitmarking$ and the induction hypothesis is satisfied by 
    \[
        \runsym_1 = \config{\controlinit}{\csub{\context_1}{\tree_1}}
    \]
    where $\context_1$ is the context containing only a root node labelled by a
    variable and $\tree_1$ is necessarily $\treeinit$.

    Now we consider the inductive step.  Take the run of $\pnreach{\gtrs}$
    \[
        \config{\pnsumseq_\idxi}{\pnmarking_\idxi} \pnrun
        \config{\pnsumseq_{\idxi+1}}{\pnmarking_{\idxi+1}} 
    \]
    and fix $\pnsumseq_\idxi = \tup{\control_1, \bit_1} \ldots
    \tup{\control_\num, \bit_\num}$ and $\context_1, \ldots, \context_\num$ by
    induction.  There are two cases: either the run uses transition in
    $\pnrulesadd$ or $\pnrulesres$.
    \begin{enumerate}
        \item 
            If the run uses $\pnrulesadd$ rules then we have a run of the form
            \begin{align*} 
                \config{\pnsumseq_\idxi}{\pnmarking_\idxi}
                \pnlabtran{\decr{\istictr{\rruler}{1}}} &
                \config{\tup{\pnsumseq_{\idxi+1},
                \istiinitst{\rruler}{\pnsumseq}}}{\submarking{\pnmarking_\idxi}{\rruler}}
                \\ 
                \pnrun & \config{\tup{\pnsumseq_{\idxi+1},
                \istifinst{\rruler}{\pnsumseq}}}{\pnmarking_{\idxi+1}} \pntran
                \config{\pnsumseq_{\idxi+1}}{\pnmarking_{\idxi+1}} 
            \end{align*}
            where $\pnsumseq_{\idxi+1} = \intadd{\pnsumseq_\idxi}{\pnsumseq}$.
            Moreover, for the $\istigtrs$ defined from $\rruler$ and $\pnsumseq
            = \tup{\control'_1, \bit'_1} \ldots \tup{\control'_{\num'},
            \bit'_{\num'}}$, we have a run with the Parikh image
            $\tup{\istitrees_1, \ldots, \istitrees_{\num'}}$ from
            $\config{\control'_1}{\tree}$ where $\tree$ is accepted by the RHS
            of $\rruler$ and the run has the interface 
            \[
                \tup{\control'_1, \bit'_1, \istitrees_1} \ldots
                \tup{\control'_{\num'}, \bit'_{\num'}, \istitrees_{\num'}}
            \]
            such that 
            \[
                \config{\pnsumseq_{\idxi+1}}{\pnmarking_{\idxi+1}} =
                \intadd{\config{\pnsumseq_\idxi}{\submarking{\pnmarking_\idxi}{\rruler}}}{\tup{\control'_1,
                \bit'_1, \istitrees_1} \ldots \tup{\control'_{\num'},
                \bit'_{\num'}, \istitrees_{\num'}}} \ .
            \]
            From the run over $\istigtrs$, by creating context variable nodes
            when $\tup{\rruler', \idxj}$ characters are output, we obtain
            contexts $\context'_1, \ldots, \context'_{\num'}$ such that for all
            $\treeseq{1}', \ldots, \treeseq{\num'}'$ where for all $1 \leq \idxj
            \leq \num$ we have 
            \[ 
                \treeseq{\idxj}' = \treeseq{\rruler_1, \idxj}', \ldots,
                \treeseq{\rruler_\numrules, \idxj}'
            \]
            and for all $\rruler'$ 
            \[
                \treeseq{\rruler', \idxj}' = \pntreealt{\rruler'}{\idxj}{1},
                \ldots, \pntreealt{\rruler'}{\idxj}{\istigen^{\rruler'}_\idxj}
                \in \ap{\lang}{\ta'}^{\istigen^{\rruler'}_\idxj}
            \]
            where $\ta'$ is the tree automaton on the RHS of $\rruler'$, we have
            a run 
            \[
                \config{\control'_1}{\tree} \run
                \config{\control'_1}{\csub{\context'_1}{\treeseq{1}'}}
            \]
            and for all $1 < \idxj \leq \num$, if $\bit'_\idxj = 1$, we have a
            run 
            \[
                \config{\control'_{\idxj-1}}{\csub{\context'_{\idxj-1}}{\treeseq{1}',
                \ldots, \treeseq{\idxj-1}'}} \run
                \config{\control'_\idxj}{\csub{\context'_\idxj}{\treeseq{1}',
                \ldots, \treeseq{\idxj}'}}
            \]
            of $\gtrs$ and if $\bit'_\idxj = 0$, we have a run 
            \[
                \config{\control'_\idxj}{\csub{\context'_{\idxj-1}}{\treeseq{1}',
                \ldots, \treeseq{\idxj-1}'}} \run
                \config{\control'_\idxj}{\csub{\context'_\idxj}{\treeseq{1}',
                \ldots, \treeseq{\idxj}'}} \ .
            \]
            Using $\context_1, \ldots, \context_\num$ and $\context'_1, \ldots,
            \context'_{\num'}$ we can establish the induction hypothesis for
            $(\idxi+1)$.  We write 
            \[
                \csub{\context_\idxj}{\ldots, \pntree{\rruler}{1}{1}, \ldots}
            \]
            to single out $\pntree{\rruler}{1}{1}$ in
            $\csub{\context_\idxj}{\treeseq{1}, \ldots, \treeseq{\idxj}}$.  Note
            that since the induction hypothesis holds for all
            $\pntree{\rruler}{1}{1}$ we can select it to match $\tree$ accepted
            by the RHS of $\rruler$ used in the definition of $\context'_1,
            \ldots \context'_{\num'}$ above.  Although there are some details we
            take care of below, the essential idea is to obtain new contexts,
            satisfying the induction hypothesis, by inserting $\context'_\idxj$
            in place of $\pntree{\rruler}{1}{1}$ in $\context_\idxj$.

            There are two (similar) cases, depending on whether $\num \leq
            \num'$.  In both cases we have that the first $\num_{min} =
            \minimumof{\num, \num'}$ tuples of $\pnsumseq_{\idxi+1}$ are 
            \[
                \tup{\control_1, \bit''_1} \ldots \tup{\control_{\num_{min}},
                \bit''_{\num_{min}}},
            \]
            where $\bit''_\idxj = 1$ iff either $\bit_1 = 1$ or $\bit'_1 = 1$.
            Note, since $\pnsumseq_\idxi$ and $\pnsumseq$ are compatible, there
            is agreement on the control states.  Let $\cvar$ be the context
            variable corresponding to the position of $\pntree{\rruler}{1}{1}$
            in each $\csub{\context_\idxj}{\treeseq{1}, \ldots,
            \treeseq{\idxj}}$ (using the same $\cvar$ in each context).  We
            write $\csub{\context_\idxj}{\ldots, \cvar, \ldots}$ to isolate this
            variable, leaving all other variables untouched.  To satisfy the
            induction for all $1 \leq \idxj \leq \num_{min}$ we take the context
            $\csub{\context_\idxj}{\ldots, \context'_\idxj, \ldots}$ (with a
            suitable variable ordering to make the comparison with
            $\pnmarking_{\idxi+1}$).

            Then, we first build $\runsym_{\idxi+1}$ by concatenating to
            $\runsym_\idxi$ the run 
            \[
                \config{\control_1}{\csub{\context_1}{\ldots,
                \pntree{\rruler}{1}{1}, \ldots}} \run
                \config{\control_1}{\csub{\context_1}{\ldots,
                \csub{\context'_1}{\treeseq{1}'}, \ldots}}
            \]
            by simply appending the run to $\context'_1$ above to
            $\runsym_\idxi$.

            Next, for all $1 < \idxj \leq \num_{min}$, if $\bit''_\idxj = 1$, we
            can, if $\bit_\idxj = 1$ (implying $\bit'_\idxj = 0$), build
            \begin{align*}
                \config{\control_{\idxj-1}}{\csub{\context_{\idxj-1}}{\ldots,
                \csub{\context'_{\idxj-1}}{\treeseq{1}', \ldots,
                \treeseq{\idxj-1}'}, \ldots}}  & \run \\
                \config{\control_\idxj}{\csub{\context_\idxj}{\ldots,
                \csub{\context'_{\idxj-1}}{\treeseq{1}', \ldots,
                \treeseq{\idxj-1}'}, \ldots}} & \run \\ 
                \config{\control_\idxj}{\csub{\context_\idxj}{\ldots,
                \csub{\context'_\idxj}{\treeseq{1}', \ldots, \treeseq{\idxj}'},
                \ldots}} &
            \end{align*}
            and if $\bit_\idxj = 0$ (implying $\bit'_\idxj = 1$), build
            \begin{align*}
                \config{\control_{\idxj-1}}{\csub{\context_{\idxj-1}}{\ldots,
                \csub{\context'_{\idxj-1}}{\treeseq{1}', \ldots,
                \treeseq{\idxj-1}'}, \ldots}}  & \run \\
                \config{\control_\idxj}{\csub{\context_{\idxj-1}}{\ldots,
                \csub{\context'_\idxj}{\treeseq{1}', \ldots, \treeseq{\idxj}'},
                \ldots}} & \run \\ 
                \config{\control_\idxj}{\csub{\context_\idxj}{\ldots,
                \csub{\context'_\idxj}{\treeseq{1}', \ldots, \treeseq{\idxj}'},
                \ldots}} & \ .
            \end{align*}
            The remaining case is when $\bit''_\idxj = \bit'_\idxj = \bit_\idxj
            = 0$ and we build
            \begin{align*}
                \config{\control_\idxj}{\csub{\context_{\idxj-1}}{\ldots,
                \csub{\context'_{\idxj-1}}{\treeseq{1}', \ldots,
                \treeseq{\idxj-1}'}, \ldots}}  & \run \\
                \config{\control_\idxj}{\csub{\context_{\idxj-1}}{\ldots,
                \csub{\context'_\idxj}{\treeseq{1}', \ldots, \treeseq{\idxj}'},
                \ldots}} & \run \\ 
                \config{\control_\idxj}{\csub{\context_\idxj}{\ldots,
                \csub{\context'_\idxj}{\treeseq{1}', \ldots, \treeseq{\idxj}'},
                \ldots}} & \ .
            \end{align*}

            The remainder of the cases are below.
            \begin{enumerate}
                \item
                    When $\num \leq \num'$\ldots we have 
                    \[
                        \pnsumseq_{\idxi+1} = \tup{\control_1, \bit''_1} \ldots
                        \tup{\control_\num, \bit''_\num}
                        \tup{\control'_{\num+1}, \bit'_{\num+1}} \ldots
                        \tup{\control'_{\num'}, \bit'_{\num'}}
                    \]
                    and it remains to define contexts for $(\num+1) \leq \idxj
                    \leq \num'$, which we set for each $\idxj$ to be 
                    \[
                        \csub{\context_\num}{\ldots, \context'_{\idxj}, \ldots}
                    \]
                    with the runs, when $\bit'_\idxj = 1$, 
                    \begin{align*}
                        \config{\control'_{\idxj-1}}{\csub{\context_\num}{\ldots,
                        \csub{\context'_{\idxj-1}}{\treeseq{1}', \ldots,
                        \treeseq{\idxj-1}'}, \ldots}} & \run \\
                        \config{\control'_\idxj}{\csub{\context_\num}{\ldots,
                        \csub{\context'_\idxj}{\treeseq{1}', \ldots,
                        \treeseq{\idxj}'}, \ldots}}  &
                    \end{align*} 
                    and if $\bit_\idxj = 0$, we have a run 
                    \begin{align*}
                        \config{\control'_\idxj}{\csub{\context_\num}{\ldots,
                        \csub{\context'_{\idxj-1}}{\treeseq{1}', \ldots,
                        \treeseq{\idxj-1}'}, \ldots}} & \run \\
                        \config{\control'_\idxj}{\csub{\context_\num}{\ldots,
                        \csub{\context'_\idxj}{\treeseq{1}', \ldots,
                        \treeseq{\idxj}'}, \ldots}}  & \ .
                    \end{align*} 

                \item
                    When $\num > \num'$ we have 
                    \[
                        \pnsumseq_{\idxi+1} = \tup{\control_1, \bit''_1} \ldots
                        \tup{\control_{\num'}, \bit''_{\num'}}
                        \tup{\control_{\num'+1}, \bit_{\num'+1}} \ldots
                        \tup{\control_{\num}, \bit'_{\num}}
                    \]
                    and it remains to define contexts for $(\num'+1) \leq \idxj
                    \leq \num'$, which we set for each $\idxj$ to be 
                    \[
                        \csub{\context_\idxj}{\ldots, \context_{\num'}, \ldots}
                    \]
                    with the runs, when $\bit_\idxj = 1$, 
                    \begin{align*}
                        \config{\control_{\idxj-1}}{\csub{\context_{\idxj-1}}{\ldots,
                        \csub{\context'_{\num'}}{\treeseq{1}', \ldots,
                        \treeseq{\num'}'}, \ldots}} & \run \\
                        \config{\control'_\idxj}{\csub{\context_\idxj}{\ldots,
                        \csub{\context'_{\num'}}{\treeseq{1}', \ldots,
                        \treeseq{\num'}'}, \ldots}}  &
                    \end{align*} 
                    and if $\bit_\idxj = 0$, we have a run 
                    \begin{align*}
                        \config{\control'_\idxj}{\csub{\context_{\idxj-1}}{\ldots,
                        \csub{\context'_{\num'}}{\treeseq{1}', \ldots,
                        \treeseq{\num'}'}, \ldots}} & \run \\
                        \config{\control'_\idxj}{\csub{\context_\idxj}{\ldots,
                        \csub{\context'_{\num'}}{\treeseq{1}', \ldots,
                        \treeseq{\num'}'}, \ldots}}  & \ .
                    \end{align*} 
            \end{enumerate}
            To check that the above defined contexts have the right number of
            variables to be in accordance with $\pnmarking_{\idxi+1}$ one only
            need observe that we replaced the tree $\pntree{\rruler}{1}{1}$ with
            the new context (matching $\submarking{\pnmarking_\idxi}{\rruler}$),
            then inserted the new contexts ($\context'_\idxj$) corresponding to
            the addition of $\tup{\istitrees_1, \ldots, \istitrees_{\num'}}$ new
            trees, matching 
            \[
                \config{\pnsumseq_{\idxi+1}}{\pnmarking_{\idxi+1}} =
                \intadd{\config{\pnsumseq_\idxi}{\submarking{\pnmarking_\idxi}{\rruler}}}{\tup{\control'_1,
                \bit'_1, \istitrees_1} \ldots \tup{\control'_{\num'},
                \bit'_{\num'}, \istitrees_{\num'}}} \ .  
            \]

        \item
            If the run uses $\pnrulesres$ then we have a run 
            \[
                \config{\pnsumseq_\idxi}{\pnmarking_\idxi} \pntran
                \config{\pnshiftst{\pnsumseq_{\idxi+1}}{1}}{\pnmarking_\idxi}
                \pnrun
                \config{\pnshiftst{\pnsumseq_{\idxi+1}}{\lifespan}}{\pnmarking_{\idxi+1}}
                \pntran \config{\pnsumseq_{\idxi+1}}{\pnmarking_{\idxi+1}} 
            \]
            where $\pnsumseq_\idxi = \tup{\control_1, \bit_1} \tup{\control_2,
            \bit_2} \ldots \tup{\control_\num, \bit_\num}$ and
            $\pnsumseq_{\idxi+1} = \tup{\control_2, \bit_2} \ldots
            \tup{\control_\num, \bit_\num}$  and $\bit_2 = 1$.  Furthermore, for
            all $\rruler \in \rules$ it is the case that for all $1 \leq \idxj <
            \num$ we have $\ap{\pnmarking_{\idxi+1}}{\istictr{\rruler}{\idxj}}
            \leq \ap{\pnmarking_\idxi}{\istictr{\rruler}{\idxj+1}}$ and for all
            $\num \leq \idxj \leq \lifespan$ we have
            $\ap{\pnmarking_{\idxi+1}}{\istictr{\rruler}{\idxj}} = 0$.

            We obtain $\context'_2, \ldots, \context'_\num$ to satisfy the
            induction from the $\context_1, \ldots, \context_\num$ we have by
            the induction hypothesis.  For each rule $\rruler$, we can fix a
            tree $\tree_\rruler$ that is accepted by the RHS of $\rruler$ (we
            made this benign assumption at the beginning of the section).  Thus,
            intuitively, since $\context'_\idxj$ has fewer holes than
            $\context_\idxj$ we can simply plug each $\context_\idxj$ with the
            $\tree_\rruler$ to obtain $\context'_\idxj$.
            
            Thus we define for all $2 \leq \idxj \leq \num$ the context
            $\context'_\idxj$ such that for all $\treeseq{2}', \ldots,
            \treeseq{\num}'$ with for all $2 \leq \idxj' \leq \num$
            \[ 
                \treeseq{\idxj'}' = \treeseq{\rruler_1, \idxj'}', \ldots,
                \treeseq{\rruler_\numrules, \idxj'}'
            \]
            and for all $\rruler$ 
            \[
                \treeseq{\rruler, \idxj'}' = \pntreealt{\rruler}{\idxj'}{1},
                \ldots,
                \pntreealt{\rruler}{\idxj'}{\ap{\pnmarking_{\idxi+1}}{\istictr{\rruler}{\idxj'-1}}}
                \in
                \ap{\lang}{\ta}^{\ap{\pnmarking_{\idxi+1}}{\istictr{\rruler}{\idxj'-1}}}
            \]
            where $\ta$ is the tree automaton on the RHS of $\rruler$ we have
            \[
                \csub{\context'_\idxj}{\treeseq{2}', \ldots, \treeseq{\num}'} =
                \csub{\context_\idxj}{\treeseq{1}, \ldots, \treeseq{\num}}
            \] 
            where $\treeseq{1}, \ldots, \treeseq{\num}$ is given by
            \[ 
                \treeseq{1} = \treeseq{\rruler_1, 1}, \ldots,
                \treeseq{\rruler_\numrules, 1}
            \]
            and for all $\rruler$ 
            \[
                \treeseq{\rruler, 1} = \tree_\rruler, \ldots, \tree_\rruler \in
                \ap{\lang}{\ta}^{\ap{\pnmarking_\idxi}{\istictr{\rruler}{1}}}
            \]
            and for all $2 \leq \idxj' \leq \num$
            \[ 
                \treeseq{\idxj'} = \treeseq{\rruler_1, \idxj'}, \ldots,
                \treeseq{\rruler_\numrules, \idxj'}
            \]
            and for all $\rruler$ 
            \[
                \treeseq{\rruler, \idxj'} = \pntreealt{\rruler}{\idxj'}{1},
                \ldots,
                \pntreealt{\rruler}{\idxj'}{\ap{\pnmarking_\idxi}{\istictr{\rruler}{\idxj'}}},
                \tree_\rruler, \ldots, \tree_\rruler \in
                \ap{\lang}{\ta}^{\ap{\pnmarking_\idxi}{\istictr{\rruler}{\idxj'}}}
            \]
            where $\ta$ is the tree automaton on the RHS of $\rruler$.

            To check that $\context'_2, \ldots, \context'_\num$ satisfy the
            induction hypothesis we first construct $\runsym_{\idxi+1}$
            combining $\runsym_\idxi$ with (since $\bit_2 = 1$) 
            \[
                \config{\control_1}{\csub{\context_1}{\treeseq{1}}} \run
                \config{\control_2}{\csub{\context'_2}{\treeseq{2}'}}
            \]
            observing that the trees in $\treeseq{1}$ are out of the scope of
            the quantification and thus fixed.  The existence of partial runs
            for all $1 < \idxj \leq \num$ follows directly from the definition
            of $\context'_2, \ldots, \context'_\num$ and the induction
            hypothesis (substituting $\tree_\rruler$ where appropriate as
            above).
    \end{enumerate}
    Thus we are able to maintain the induction hypothesis.  When we consider
    $\config{\pnsumseq_\runlen}{\pnmarking_\runlen}$ we thus get from
    $\runsym_\runlen$ and $\pnsumseq_\runlen = \tup{\controldest, 1}$ a run
    \[
        \config{\controlinit}{\treeinit} \run
        \config{\controldest}{\csub{\context_1}{\treeseq{1}}}
    \]
    for some $\context_1$ and any appropriate sequence $\treeseq{1}$ (of which
    one necessarily exists by assumption).  This witnesses the reachability
    property as required.  
\end{proof}

\section{Conclusion}

We have introduced a sub-class of ground tree rewrite systems with state that
has a decidable reachability problem.  Our sub-class, \emph{senescent ground
tree rewrite systems}, takes scope-bounded pushdown systems as inspiration.  In
this setting, a node of the tree ``ages'' whenever the control state changes.  A
node that reaches a fixed age without being rewritten becomes fossilised and
thus may no longer be changed.  This model generalises weakly extended ground
tree rewrite systems by allowing an arbitrary number of control state changes.

We showed that the control state reachability problem is inter-reducible to
coverability of reset Petri-nets, and is thus $\acker$-complete.  This is a
surprising increase in complexity compared to scope-bounded multi-pushdown
systems for which the analogous problem is PSPACE-complete.  

Thus, we obtain a natural model that captures a rich class of behaviours while
maintaining decidability of reachability properties.  Moreover, since extending
the control state reachability problem to the regular reachability problem
results in undecidability, we know we are close to the limits of decidability.

For future work, we would like to investigate the encoding of additional classes
of multi-stack pushdown systems (e.g. ordered, phased-bounded, and relaxed
notions of scope-bounding, as well as with added features such as dynamic thread
creation) into senescent GTRS.  This may lead to further generalisations of our
model.

It has been shown by tools such as FAST~\cite{BFLP08} and TREX~\cite{ABS01} that
high (even undecidable) complexities do not preclude the construction of
successful model checkers.  Hence, we would like to study practical verification
algorithms for our model and their implementation, which may use the
aforementioned tools as components.

\paragraph{Acknowledgements}

We are grateful for helpful and informative discussions with Anthony Lin,
Sylvain Schmitz, Christoph Haase, and Arnaud Carayol.  This work was supported
by the Engineering and Physical Sciences Research Council [EP/K009907/1].

\bibliographystyle{plain} 
\bibliography{references}

\begin{thebibliography}{10}

\bibitem{AJMdO02}
P.~A. Abdulla, B.~Jonsson, P.~Mahata, and J.~d'Orso.
\newblock Regular tree model checking.
\newblock In {\em CAV}, pages 555--568, 2002.

\bibitem{ABS01}
A.~Annichini, A.~Bouajjani, and M.~Sighireanu.
\newblock Trex: A tool for reachability analysis of complex systems.
\newblock In {\em CAV}, pages 368--372, 2001.

\bibitem{AK77}
T.~Araki and T.~Kasami.
\newblock {Some Decision Problems Related to the Reachability Problem for Petri
  Nets}.
\newblock {\em Theoretical Computer Science}, 3(1):85--104, 1977.

\bibitem{ABH08}
M.~F. Atig, B.~Bollig, and P.~Habermehl.
\newblock Emptiness of multi-pushdown automata is 2etime-complete.
\newblock In {\em Developments in Language Theory}, pages 121--133, 2008.

\bibitem{ABQ11}
M.~F. Atig, A.~Bouajjani, and S.~Qadeer.
\newblock Context-bounded analysis for concurrent programs with dynamic
  creation of threads.
\newblock {\em Logical Methods in Computer Science}, 7(4), 2011.

\bibitem{BLR11}
T.~Ball, V.~Levin, and S.~K. Rajamani.
\newblock A decade of software model checking with slam.
\newblock {\em Commun. ACM}, 54(7):68--76, 2011.

\bibitem{BR00}
T.~Ball and S.~K. Rajamani.
\newblock Bebop: A symbolic model checker for boolean programs.
\newblock In {\em SPIN}, pages 113--130, 2000.

\bibitem{BFLP08}
S.~Bardin, A.~Finkel, J.~Leroux, and L.~Petrucci.
\newblock Fast: acceleration from theory to practice.
\newblock {\em STTT}, 10(5):401--424, 2008.

\bibitem{BEM97}
A.~Bouajjani, J.~Esparza, and O.~Maler.
\newblock Reachability analysis of pushdown automata: Application to
  model-checking.
\newblock In {\em CONCUR}, pages 135--150, 1997.

\bibitem{BKRS09}
L.~Bozzelli, M.~Kret\'{\i}nsk{\'y}, V.~Reh{\'a}k, and J.~Strejcek.
\newblock On decidability of ltl model checking for process rewrite systems.
\newblock {\em Acta Inf.}, 46(1):1--28, 2009.

\bibitem{B69}
W.~S. Brainerd.
\newblock Tree generating regular systems.
\newblock {\em Information and Control}, 14(2):217--231, 1969.

\bibitem{BCCC96}
L.~Breveglieri, A.~Cherubini, C.~Citrini, and S.~Crespi-Reghizzi.
\newblock Multi-push-down languages and grammars.
\newblock {\em Int. J. Found. Comput. Sci.}, 7(3):253--292, 1996.

\bibitem{CGK12}
A.~Cyriac, P.~Gastin, and K.~N. Kumar.
\newblock {MSO} decidability of multi-pushdown systems via split-width.
\newblock In {\em CONCUR}, pages 547--561, 2012.

\bibitem{DHLT90}
M.~Dauchet, T.~Heuillard, P.~Lescanne, and S.~Tison.
\newblock Decidability of the confluence of finite ground term rewrite systems
  and of other related term rewrite systems.
\newblock {\em Inf. Comput.}, 88(2):187--201, 1990.

\bibitem{DT90}
M.~Dauchet and S.~Tison.
\newblock The theory of ground rewrite systems is decidable.
\newblock In {\em LICS}, pages 242--248, 1990.

\bibitem{EKS03}
J.~Esparza, A.~Kucera, and S.~Schwoon.
\newblock Model checking ltl with regular valuations for pushdown systems.
\newblock {\em Inf. Comput.}, 186(2):355--376, 2003.

\bibitem{E95}
C.~P. Estes.
\newblock {\em The Faithful Gardener: A Wise Tale About That Which Can Never
  Die}.
\newblock Tree clause book. HarperCollins, 1995.

\bibitem{FWW97}
A.~Finkel, B.~Willems, and P.~Wolper.
\newblock A direct symbolic approach to model checking pushdown systems.
\newblock In {\em INFINITY}, volume~9, pages 27--37, 1997.

\bibitem{GS66}
S.~Ginsburg and E.~H. Spanier.
\newblock Semigroups, presburger formulas, and languages.
\newblock {\em Pacific Journal of Mathematics}, 16:285 -- 296, 1966.

\bibitem{GL11}
S.~G{\"o}ller and A.~W. Lin.
\newblock Refining the process rewrite systems hierarchy via ground tree
  rewrite systems.
\newblock In {\em CONCUR}, pages 543--558, 2011.

\bibitem{H13}
C.~Haase.
\newblock Subclasses or presburger arithmetic and the weak exponential-time
  hierarchy.
\newblock Under submission.

\bibitem{JM77}
N.~D. Jones and S.~S. Muchnick.
\newblock Even simple programs are hard to analyze.
\newblock {\em J. ACM}, 24(2):338--350, 1977.

\bibitem{KM67}
R.~M. Karp and R.~E. Miller.
\newblock Parallel program schemata: A mathematical model for parallel
  computation.
\newblock In {\em SWAT (FOCS)}, pages 55--61. IEEE Computer Society, 1967.

\bibitem{lTN11}
S.~{La Torre} and M.~Napoli.
\newblock Reachability of multistack pushdown systems with scope-bounded
  matching relations.
\newblock In {\em CONCUR}, pages 203--218, 2011.

\bibitem{lTN12}
S.~{La Torre} and M.~Napoli.
\newblock A temporal logic for multi-threaded programs.
\newblock In {\em IFIP TCS}, pages 225--239, 2012.

\bibitem{lTP12}
S.~{La Torre} and G.~Parlato.
\newblock Scope-bounded multistack pushdown systems: Fixed-point,
  sequentialization, and tree-width.
\newblock In {\em FSTTCS}, pages 173--184, 2012.

\bibitem{L12}
A.~W. Lin.
\newblock Weakly-synchronized ground tree rewriting - (with applications to
  verifying multithreaded programs).
\newblock In {\em MFCS}, pages 630--642, 2012.

\bibitem{L03}
C.~L\"oding.
\newblock {\em Infinite Graphs Generated by Tree Rewriting}.
\newblock PhD thesis, RWTH Aachen, 2003.

\bibitem{MP11}
P.~Madhusudan and G.~Parlato.
\newblock The tree width of auxiliary storage.
\newblock In {\em POPL}, pages 283--294, 2011.

\bibitem{M00}
M.~Maidl.
\newblock The common fragment of ctl and ltl.
\newblock In {\em FOCS}, pages 643--652, 2000.

\bibitem{M98}
R.~Mayr.
\newblock {\em Decidability and Complexity of Model Checking Problems for
  Infinite-State Systems}.
\newblock PhD thesis, TU-M\"unchen, 1998.

\bibitem{M84}
K.~McAloon.
\newblock Petri nets and large finite sets.
\newblock {\em Theoretical Computer Science}, 32:173--183, 1984.

\bibitem{P11}
R.~Piskac.
\newblock {\em Decision Procedures for Program Synthesis and Verification}.
\newblock PhD thesis, Laboratoire d'Analyse et de Raisonnement Authomatis\'es,
  \'Ecole Polytechnique F\'ed\'erale de Lausanne, 2011.

\bibitem{P91}
L.~Pottier.
\newblock Minimal solutions of linear diophantine systems: Bounds and
  algorithms.
\newblock In {\em RTA}, pages 162--173, 1991.

\bibitem{Q08}
S.~Qadeer.
\newblock The case for context-bounded verification of concurrent programs.
\newblock In {\em Proceedings of the 15th international workshop on Model
  Checking Software}, SPIN '08, pages 3--6, Berlin, Heidelberg, 2008.
  Springer-Verlag.

\bibitem{B64}
{R. B\"uchi}.
\newblock Regular canonical systems.
\newblock {\em Archiv fur Math. Logik und Grundlagenforschung 6}, pages
  91--111, 1964.

\bibitem{RSJM05}
T.~W. Reps, S.~Schwoon, S.~Jha, and D.~Melski.
\newblock Weighted pushdown systems and their application to interprocedural
  dataflow analysis.
\newblock {\em Sci. Comput. Program.}, 58(1-2):206--263, 2005.

\bibitem{QR05}
{S. Qadeer} and {J. Rehof}.
\newblock Context-bounded model checking of concurrent software.
\newblock In {\em TACAS}, pages 93--107, 2005.

\bibitem{SS12}
S.~Schmitz and P.~Schnoebelen.
\newblock {Algorithmic Aspects of WQO Theory}.
\newblock http://cel.archives-ouvertes.fr/cel-00727025, August 2012.

\bibitem{S02}
P.~Schnoebelen.
\newblock Verifying lossy channel systems has nonprimitive recursive
  complexity.
\newblock {\em Inf. Process. Lett.}, 83(5):251--261, 2002.

\bibitem{S10}
P.~Schnoebelen.
\newblock Revisiting ackermann-hardness for lossy counter machines and reset
  petri nets.
\newblock In {\em MFCS}, pages 616--628, 2010.

\bibitem{S02b}
S.~Schwoon.
\newblock {\em Model-checking Pushdown Systems}.
\newblock PhD thesis, Technical University of Munich, 2002.

\bibitem{T10}
A.~W. To.
\newblock {\em Model Checking Infinite-State Systems: Generic and Specific
  Approaches}.
\newblock PhD thesis, LFCS, School of Informatics, University of Edinburgh,
  2010.

\bibitem{lTMP07}
S.~La Torre, P.~Madhusudan, and G.~Parlato.
\newblock A robust class of context-sensitive languages.
\newblock In {\em LICS}, pages 161--170, 2007.

\end{thebibliography}

\end{document}